 \renewcommand\footnotetextcopyrightpermission[1]{} 
 \tikzstyle{rstate}=[state,ellipse]
 \tikzset{>={latex}}
 \newcommand{\dom}{\mathsf{dom}}
 \newcommand{\restrict}{ \scalebox{1}[.85]{\raisebox{.9em}
 		{\mbox{\rotatebox{270}{$\leftharpoonup$}}} }}
 \newcommand{\emptyword}{\epsilon}
 \newcommand{\ror}{\vee}
 \def\wild{^*}
 \newcommand{\bind}[2]{\mathop{#1\{#2\}}}
 \newcommand{\e}[1]{\emph{#1}}
 \newcommand{\eat}[1]{}
 \newcommand{\doc}{\mathbf{d}}
 \newcommand{\alphabet}{\Sigma}
 \newcommand{\alphabetstar}{\alphabet^{*}}
 \newcommand{\mspan}[2]{\ensuremath{[#1,#2\rangle}}
 \newcommand{\allspans}{\mathsf{spans}}
 \newcommand{\join}{\bowtie}
 \newcommand{\true}{\mathsf{t}}
 \newcommand{\false}{\mathsf{f}}
 \newcommand{\rc}[1]{\mathsf{#1}}
 \newcommand{\rgxsubc}{\rc{R}}
 \newcommand{\rgxc}{\rc{RGX}}
 \newcommand{\funcrgxc}{\rc{funcRGX}}
 \newcommand{\seqrgxc}{\rc{seqRGX}}
 \newcommand{\disfuncrgxc}{\rc{dfuncRGX}}
 \newcommand{\VA}{\textrm{VA}}
 \newcommand{\repspnrpar}[1]{\mathord{\llceil{#1}\rrceil}}
 \newcommand{\repspnr}{\repspnrpar} 
 \newcommand{\reprelational}[1]{\mathord{\llbracket{#1}\rrbracket}}
 \newcommand{\prepspnr}[1]{[{#1}]}
 \newcommand{\spn}[1]{[#1\rangle}
 \newcommand{\vars}{\mathsf{Vars}}
 \newcommand{\ops}{\mathcal{O}}
 \newcommand{\diff}{\setminus}
 \newcommand{\df}{:=}
 \def\s#1{\texttt{#1}}
 \def\ititle#1{\textbf{#1}\,\,}
 \def\set#1{\mathord{\{#1\}}}
 \newcommand{\vop}[1]{\mathop{#1{\vdash}}}
 \newcommand{\vcl}[1]{\mathbin{{\dashv}#1}}
 \newcommand{\xalphabet}{\boldsymbol{\Gamma}}
 \newtheorem{theorem}{Theorem}[section]
 \newtheorem{corollary}[theorem]{Corollary}
 \newtheorem{proposition}[theorem]{Proposition}
 \newtheorem{lemma}[theorem]{Lemma}
 \newenvironment{citedtheorem}[1]
 {\begin{theorem}\hskip-0.2em\e{\cite{#1}}\,\,}
 	{\end{theorem}}
 \newtheorem{examplethm}[theorem]{Example}
 \newenvironment{example}{\begin{examplethm}\em}
 	{\qed\end{examplethm}}
 \newenvironment{citemize}
 {\begin{compactitem}}
 	{\end{compactitem}}
 \newenvironment{cenumerate}
 {\begin{compactenum}}
 	{\end{compactenum}}
 \def\Wone{\mathrm{W[1]}}
 \def\Ptime{\mathrm{P}}
 \def\NP{\mathrm{NP}}
 \def\FPT{\mathrm{FPT}}
 \newcommand{\varconf}{{c}}
 \newcommand{\varconfseq}{\tilde{c}}
 \newcommand{\vc}[1]{\mathsf{#1}}
 \newcommand{\el}{\hookleftarrow}
 \newcommand{\regex}[1]{\alpha_{\mathsf{#1}}}
 \newcommand{\exvar}[1]{x_{\textit{#1}}}
 \newcommand{\spnrex}[1]{P_{\textit{#1}}}
 \newcommand{\visiblespace}{\text{\textvisiblespace}}
 \newenvironment{repeatresult}[2]
 {\vskip0.5em\par\textsc{#1 #2.}\em}
 {\vskip1em}
 \newenvironment{repproposition}[1]{\begin{repeatresult}{Proposition}{#1}}{\end{repeatresult}}
 \newenvironment{reptheorem}[1]{\begin{repeatresult}{Theorem}{#1}}{\end{repeatresult}}
 \newenvironment{replemma}[1]{\begin{repeatresult}{Lemma}{#1}}{\end{repeatresult}}
\title{Complexity Bounds for Relational Algebra \\
  over Document Spanners}
\author{Liat Peterfreund}
\affiliation{\institution{Technion}
\city{Haifa}
\country{Israel}}
\author{Dominik D. Freydenberger}
\affiliation{\institution{Loughborough University}
\city{Loughborough} 
\country{United Kingdom}}
\author{Benny Kimelfeld}
\affiliation{\institution{Technion}
\city{Haifa}
\country{Israel}}
\author{Markus Kr{\"{o}}ll}
\affiliation{\institution{TU Wien}
\city{Vienna}
\country{Austria}}
\begin{document}

\begin{abstract}
  We investigate the complexity of evaluating queries in Relational
  Algebra (RA) over the relations extracted by regex formulas (i.e.,
  regular expressions with capture variables) over text documents.
  Such queries, also known as the regular document spanners, were
  shown to have an evaluation with polynomial delay for every positive
  RA expression (i.e., consisting of only natural joins, projections
  and unions); here, the RA expression is fixed and the input consists
  of both the regex formulas and the document. In this work, we
  explore the implication of two fundamental generalizations. The
  first is adopting the ``schemaless'' semantics for spanners, as
  proposed and studied by Maturana et al. The second is going beyond
  the positive RA to allowing the difference operator.

  We show that each of the two generalizations introduces
  computational hardness: it is intractable to compute the natural
  join of two regex formulas under the schemaless semantics, and the
  difference between two regex formulas under both the ordinary and
  schemaless semantics. Nevertheless, we propose and analyze syntactic
  constraints, on the RA expression and the regex formulas at hand,
  such that the expressive power is fully preserved and, yet,
  evaluation can be done with polynomial delay. Unlike the previous
  work on RA over regex formulas, our technique is not (and provably
  cannot be) based on the static compilation of regex formulas, but
  rather on an ad-hoc compilation into an automaton that incorporates
  both the query and the document. This approach also allows us to
  include black-box extractors in the RA expression.
\end{abstract}

\maketitle

\section{Introduction}
The abundance and availability of valuable textual resources position
text analytics as a standard component in data-driven workflows. To
facilitate the integration with textual content, a core operation is
Information Extraction (IE)---the extraction of structured data from
text.  IE arises in a large variety of domains, including biology and
biomedical analysis, social media analysis, cyber
security,\footnote{See, e.g., the TA-COS workshop at
	\url{http://www.ta-cos.org/}.} system and network log analysis, and
business intelligence, to name a
few~\cite{DBLP:journals/ftdb/Sarawagi08,chakraborty2014text}.
\e{Rules} for IE are used in commercial systems and academic
prototypes for text analytics, either as a standalone extraction
language or within machine-learning models.  A common paradigm for
rule programming is the one supported by IBM's
SystemT~\cite{DBLP:conf/acl/LiRC11,DBLP:conf/naacl/ChiticariuDLRZ18},
which exposes a collection of \e{atomic} (sometimes called
``primitive'') extractors of relations from text (e.g., tokenizer,
dictionary lookup, part-of-speech tagger and regular-expression
matcher), together with a relational algebra for manipulating these
relations. In Xlog~\cite{DBLP:conf/vldb/ShenDNR07}, user-defined
functions provide the atomic extractors, and Datalog is used for
relational manipulation. In
DeepDive~\cite{DBLP:journals/sigmod/SaRR0WWZ16}, rules are used for
generating features that are translated into the factors of a
statistical model with machine-learned parameters.  Feature
declaration combines atomic extractors alongside relational operators
thereof.

\subsubsection*{Document spanners}
In this work, we explore complexity aspects of IE within the framework
of \e{document spanners} (or just \e{spanners} for
short)~\cite{DBLP:journals/jacm/FaginKRV15}.  In this framework, a
\e{document} is a string over a fixed finite alphabet, and a
\e{spanner} extracts from every input document a relation of intervals
within the document. An interval, called \e{span}, is represented by
its starting and ending indices in the document. An example of a
spanner is a \e{regex formula}, which is a regular expression with
capture variables that correspond to the relational attributes.  The
most studied language for specifying spanners is that of the
\e{regular} spanners: the closure of regex formulas under the classic
relational algebra: projection, natural join, union, and
difference~\cite{DBLP:journals/jacm/FaginKRV15}.  Equally expressive
formalisms include non-recursive Datalog over regex
formulas~\cite{DBLP:journals/tods/FaginKRV16} and the \e{variable-set
	automaton} (\e{vset-automaton} for short), which is a
nondeterministic finite-state automaton (NFA) that can open and close
variables while running.

Since the framing of the spanner framework, there has been a
considerable effort to delineate the computational complexity of
spanner evaluation, with a special focus on the regular
representations (regex formulas and vset-automata) of the atomic
extractors. Florenzano et al.~\cite{DBLP:conf/pods/FlorenzanoRUVV18}
studied the data complexity (where the spanner is fixed and the input
consists of only the document), and so did Fagin et
al.~\cite{DBLP:journals/corr/abs-1712-08198} who showed that the
closure of regex formulas under Datalog characterizes the class of
polynomial-time spanners.  Freydenberger et
al.~\cite{DBLP:journals/mst/FreydenbergerH18,DBLP:conf/pods/FreydenbergerKP18}
studied the \e{combined complexity} (where the input consists of both
the query and the document) for conjunctive queries, and unions of
conjunctive queries, over spanners. More recently, Amarilli et
al.~\cite{DBLP:journals/corr/AmarilliBMN18} presented an evaluation
algorithm with tractability properties under both data and combined
complexity; we further discuss this algorithm later on.

For complexity analysis, there are important advantages to yardsticks
that take the atomic extractors (e.g., regex formulas or
vset-automata) as input, rather than regarding them small or
fixed. First, the size of these extractors can be quite large in
practice. Taking examples from
\href{http://regexlib.com/}{RegExLib.com}, each of the regexes for
recognizing the RFC 2822 mailbox format (regexp id 711) and date
format (regexp id 969) uses more than 350 ASCII symbols, and a regex
for identifying US addresses (regexp id 1564) uses more than 2,000
ASCII symbols. Furthermore, automata may be constructed by automatic
(machine-learning) processes that achieve accuracy through the
granularity of the automaton. The paradigm of Artificial Neural
Networks (ANNs) in natural-language processing has motivated the
conversion of ANN models such as \e{recurrent neural networks} and
\e{convolutional neural networks} into
automata~\cite{DBLP:journals/corr/abs-1808-09357,DBLP:journals/nn/OmlinG96,DBLP:conf/icml/WeissGY18},
where the number of states may reach tens of thousands to match the
expressiveness of the numeric
parameters~\cite{DBLP:conf/icml/WeissGY18}.  Another advantage of
regarding the atomic extractors as input is more technical:
polynomial-time combined complexity allows to incorporate
vset-automata of which size may depend on the input document. This
approach allows to establish tractability even if we join with
schemaless spanners that cannot be represented as RA expressions over
regular spanners, such as string
equality~\cite{DBLP:conf/pods/FreydenbergerKP18}.

\subsubsection*
{Schema-based functionality vs schemaless sequentiality}
As defined by Fagin et al.~\cite{DBLP:journals/jacm/FaginKRV15}, the
spanners are \e{schema-based} in the sense that every spanner is
associated with a fixed and finite set $X$ of variables, playing the
roles of \e{attributes} in relational databases, so that every tuple
they extract from a document assigns a value to each variable of
$X$. The regex formulas conform to this property in the sense that
every parse tree contains exactly one occurrence of each variable;
such regex formulas are said to be \e{functional}.  Freydenberger~\cite{DBLP:conf/icdt/Freydenberger17} applied the property
of functionality to vset-automata: a vset-automaton is functional if
every accepting path properly opens and closes every variable exactly
once.

The functionality property can be tested in polynomial time for both
regex formulas~\cite{DBLP:journals/jacm/FaginKRV15} and
vset-automata~\cite{DBLP:journals/mst/FreydenbergerH18}. Moreover,
functional vset-automata generalize functional regex formulas in the
sense that every instance of the former can be transformed in
linear time into an instance of the latter (but not necessarily
the other way around).  Beyond that, functional vset-automata (and
regex formulas) possess various desired tractability
features~\cite{DBLP:conf/pods/FreydenbergerKP18}. First, they can be
evaluated with polynomial delay under combined complexity.  Second,
the \e{natural join} of two functional vset-automata can be compiled
in polynomial time into one functional vset-automaton, and so can the
\e{union} of two vset-automata and the \e{projection} of a
vset-automaton to a subset of its variables. Consequently, every
combination of functional vset-automata can be evaluated with
polynomial delay, as long as this combination is via the \e{positive}
operators of the relational algebra.

More recently, Maturana et al.~\cite{DBLP:conf/pods/MaturanaRV18}
introduced a \e{schemaless} version of spanners that allows for
incomplete extraction from documents, in the spirit of the SPARQL
model~\cite{DBLP:journals/tods/PerezAG09}. There, two extracted tuples
may assign spans to different sets of variables. The analog of
functionality is \e{sequentiality}: a regex formula is sequential is
every parse tree includes \e{at most} one occurrence of every
variable, and a vset-automaton is sequential if every accepting path
properly opens and closes every variable \e{at most} once. Again, in
polynomial time we can test for sequentiality and transform a
sequential regex formula into a sequential vset-automaton; moreover,
sequential vset-automata can be evaluated with polynomial delay under
combined complexity~\cite{DBLP:conf/pods/MaturanaRV18}. In fact, the
aforementioned algorithm of Amarilli et
al.~\cite{DBLP:journals/corr/AmarilliBMN18} enumerates with polynomial
delay under combined complexity, and, under data complexity, with
constant delay following a linear pre-processing of the
document.\footnote{This is the spanner analog of a recent line of work
	on the enumeration complexity of database and string
	queries~\cite{DBLP:conf/csl/BaganDG07,DBLP:conf/icdt/CarmeliK18,DBLP:conf/pods/NiewerthS18,DBLP:conf/icdt/Segoufin13}.}
Since functional vset-automata are also sequential, this algorithm
also applies to the schema-based spanners, and improves upon (and, in
fact, generalizes the applicability of) the constant-delay algorithm
of Florenzano et al.~\cite{DBLP:conf/pods/FlorenzanoRUVV18}.

\subsubsection*
{Contribution}
The state of affairs leaves open two fundamental questions regarding
the combined complexity of query evaluation.
\begin{itemize}
	\item Does the tractability for the positive relational algebra
	generalize from the schema-based case to the schemaless case?
	\item Does the tractability extend beyond the positive operators (in
	either the schema-based or schemaless case)? In particular, can we
	enumerate with polynomial delay the \e{difference} between two
	functional vset-automata?
\end{itemize}
We prove that the answers to both questions are negative. More
specifically, it is NP-complete to determine whether the natural join
of two \e{sequential} regex formulas is nonempty
(Theorem~\ref{thm:joinseqhard}), and it is NP-complete to determine
whether the difference between two given \e{functional} regex formulas
is nonempty (Theorem~\ref{thm:diffrgxNPcomp}).

We formulate various syntactic restrictions that allow to avoid
hardness. In particular, we show that polynomial delay is retained if
we bound the number of common variables between the two operands of
the natural join and difference. For the natural join, we also present
a normal form for schemaless regex formulas and vset-automata, namely
\e{disjunctive functional}, that are more restricted than, yet as
expressive as, their sequential counterparts; yet, the natural join of
two disjunctive-functional vset-automata can be compiled into a
disjunctive-functional vset-automaton in polynomial time (hence,
evaluated with polynomial delay).  

In contrast to the natural join, the tractability of the difference
between vset-automata with a bounded number of common variables
\e{cannot} be established via compilation into a single
vset-automaton. This is due to the simple reason that, in the case of
Boolean spanners, the problem is the same as the difference between
two NFAs, where the compilation necessitates an exponential
blowup~\cite{DBLP:journals/tcs/Jiraskova05}. Nevertheless, we establish the
tractability by transforming the difference into a natural join with a
special vset-automaton that is built ad-hoc for the input document.

In summary, our complexity upper bounds are established in two main
approaches. The first is based on a document-independent compilation
of the input vset-automata (or regex formulas) into a new
vset-automaton. The second is based on a compilation of both the input
vset-automata \e{and} the input document into a new, ad-hoc
vset-automaton. We refer to first approach as \e{static
	compilation} and to the second as \e{ad-hoc compilation}.

We compose our tractability results into more general queries by
proposing a new complexity measure that is specialized to
spanners. Recall that the evaluation problem has three components: the
document, the atomic spanners (e.g., regex formulas), and the
relational algebra that combines the atomic spanners, which we refer
to as the \e{RA tree}. Under \e{combined complexity}, all three are
given as input; under \e{data complexity}, the document is given as
input and the rest are fixed; there is also the \e{expression
	complexity}~\cite{DBLP:conf/stoc/Vardi82} where the document is
fixed and the rest are given as input. We propose the \e{extraction
	complexity}, where the RA tree is fixed, and the input consists of
the document and the atomic spanners (mapped to their corresponding
positions in the RA tree). We present and discuss conditions that cast
the extraction complexity tractable (polynomial-delay evaluation) and
intractable (NP-hard nonemptiness). Interestingly, since the
tractability of an RA tree is based on ad-hoc compilation, we can
incorporate there \e{any} polynomial-time spanner, as long as its
dimension is bounded by a constant.

\vspace*{1.6mm}
\subsubsection*{Organization}
The rest of the paper is organized as follows.  In
Section~\ref{sec:prelims}, we present the basic terminology and
concepts.  We investigate the complexity of the natural-join operator
in Section~\ref{sec:join} and the difference operator in
Section~\ref{sec:diff}. We extend our development to the extraction
complexity  in Section~\ref{sec:exp}, and conclude in
Section~\ref{sec:conclusions}.
To meet space constraints, some of the proof are given in the Appendix. 

\section{Preliminaries}\label{sec:prelims}

We first introduce the main definitions and terminology, mainly from
the literature on document
spanners~\cite{DBLP:journals/jacm/FaginKRV15,DBLP:conf/pods/MaturanaRV18}.
\subsection{Document Spanners}

\paragraph*{Documents and spans}
We fix a finite alphabet $\alphabet$ of symbols.  By a \e{document} or
\e{string} we refer to a finite sequence $\doc=\sigma_1 \cdots
\sigma_n$ over $\Sigma$ (that is, each $\sigma_i$ is in $\Sigma$),
that is, a member of $\alphabetstar$.
The length $n$ of the document $\doc=\sigma_1 \cdots
\sigma_n$ is denoted by $|\doc|$. 
A \e{span} is a pair $\mspan i
j$ of indices $ 1 \le i \le j \le n+1$ that marks a substring of
$\doc$. The term $\doc_{ \mspan i j}$ denotes the substring $\sigma_i
\cdots \sigma_{j-1}$.  Note that $\doc _{\mspan i i}$ is the empty
string, and that $\doc_{ \mspan 1 {n+1}}$ is $\doc$.  Note also that
the spans $\mspan i i $ and $\mspan j j$, where $i\ne j$, are
different objects, even though the substrings $\doc_{ \mspan i i}$ and
$\doc_{\mspan j j}$ are equal.  We denote by $\allspans$ the set of
all spans of all strings, that is, all expressions $\mspan i j$ where
$ 1 \le i \le j$.  By $\allspans(\doc)$ we denote the set of all spans
of $\doc$.

\paragraph*{Schemaless spanners}
We assume a countably infinite set $\vars$ of \e{variables}, and
assume that $\vars$ is disjoint from $\alphabet$ and $\alphabetstar$.
A \e{schemaless (document) spanner} is a function that maps each
document into a finite collection of tuples (referred to as
\e{mappings}) that assign spans to variables. More formally, a
\e{mapping} to $\doc$ is a function $\mu$ from a finite set of
variables, called the \e{domain} of $\mu$ and denoted $\dom(\mu)$,
into $\allspans(\doc)$. 
A schemaless spanner is a function $P$ that
maps every document $\doc$ into a finite set $P(\doc)$ of 
mappings.

For a schemaless spanner $P$ and a document $\doc$, different mappings
in $P(\doc)$ may have different domains. This stands in contrast to
the (schema based) spanners of Fagin et
al.~\cite{DBLP:journals/jacm/FaginKRV15}, where $P$ is such that there
exists a set $V_P$ of variables where every document $\doc$ and
mapping $\mu\in P(\doc)$ satisfy $\dom(\mu)=V_P$; in this case, we may
refer to $P$ as a \e{schema-based spanner}.
\begin{figure*}[t]
	\centering
	{
		\begin{tabular}{c}
			$ \underset{\scriptscriptstyle{1}}{\s{R}} \s{odion}\visiblespace
			\underset{\scriptscriptstyle{8}}{\s{R}}\s{askolnikov} \visiblespace 
			\underset{\scriptscriptstyle{20}}{\s{r}} \s{r@edu.ru} \el 
			\underset{\scriptscriptstyle{30}}{\s{Z}} \s{osimov}\visiblespace
			\underset{\scriptscriptstyle{38}}{\s{6}} \s{222345} \visiblespace
			\underset{\scriptscriptstyle{46}}{\s{m}}\s{ov@edu.ru} \el 
			\underset{\scriptscriptstyle{57}}{\s{P}} \s{yotr}\visiblespace
			\underset{\scriptscriptstyle{63}}{\s{L}}\s{uzhin}\visiblespace
			\underset{\scriptscriptstyle{70}}{\s{6}} \s{225545} \visiblespace 	\underset{\scriptscriptstyle{78}}{\s{l}}\s{uzi@edu.uk} \el \cdots$
		\end{tabular}}
		\vspace{-2.9mm}
		\caption{The input document $\doc_{\mathsf{Students}}$}
		\label{fig:indoc}
	\end{figure*}

	\begin{example}\label{ex:spnrstudinfo} 
		Let $\Gamma$ be the alphabet consists of lowercase and uppercase
		English letters: $a,\cdots,z,A\cdots,Z$; digits: $0,\cdots,9$; and
		symbols: $\visiblespace$ that stands for whitespace, `$.$' and
		`$\mathsf{@}$'.  Let $\Delta =\{\el\}$ where $\el$ stands for end of
		line.  The input document $\doc_{\mathsf{Students}}$ over $\Gamma
		\cup \Delta$ given in Figure~\ref{fig:indoc} holds personal
		information on students. (Some of the positions are marked
		underneath for convenience.)  Each line in the document describes
		information on a student in the following format: first name (if
		applicable), last name, phone number (if applicable) and email
		address.  There are spaces in between these elements.  The
		schemaless document spanner $\spnrex{StudInfo}$ extracts from the
		input document $\doc_{\mathsf{Students}}$ the following set of
		mappings, given in a table for convenience.  \vspace{-3mm}
		\begin{figure}[H]
			\begin{tabular}{rllll}
				&$ \exvar{first}$ & $\exvar{last}$ & $ \exvar{mail}$ & $\exvar{phone}$ \\
				\hline
				$\mu_1:$&  $\mspan{1}{7}$ & $\mspan{8}{19}$ & $\mspan{20}{22}$ & \\
				$\mu_2:$&  $\mspan{30}{37}$ &  & $\mspan{46}{56}$ & $\mspan{38}{45}$ \\
				$\mu_3:$&  $\mspan{57}{62}$ & $\mspan{63}{69}$ & $\mspan{79}{89}$ & $\mspan{70}{78}$ \\
			\end{tabular}
		\end{figure}
		\vspace{-2mm}
		Note that the empty cells in the table stand for undefined. 
		That is, we can conclude, for example, that  $\exvar{last} \notin \dom (\mu_2) $.
	\end{example}

	In the next sections, we discuss different representation languages for schemaless
	spanners. Whenever a schemaless spanner is represented by a
	description $q$, we denote by $\repspnrpar{q}$ the actual
	schemaless spanner that $q$ represents.
	{We are using the
		notation $\repspnrpar{\cdot}$ in order to clearly distinguish the
		schemaless semantics from the schema based semantics of Fagin et
		al.~\cite{DBLP:journals/jacm/FaginKRV15} who use
		$\reprelational{\cdot}$. This distinction is critical in the case of
		the \e{vset-automata} that we define later on.}

	\subsection{Regex Formulas}
	One way of representing a schemaless spanner is by means of a \e{regex
		formula}, which is a regular expression with capture variables, as
	allowed by the grammar
	$$\alpha \df
	\emptyset \mid
	\emptyword \mid \sigma \mid
	(\alpha\ror\alpha) \mid
	(\alpha\cdot\alpha) \mid \alpha\wild \mid \bind{x}{\alpha}
	$$
	where $\sigma\in\alphabet$ and $x\in \vars$. 
	For convenience, we sometimes put regex formulas in parentheses
	and also omit parentheses, as long as the meaning remains clear.
	We denote by
	$\vars(\alpha)$ the set of variables that appear in $\alpha$.  By
	$\rgxc$ we denote the class of regex formulas.
	
	Following Maturana et al.~\cite{DBLP:conf/pods/MaturanaRV18}, we
	interpret regex formulas as schemaless spanners in the following
	manner. The following grammar defines the application of a regex
	formula $\alpha$ on a document $\doc=\sigma_1 \cdots \sigma_n$, where
	the result is a pair $(s,\mu)$ where $s$ is a span of $\doc$ and $\mu$
	is a mapping to $\doc$.
	\begin{itemize}
		\item $\prepspnr{\emptyset} (\doc) \df \emptyset$;
		\item $\prepspnr {\epsilon} (\doc) \df \{ (\mspan{i}{i},\emptyset )
		\mid i=1,\dots,n\}$;
		\item $\prepspnr {\sigma} (\doc) \df \{ (\mspan{i}{i+1},\emptyset ) \mid
		\sigma_i=\sigma\}$;
		\item
		$\prepspnr{x\{ \alpha \}} (\doc)\df \{
		(\mspan{i}{j}, \mu \cup \{x\mapsto\mspan{i}{j}\})  \mid 
		(\mspan{i}{j}, \mu) \in \prepspnr{\alpha}(\doc) $ and 
		$	x\not\in\dom(\mu)
		\}$;
		\item
		$\prepspnr {\alpha_1 \vee \alpha_2} (\doc) \df \prepspnr {\alpha_1 } (\doc) \cup \prepspnr {\alpha_2} (\doc) $;
		\item $\prepspnr {\alpha_1 \cdot \alpha_2} (\doc) \df \{
		(\mspan{i}{j},\mu_1 \cup \mu_2) \mid \exists i' \mbox{ s.t. }
		(\mspan{i}{i^\prime},\mu_1) \in \prepspnr{\alpha_1}(\doc)$,
		$(\mspan{i^\prime}{j},\mu_2) \in \prepspnr{\alpha_2}(\doc)$,
		and $\dom(\mu_1)\cap \dom(\mu_2) = \emptyset \}$;
		\item	
		$\prepspnr {\alpha^*} (\doc) \df
		\bigcup_{i=0}^{\infty}  \prepspnr {\alpha^i} (\doc)$
		where $\alpha^i$ stands for the concatenation of $i$ copies of $\alpha$.
	\end{itemize} 
	
	The result of applying $\alpha$ to $\doc$ is then defined as follows.
	$$
	\repspnrpar{\alpha}(\doc) = 
	\{ \mu \mid  ( \mspan{1}{|\doc|+1},\mu) \in \prepspnr {\alpha} (\doc) \}
	$$
	
	We denote by $\repspnrpar{\rgxc}$ the class of schemaless spanners
	that can be expressed using the regex formulas. Similarly, for every
	subclass $\rgxsubc\subseteq\rgxc$, we denote by
	$\repspnrpar{\rgxsubc}$ the class of spanners expressible by an
	expression in $\rgxsubc$.  
	
	\paragraph*{Syntactic restrictions}
	Fagin et al.~\cite{DBLP:journals/jacm/FaginKRV15} introduced the class
	of regex formulas that are interpreted as schema-based spanners, namely
	the \e{functional} regex formulas. To define functional regex
	formulas, we first use the following inductive definition.  A regex
	formula $\alpha$ is functional \e{for} a set $V\subseteq \vars$ of
	variables if:
	\begin{citemize}
		\item $\alpha \in \Sigma^*$ and $V= \emptyset$;
		\item $\alpha = \alpha_1 \vee \alpha_2$ and each $\alpha_i$ is
		functional for $V$;
		\item $\alpha = \alpha_1 \cdot \alpha_2$ and there exists
		$V_1\subseteq V$ such that $\alpha_1$ is functional for $V_1$ and
		$\alpha_2$ is functional for $V\setminus V_1$;
		\item $\alpha = \alpha_0^{*}$ and $\alpha_0$ is functional for
		$\emptyset$;
		\item $\alpha = x\{ \alpha_0\}$ and $\alpha_0$ is functional for
		$V\setminus \{x\}$.
	\end{citemize}
	Finally, a regex formula $\alpha$ is \e{functional} if it is
	functional for the set $\vars(\alpha)$ of its variables. 
	
	Maturana et al.~\cite{DBLP:conf/pods/MaturanaRV18} pointed at a wider
	fragment of regex formulas, namely the \e{sequential} regex formula,
	that has some desirable properties, as will be discussed later.  A
	regex formula $\alpha$ is sequential if the following conditions hold:
	\begin{citemize}
		\item Every sub-formula of the form $\alpha_1 \cdot \alpha_2$
		satisfies $\vars(\alpha_1) \cap \vars(\alpha_2) = \emptyset$;
		\item Every sub-formula of the form $\alpha^{*}$ satisfies
		$\vars(\alpha) = \emptyset$;
		\item Every sub-formula of the form $x\{\alpha \}$ satisfies
		$x \not \in \vars(\alpha)$.\footnote{We added this restriction to
			the original definition~\cite{DBLP:conf/pods/MaturanaRV18} since
			it was mistakenly omitted, as the authors confirmed.}
	\end{citemize}
	We denote by $\funcrgxc$ and $\seqrgxc$ the classes of functional and
	sequential regex formulas, respectively. Maturana et
	al.~\cite{DBLP:conf/pods/MaturanaRV18} showed that
	$\funcrgxc \subsetneq \seqrgxc$, that is, every functional regex
	formula is sequential, but some sequential regex formulas are \e{not}
	functional, as the next example illustrates.
	
	\begin{example}\label{ex:slspnr}
		Let us define the following regex formulas over the alphabet $\Gamma \cup \Delta$ from Example~\ref{ex:spnrstudinfo}:
		$$\regex{mail} \df  
		\exvar{mail}\{   \gamma \mathtt{@} \gamma \mathtt{.} \gamma   \}
		$$ 
		$$
		\regex{name} \df 
		(\exvar{first}\{ \delta \} \visiblespace  \exvar{last}\{ \delta \} ) \vee (\exvar{last}\{ \delta \})
		$$ 
		$$
		\regex{phone} \df 
		\exvar{phone} \{  \beta^* \} 
		$$		where $\gamma \df (a\vee \cdots \vee z)^*$, 	$\delta \df (A\ror \cdots \ror Z)(a \ror \cdots \ror z)^*$,and $\beta  \df (0 \vee \cdots \vee 9 )^*$.
		Based on the previous regex formulas, we define the regex formula that represents the schemaless spanner $\spnrex{StudInfo}$ from Example~\ref{ex:spnrstudinfo}:
		$$\regex{info} \df
		\Gamma^* \cdot (\epsilon \vee \el) \cdot \regex{name} \cdot \visiblespace \cdot 
		\Big(   ( \regex{phone} \cdot \visiblespace \vee \epsilon ) \cdot  \regex{mail}  \Big) \cdot  \el \cdot \Gamma^*
		$$
		Note that this is regex formula is sequential but not functional  since the variables $\exvar{first}$ and $\exvar{phone}$ are optional.
	\end{example}

	\subsection{Vset-Automata}
	In addition to regex formulas, we use the \e{variable-set automata}
	(abbreviated \e{vset-automata}) for representing schemaless spanners,
	as defined by Maturana et al.~\cite{DBLP:conf/pods/MaturanaRV18} as a
	schemaless adaptation of the vset-automata of Fagin et
	al.~\cite{DBLP:journals/jacm/FaginKRV15}.
	
	A \e{vset-automaton}, $\VA$ for short, is a tuple $(Q,q_0,F,\delta)$,
	where $Q$ is set of \e{states}, $q_0\in Q $ is the \e{initial state},
	$F\subseteq Q$ is the set of \e{accepting states},
	\footnote{\label{fn:multipleaccept}The original definition by Fagin et al.~\cite{DBLP:journals/jacm/FaginKRV15} used a single accepting state. We can extend this definition to multiple accepting states without changing the expressive power by simulating a single accepting state with epsilon transitions.}
	and $\delta$ is a 
	transition relation consisting of \emph{epsilon transitions} of the form $(q,\epsilon, p)$, \emph{letter transitions} of the form 
	$(q,\sigma, p)$ and \e{variable transitions} of the form $(q,\vop{v},p)$ or $(q,\vcl{v},p)$  where $q,p\in Q$, $\sigma \in \Sigma$, and $v\in \vars$. The symbols $\vop{v}$ and $\vcl{v}$ are special symbols to denote the opening or closing of a variable $v$. 
	We define the set $\vars(A)$ as the set of all variables $v$ that are mentioned in some transition of $A$.
	For every finite set $V\subseteq \vars$ we define the set $\Gamma_V\df
	\{\vop{v},\vcl{v} : v\in V\}$ of \emph{variable operations}.
	A \e{run} $\rho$ over a document $\doc \df
	\sigma_1 \cdots \sigma_n$ is a sequence of the form
	\[(q_0,i_0) \overset{o_1}{\rightarrow} \cdots
	(q_{m-1},i_{m-1})\overset{o_m}{\rightarrow} (q_m,i_m)\] where:
	\begin{itemize}
		\item the $i_j$ are indexes in $\set{1,\dots,n+1}$ such that
		$i_0=1$ and $i_m = n+1$; 
		\item each $o_j$ is in $\Sigma\cup\{\epsilon \} \cup \Gamma_{\vars(A)} $;
		\item $i_{j+1} = i_j$ whenever $o_j \in \Gamma_{\vars(A)}$, and 
		$i_{j+1}= i_j+1$ otherwise;
		\item for all $j>0$ we have $(q_{j-1},o_{j},q_j)\in \delta$.
	\end{itemize}
	A run $\rho$ is called \emph{valid} if for every variable $v$ the following hold:
	\begin{itemize}
		\item $v$ is opened (or closed) at most once;
		\item if $v$ is opened at some position $i$ then it is closed at some position $j$ with $i \le j$;
		\item
		if $v$ is closed at some position $j$ then it is opened at some position $i$ with $i\le j$.
	\end{itemize}
	
	A run is called \emph{accepting} if its last state is an
	accepting state, i.e., $q_m \in F$.  For an accepting and valid run
	$\rho$, we define $\mu_{\rho}$ to be the mapping that maps the
	variable $v$ to the span $\mspan{i_j}{i_{j^\prime}}$ where $o_{i_j} =
	\vop{v}$ and $o_{i_{j^\prime}} = \vcl{v}$.  Finally, the result
	$\repspnrpar{A}(\doc)$ of applying the schemaless spanner represented
	by $A$ on a document $\doc$ is defined as the set of all assignments
	$\mu_{\rho}$ for all valid and accepting runs $\rho$ of $A$ on $\doc$.
	A $\VA$ is called \emph{sequential} if all of its accepting runs are
	valid, and it is called \emph{functional} if each such run also
	include all of its variables $\vars(A)$. 
	Note that sequential VAs corresponds with schemaless
	spanners, whereas functional with complete.
	\begin{example}\label{ex:seqautomaton}
		Let $A$ be the following sequential VA:
		\begin{center}
			\begin{tikzpicture}[on grid, node distance =1.5cm,every loop/.style={shorten >=0pt}]
			\node[state,initial text=,initial by arrow] (q0) {$q_0$};
			\node[state,right= of q0] (q1) {$q_1$};
			\node[state,accepting,right=of q1] (q2) {$q_2$};
			\path[->]
			(q0) edge [loop above] node {$\Sigma$} (q0)
			(q0) edge node[above] {$\vop{x}$} (q1)
			(q1) edge[loop above] node {$\Sigma$} (q1)
			(q1) edge node[above] {$\vcl{x}$} (q2)
			(q2) edge[loop above] node {$\Sigma$} (q2)
			(q0) edge[bend right] node[below] {$\Sigma$} (q2)
			;
			\end{tikzpicture}
		\end{center} 
		Omitting the transition from $q_0$ to $q_2$ results in a functional VA. The same schemaless spanner as that represented by $A$ is given by the sequential regex formula  $\alpha \df (\Sigma^* x\{ \Sigma^* \} \Sigma^*) \vee (\Sigma^+)$ where $\Sigma^+$ stands for $\Sigma \Sigma^*$. 
	\end{example}
	
	\subsection{Algebraic Operators}
	Before we define the algebra over schemaless spanners, we present some
	basic definitions.  Two mappings $\mu_1$ and $\mu_2$ are
	\emph{compatible} if they agree on every common variable, that is,
	$\mu_1(x)=\mu_2(x)$ for all $x\in \dom(\mu_1) \cap \dom(\mu_2)$.  In
	this case, we define $\mu \df \mu_1 \cup \mu_2$ as the mapping with
	$\dom(\mu)=\dom(\mu_1)\cup\dom(\mu_2)$ such that $\mu(x) = \mu_1(x)$
	for all $x\in \dom(\mu_1)$ and $\mu(x) = \mu_2(x)$ for $x\in
	\dom(\mu_2)$.
	
	The correspondents of the relational-algebra operators are defined
	similarly to the SPARQL formalism~\cite{DBLP:journals/tods/PerezAG09}.
	In particular, the operators \e{union}, \e{projection}, \e{natural
		join}, and \e{difference} are defined as follows for all schemaless
	spanners $P_1$ and $P_2$ and documents $\doc$.
	\begin{itemize} 
		\item \ititle{Union:} 
		The union $P \df P_1 \cup P_2$ is defined by
		$P(\doc) \df P_1(\doc) \cup P_2(\doc)$.
		\item \ititle{Projection:} The projection $P \df \pi_Y P_1$ is
		defined by
		$P(\doc)=\set{\mu \restrict Y \mid \mu\in P_1(\doc)}$ where
		$\restrict$ stands for the restriction of $\mu$ to the
		variables in $\dom(\mu)\cap Y$.
		\item \ititle{Natural join:} The \emph{(natural) join}
		$P \df P_1 \join P_2$ is defined to be such that $P(\doc)$
		consists of all mappings $\mu_1 \cup \mu_2$ such that
		$\mu_1\in P_1(\doc)$, $\mu_2\in P_2(\doc)$ and $\mu_1$ and
		$\mu_2$ are compatible.
		\item \ititle{Difference:} The difference
		$P\df P_1 \setminus P_2$ is defined to be such that $P(\doc)$
		consists of all mappings $\mu_1\in P_1(\doc)$ such that no
		$\mu_2\in P_2(\doc)$ is compatible with $\mu_1$.
	\end{itemize}
	We allow the use of these operators for spanners represented by regex formulas or VAs and also for more complex spanner representations, e.g., $\repspnrpar{A_1}\join \repspnrpar{A_2}$.  
	In this case, we use an the abbreviated notation $\repspnrpar{A_1\join A_2}$ instead of $\repspnrpar{A_1}\join \repspnrpar{A_2}$.
	We make the clear note that when the above operators are applied on
	schema-based spanners, they are the same as those of Fagin et
	al.~\cite{DBLP:journals/jacm/FaginKRV15}.

	\begin{example}\label{ex:spnrdiff}
		Let us consider our input document $\doc_{\mathsf{Students}}$ from Figure~\ref{fig:indoc}.
		Assume one wants to filter out from the results obtained by applying the spanner $\spnrex{StudInfo}$ from Example~\ref{ex:slspnr} on $\doc_{\mathsf{Students}}$ the mappings that correspond with students from universities within the UK. 
		It is given that students study in the UK if and only if their email addresses  end with the letters `$\mathsf{uk}$'. 
		We phrase the following regex formula that extracts such email addresses: 
		$$
		\regex{UKm} \df 
		\Big(\epsilon \vee (\Gamma^* \cdot \el) \Big) \cdot
		\Gamma^*  \cdot \visiblespace
		\exvar{mail}\{   \gamma \mathtt{@} \gamma \mathtt{.} \mathsf{uk}  \}
		\cdot  \el \cdot \Gamma^*
		$$ 
		where $\gamma$ is as defined in Example~\ref{ex:slspnr}.
		In this case, the desired output is given by  
		$ \repspnrpar{\regex{info} \setminus \regex{UKm}}(\doc_{\mathsf{Students}})$ who consists of the mappings $\mu_1$ and $\mu_2$ from Example~\ref{ex:spnrstudinfo}.
	\end{example}

	\subsection{Complexity}
	Let $\mathcal{L}$ be a representation language for schemaless spanners
	(e.g., the class of regex formulas or the class of VAs). 
	Given $q \in \mathcal{L}$ and a document $\doc$, we are interested in the decision problem that checks whether $\repspnrpar{q}(\doc)$ is not empty. In that case, we are also interested in evaluating $\repspnrpar{q}(\doc)$.   
	Note that we study the \e{combined complexity} of these
	problems, as both $q$ and $\doc$ are regarded as input.
	
	Under the combined complexity, ``polynomial time'' is not a proper
	yardstick of efficiency for evaluating $\repspnrpar{q}(\doc)$, since this set can contain exponentially many mappings. 
	We thus use
	efficiency yardsticks of
	enumeration~\cite{DBLP:journals/ipl/JohnsonP88}. In particular, our 
	evaluation algorithm takes $q$ and $\doc$ as input, and it outputs all
	the mappings of $\repspnrpar{q}(\doc)$, one by one, without duplicates.
	The algorithm runs in \e{polynomial total time} if its execution time
	is polynomial in the combined size of $q$, $\doc$ and
	$\repspnrpar{q}(\doc)$. 
	The \e{delay} of the evaluation algorithm
	refers to the maximal time that passes between every two consecutive
	mappings. 
	A well-known observation is that polynomial delay implies
	polynomial total time (but not necessarily vice versa), and that
	NP-hardness of the nonemptiness problem implies that no evaluation
	algorithm runs in polynomial total time, or else $\text{P}=\text{NP}$.

	While deciding whether $\repspnrpar{q}(\doc)\neq \emptyset$ is $\NP$-hard whenever $q$ is given as a VA~\cite{DBLP:conf/icdt/Freydenberger17},
	this is not the case
	for sequential (and hence functional) VA:
	\begin{citedtheorem}{DBLP:journals/corr/AmarilliBMN18}\label{thm:enumerationseq}
		Given a sequential VA $A$ and a document $\doc$, 
		one can enumerate
		$\repspnrpar{A}(\doc)$ with 
		polynomial delay.  
	\end{citedtheorem}
	We call two 
	schemaless spanner representations $q_1$ and $q_2$ \e{equivalent} if 
	$\repspnrpar{q_1} \equiv \repspnrpar{q_2}$, that is, $\repspnrpar{q_1}$ and $\repspnrpar{q_2}$ are identical.
	Note that the translation of functional and sequential regex formulas to equivalent functional and sequential VAs, respectively, can be done in linear time~\cite{DBLP:conf/pods/FreydenbergerKP18,DBLP:conf/pods/MaturanaRV18}. Hence, our lower bounds are usually shown for the nonemptiness
	of regex formulas and our upper bounds for 
	the evaluation of
	VAs.

\section{The Natural-Join Operator}\label{sec:join}

To establish complexity upper bounds on the evaluation of schema-based
spanners, Freydenberger et al.~\cite{DBLP:conf/pods/FreydenbergerKP18} used static compilation to
compile the query (where the operands are regex formulas or VAs)
into a single VA.  In particular, they showed that two functional
VAs can be compiled in polynomial time into a single equivalent
VA that is also functional.  Consequently, we can enumerate with
polynomial delay the mappings of $\repspnrpar{A_1 \join A_2}(\doc)$,
given functional VAs $A_1$ and $A_2$.  The question is whether it
generalizes to schemaless spanners: can we efficiently enumerate the
mappings of $\repspnrpar{A_1 \join A_2}(\doc)$, given \e{sequential}
(but not necessarily \e{functional}) $A_1$ and $A_2$? This is no
longer the case, as the next theorem implies, even under the yardstick
of \e{expression complexity}~\cite{DBLP:conf/stoc/Vardi82}
in which the document is regarded as fixed.
(Recall
that a sequential regex formula can be translated in polynomial time
into an equivalent VA~\cite{DBLP:conf/pods/MaturanaRV18}.)
\begin{theorem}\label{thm:joinseqhard}
	The following decision problem is $\NP$-complete. Given two
	sequential regex formulas $\gamma_1$ and $\gamma_2$ and an input
	document $\doc$, is $ \repspnrpar{\gamma_1 \join \gamma_2}(\doc)$
	nonempty? The problem remains $\NP$-hard even if $\doc$ is assumed
	to be of length one.
\end{theorem}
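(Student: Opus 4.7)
My plan is to prove the two halves separately. Membership in $\NP$ is routine: given $\gamma_1$, $\gamma_2$ and $\doc$, I guess a mapping $\mu$ whose domain is a subset of $\vars(\gamma_1)\cup\vars(\gamma_2)$ and whose spans lie in the polynomial-size set $\allspans(\doc)$, then verify in polynomial time that $\mu\restrict\vars(\gamma_i)\in\repspnrpar{\gamma_i}(\doc)$ for $i=1,2$ by translating each sequential regex formula into a sequential VA and performing a guided run.

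For $\NP$-hardness I reduce from $3$-SAT directly to the single-letter case $\doc=\sigma$. Fix a $3$-CNF formula $\phi$ with Boolean variables $v_1,\dots,v_n$ and clauses $C_1,\dots,C_m$, and let $J_i\df\{j : v_i\text{ occurs in }C_j\}$. For every $i\in\{1,\dots,n\}$ and every $j\in J_i$ I introduce two fresh regex variables $x_i^{(j)},y_i^{(j)}$, intuitively encoding ``$v_i=\true$'' and ``$v_i=\false$'' respectively. Define
$$
\gamma_{1,i} \df \Bigl(\prod_{j\in J_i}\bind{x_i^{(j)}}{\emptyword}\Bigr)\vee\Bigl(\prod_{j\in J_i}\bind{y_i^{(j)}}{\emptyword}\Bigr),\quad \gamma_1 \df \Bigl(\prod_{i=1}^n \gamma_{1,i}\Bigr)\cdot\sigma,
$$
so that every mapping in $\repspnrpar{\gamma_1}(\doc)$ chooses a truth value per $v_i$ by defining either the entire $x$-family or the entire $y$-family (over $J_i$) at the span $\mspan{1}{1}$, with the other family left undefined. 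For each clause $C_j$ let $\delta_j$ be the disjunction with one branch per literal of $C_j$: the branch for literal $v_i$ is $\bind{y_i^{(j)}}{\emptyword}$, and the branch for $\neg v_i$ is $\bind{x_i^{(j)}}{\emptyword}$. Set $\gamma_2 \df \sigma\cdot\prod_{j=1}^m\delta_j$, so every variable defined in any $\mu_2\in\repspnrpar{\gamma_2}(\doc)$ necessarily receives the span $\mspan{2}{2}$. The $(j)$-superscript makes the variables used in each $\delta_j$ disjoint from those of the other clause gadgets, and similarly the variables used in each $\gamma_{1,i}$ are disjoint across $i$, so both $\gamma_1$ and $\gamma_2$ are sequential; the reduction is clearly polynomial-time.

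Correctness is a short case analysis. Compatibility of $\mu_1\in\repspnrpar{\gamma_1}(\doc)$ and $\mu_2\in\repspnrpar{\gamma_2}(\doc)$ forces every variable defined in $\mu_2$ (always at span $\mspan{2}{2}$) to be \emph{undefined} in $\mu_1$, which by the design of $\gamma_{1,i}$ is exactly the statement that the literal selected by $\delta_j$ is true under the assignment that $\mu_1$ encodes; hence the join is nonempty iff $\phi$ is satisfiable. The main obstacle I expect is the sequentiality restriction, which forbids any variable from occurring twice in a concatenation and so rules out reusing a single pair $(x_i,y_i)$ across clause gadgets of $\gamma_2$. The workaround is to duplicate the pair per clause via the $(j)$-superscript and then \emph{couple} the copies inside $\gamma_{1,i}$ so that a single disjunctive choice forces all copies to agree. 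A related subtlety is that a join can propagate information only through value clashes on shared variables, not through mere definedness; placing the assignments of $\gamma_1$ before $\sigma$ and those of $\gamma_2$ after $\sigma$ is exactly what converts the intended definedness constraint into a usable clash of spans, and is why the construction needs $|\doc|\ge 1$ rather than the empty document.
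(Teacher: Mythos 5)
Your $\NP$-hardness reduction is essentially the paper's own: the paper also reduces from 3SAT to a single-letter document, places the empty-span captures of $\gamma_1$ before the letter and those of $\gamma_2$ after it so that compatibility degenerates to disjointness of domains, selects a truth value per Boolean variable via a disjunction between two variable families in $\gamma_1$, and selects a witnessing literal per clause via a disjunction in $\gamma_2$. The only differences are cosmetic: the paper indexes the variable copies by all $m$ clauses rather than only by the clauses $J_i$ in which $v_i$ occurs, and it uses the dual convention in which $\dom(\mu_2)$ (rather than the branch chosen inside $\gamma_1$) encodes the satisfying assignment. Your correctness argument for this part is sound, including the sequentiality check.

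The one genuine gap is in your membership argument. The certificate you propose --- a single mapping $\mu$ together with the check $\mu\restrict\vars(\gamma_i)\in\repspnrpar{\gamma_i}(\doc)$ for $i=1,2$ --- does not certify nonemptiness of the join, because in the schemaless setting the element $\mu_1\cup\mu_2$ of the join need not restrict back to $\mu_1$ and $\mu_2$: if $\mu_2$ defines a variable of $\vars(\gamma_1)$ that $\mu_1$ leaves undefined, then $\mu\restrict\vars(\gamma_1)$ strictly extends $\mu_1$ and may fall outside $\repspnrpar{\gamma_1}(\doc)$. Concretely, for $\gamma_1=\bind{x}{\mathsf{a}}$, $\gamma_2=(\bind{x}{\emptyword}\cdot\mathsf{a})\vee\mathsf{a}$ and $\doc=\mathsf{a}$, the join equals $\{\{x\mapsto\mspan{1}{2}\}\}$ (obtained with $\mu_2=\emptyset$), yet this unique element fails your test against $\gamma_2$ because $\repspnrpar{\gamma_2}(\doc)=\{\{x\mapsto\mspan{1}{1}\},\emptyset\}$; so your verifier rejects a yes-instance. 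The fix is immediate: guess the pair $(\mu_1,\mu_2)$, verify $\mu_i\in\repspnrpar{\gamma_i}(\doc)$ for each $i$ separately, and check that $\mu_1$ and $\mu_2$ are compatible.
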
 
\begin{proof}
\def\xv#1#2#3{x_{#1}^{#2,#3}}
\def\vv#1#2#3{{#1}^{#2,#3}}
\def\tv{\true}
\def\fv{\false}

Membership in $\NP$ is straightforward, so we focus on NP-hardness. We
show a reduction from 3-CNF-satisfiability which is also known as 3SAT~\cite{Garey:1990:CIG:574848}.  The input for 3SAT
is a formula $\varphi$ with the free variables $x_1,\ldots, x_n$ such
that $\varphi$ has the form $C_1 \wedge \cdots \wedge C_m$, where each
$C_j$ is a clause. In turn, each clause is a disjunction of three
literals, where a literal has the form $x_i$ or $\neg x_i$ for
$i=1,\dots,n$.  The goal is to determine whether there is an
assignment $\tau: \{x_1,\ldots,x_n\} \rightarrow \{0,1\}$ that
satisfies $\varphi$.  Given a 3CNF formula $\varphi$, we construct two
sequential regex formulas $\gamma_1$ and $\gamma_2$ such that there is
a satisfying assignment for $\varphi$ if and only if
$\repspnrpar{\gamma_1\join\gamma_2}(\doc) \ne \emptyset$, where $\doc$
is the document that consists of a single letter $\mathsf{a}$.

To construct $\gamma_1$ and $\gamma_2$, we associate every variable
$x_i$ with $2m$ corresponding capture variables $\xv{i}{j}{\ell}$ for
$1\le j \le m$ and $\ell \in \set{\tv,\fv}$.  We then define
$$\gamma_1 \df \gamma_{x_1} \cdots \gamma_{x_n} \cdot \mathsf{a}$$ where 
$$\gamma_{x_i} \df
( \xv{i}{1}{\tv} \{ \epsilon \}\cdots \xv{i}{m}{\tv} \{ \epsilon \})
\lor (\xv{i}{1}{\fv} \{ \epsilon \}\cdots \xv{i}{m}{\fv} \{ \epsilon
\}).$$ Intuitively, $\gamma_{x_i}$ verifies that the assignment to
$x_i$ is consistent in all of the clauses.  We then define
$$\gamma_2 \df \mathsf{a} \cdot (\delta_1  \cdots \delta_m)$$
where $\delta_j$ is the disjunction of regex formulas $\beta$ such
that $\beta = \xv{i}{j}{\fv}\{\epsilon\}$ if $\neg {x_i}$ appears in
$C_j$, and $\beta = \xv{i}{j}{\tv}\{\epsilon\}$ if $x_i$ appears in
$C_j$.  Intuitively, $\gamma_2$ verifies that at least one disjunct in
each clause is evaluated true.

Let us consider the following example where
$$\varphi \df
(x \lor y \lor z) \wedge (\neg x \lor y \lor \neg z)\,.
$$
In this case, we have
$$
\delta_1 = \vv{x}{1}{\tv}\{\epsilon\} \lor
\vv{y}{1}{\tv}\{\epsilon\} \lor \vv{z}{1}{\tv}\{\epsilon\}
$$ 
$$
\delta_2 = \vv{x}{2}{\fv}\{\epsilon\} \lor
\vv{y}{2}{\tv}\{\epsilon\}
\lor
\vv{z}{2}{\fv}\{\epsilon\}
$$
and, therefore,
\begin{align*}
\gamma_2 \df \mathsf{a} \cdot
(\vv{x}{1}{\tv}\{\epsilon\} \lor \vv{y}{1}{\tv}\{\epsilon\} \lor \vv{z}{1}{\tv)}\{\epsilon\} ) \cdot \\
(\vv{x}{2}{\fv}\{\epsilon\} \lor \vv{y}{2}{\tv}\{\epsilon\}
\lor \vv{z}{2}{\fv}\{\epsilon\} )\,.
\end{align*}
We also have
\begin{align*} 
\gamma_{1} \df
&\big(	\vv{x}{1}{\tv} \{ \epsilon \} \vv{x}{2}{\tv} \{ \epsilon \} 
\lor 
\vv{x}{1}{\fv} \{ \epsilon \}  \vv{x}{2}{\fv} \{ \epsilon \}
\big) \cdot\\
&\big(	\vv{y}{1}{\tv} \{ \epsilon \} \vv{y}{2}{\tv} \{ \epsilon \}
\lor 
\vv{y}{1}{\fv} \{ \epsilon \}  \vv{y}{2}{\fv} \{ \epsilon \}\big)\cdot\\
&\big(	\vv{z}{1}{\tv} \{ \epsilon \} \vv{z}{2}{\tv} \{ \epsilon \}
\lor \vv{z}{1}{\fv} \{ \epsilon \}  \vv{z}{2}{\fv} \{ \epsilon \}\big) \cdot
\mathsf{a}\,.
\end{align*}

It follows directly from the definition that both $\gamma_1$
and $\gamma_2$ are sequential.  Moreover,
$\repspnrpar{\gamma_1 \join \gamma_2}(\doc)$ is nonempty if
and only if there are compatible mappings $\mu_1 \in
\repspnrpar{\gamma_1}(\doc)$ and $\mu_2 \in
\repspnrpar{\gamma_2}(\doc)$.  Since $\gamma_1$ ends with the
letter $\mathsf{a}$ whereas $\gamma_2$ starts with the letter
$\mathsf{a}$, it holds that $\mu_1 \in
\repspnrpar{\gamma_1}(\doc)$ and $\mu_2 \in
\repspnrpar{\gamma_2}(\doc)$ are compatible if and only if
$\dom(\mu_1)\cap \dom(\mu_2) = \emptyset$. We will show that
$\repspnrpar{\gamma_1 \join \gamma_2}(\doc)$ is nonempty if
and only if there is a satisfying assignment to $\varphi$.

\paragraph{The ``only if'' direction}
Suppose that $\repspnrpar{\gamma_1 \join \gamma_2}(\doc)$ is
nonempty.  In this case, a satisfying assignment $\tau$ to
$\varphi$ is encoded by the domain of $\gamma_2$ in the
following way: if $\xv{i}{j}{\ell} \in \dom(\mu_2)$ then
$\tau(x_i) = \ell$. Observe that $\tau$ is well defined, due
to the definition of $\gamma_2$.  

In our example, the mapping $\mu_1 \in
\repspnrpar{\gamma_1}(\mathsf{a})$ with
$$\dom(\mu_1)=\{  \vv{x}{1}{\tv} ,\vv{x}{2}{\tv} ,\vv{y}{1}{\fv} ,\vv{y}{2}{\fv} ,\vv{z}{1}{\fv} ,\vv{z}{2}{\fv}    \}$$
and the mapping $\mu_2 \in \repspnrpar{\gamma_2}(\mathsf{a})$ with  
$$\dom(\mu_2)=\{\vv{x}{1}{\fv} ,\vv{x}{2}{\fv} ,\vv{y}{1}{\tv} ,\vv{y}{2}{\tv} ,\vv{z}{1}{\tv} ,\vv{z}{2}{\tv}    \}$$
are compatible, and the satisfying assignment $\tau$ is
encoded by $\dom(\mu_2)$ and is given by $\tau(x)= \fv$,
$\tau(y) = \tv$ and $\tau(z) = \tv$.

\paragraph{The ``if'' direction}
If there is a satisfying assignment $\tau$ to $\varphi$, then
define the mappings $\mu_1 \in \repspnrpar{\gamma_1}(\doc)$
and $\mu_2 \in \repspnrpar{\gamma_2}(\doc)$ are defined by
$\xv{i}{j}{\ell}\in \dom(\mu_2)$ whenever $j = \tau(x_i)$ and
$\xv{i}{j}{\ell}\in \dom(\mu_2)$ whenever $j \ne
\tau(x_i)$. These mapping are compatible, since
$\dom(\mu_1)\cap \dom(\mu_2) = \emptyset$.  We conclude that
$\repspnrpar{\gamma_1 \join \gamma_2}(\doc)$ is nonempty.

We conclude the $\NP$-hardness of the problem of determining
whether $\repspnrpar{\gamma_1 \join \gamma_2}(\doc)$ is
nonempty, as claimed.

\end{proof}

In what follows, we suggest two different approaches to deal with this
hardness.

\subsection{Bounded Number of Shared Variables}\label{subsec:joincommonvars}
We now consider the task of computing $\repspnrpar{A_1\join A_2}
(\doc)$, given sequential VAs $A_1$ and $A_2$ and a document $\doc$.
Next, we show that compiling the join into a new sequential VA is
Fixed Parameter Tractable (FPT) when the parameter is the number of
common variables.

\def \thmfptjoin {The following problem is FPT when parametrized by $
	|\vars({A_1})\cap \vars({A_2})|$. Given two sequential VAs $A_1$ and
	$A_2$, construct a sequential VA that is equivalent to ${A_1 \join
		A_2}$.}

\begin{lemma}~\label{lem:fptjoin}
	\thmfptjoin
\end{lemma}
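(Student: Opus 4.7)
Let $C = \vars(A_1) \cap \vars(A_2)$ and $k = |C|$. The plan is to reduce to a functional-style product construction by a case analysis on how each common variable is distributed between the two mappings that get joined. Since a pair $(\mu_1,\mu_2) \in \repspnrpar{A_1}(\doc) \times \repspnrpar{A_2}(\doc)$ is compatible iff they agree on $\dom(\mu_1) \cap \dom(\mu_2)$, it suffices to classify each $v \in C$ as (i) opened by neither, (ii) opened only by $\mu_1$, (iii) opened only by $\mu_2$, or (iv) opened by both with matching spans. This yields an ordered partition $\pi = (S_0,S_1,S_2,S_B)$ of $C$; there are only $4^k$ such partitions, and I will build one sequential VA $B_\pi$ per partition and take their disjoint union.

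For a fixed $\pi$, I will first delete from $A_1$ every $\vop{v}$ and $\vcl{v}$ transition with $v \in S_0 \cup S_2$ (and symmetrically for $A_2$), so that each side can only open the common variables assigned to it. Then I will form the standard interleaved product of the two restricted VAs: letter transitions synchronize (both consume $\sigma$ together); $\epsilon$-transitions and variable operations on $S_1$, $S_2$, or on variables private to one side are taken by that component alone, leaving the other component's state and the document pointer untouched; and for $v \in S_B$, the two $\vop{v}$-transitions (resp.\ $\vcl{v}$-transitions) are fused into a single joint move of the product, producing a single $\vop{v}$ (resp.\ $\vcl{v}$). Because variable operations never advance the document pointer, this joint move precisely models ``open/close $v$ at the same position in both runs'', which guarantees $\mu_1(v) = \mu_2(v)$ for every $v \in S_B$. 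Finally, I augment each state with two trackers $T_1 \subseteq S_1 \cup S_B$ and $T_2 \subseteq S_2 \cup S_B$ that record which required common variables have already been opened, and I restrict the accepting states to those with $T_1 = S_1 \cup S_B$ and $T_2 = S_2 \cup S_B$; this forces $\pi$ to be exactly the partition realized by each accepted run, preventing double counting when the $B_\pi$ are unioned.

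Sequentiality of $B_\pi$ is immediate: variables private to $A_i$ inherit sequentiality from $A_i$, and each $v \in S_B$ is opened at most once overall because both components open it at most once and the openings are synchronized into a single joint transition. Each $B_\pi$ has $O(|A_1|\cdot|A_2|\cdot 4^k)$ states (factor $4^k$ from the trackers), so the disjoint union $B = \bigcup_\pi B_\pi$ (joined via a fresh initial state with $\epsilon$-transitions) has total size $O(|A_1|\cdot|A_2|\cdot 16^k)$, which is polynomial in $|A_1|,|A_2|$ for every fixed $k$, yielding the required FPT bound. Equivalence $\repspnrpar{B} \equiv \repspnrpar{A_1 \join A_2}$ follows because every compatible pair $(\mu_1,\mu_2)$ induces exactly one partition $\pi$ of $C$, and the construction of $B_\pi$ realizes precisely the joins whose partition is $\pi$.

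The main obstacle will be justifying the synchronization on $S_B$: since variable transitions do not advance the document, and $A_1, A_2$ may need to interleave private variable operations or $\epsilon$-moves around them, one must check that fusing the two $\vop{v}$-transitions into a single joint move of the product is equivalent to ``both runs open $v$ at the same document position''. This is precisely true because the document pointer is invariant under any variable operation in either component, so the fused and the independently-interleaved views of $S_B$-operations coincide at the level of the resulting mapping. Everything else is a routine product-automaton argument.
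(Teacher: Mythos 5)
Your high-level decomposition is sound and close in spirit to the paper's: enumerating the $4^k$ ways the common variables can be split between $\dom(\mu_1)$ and $\dom(\mu_2)$ plays the same role as the paper's conversion of each $A_i$ into a VA that is \emph{semi-functional} for the common variables (Lemma~\ref{lem:seqtosemifunc}), namely resolving, at FPT cost, the nondeterminism about which common variables occur in a mapping's domain. The gap is in the product step for the variables in $S_B$. Fusing the two $\vop{v}$-transitions (and the two $\vcl{v}$-transitions) into a single joint move is sound but not complete: if two variables $v,w\in S_B$ are both operated on at the same document position and the two automata issue these operations in opposite orders, the fused product deadlocks. Concretely, let $A_1$ execute the transition sequence $\vop{v},\vop{w},\vcl{v},\vcl{w}$ from its initial to its accepting state and let $A_2$ execute $\vop{w},\vop{v},\vcl{w},\vcl{v}$; on $\doc=\epsilon$ both produce exactly the mapping $\{v\mapsto\mspan{1}{1},\,w\mapsto\mspan{1}{1}\}$, so this mapping belongs to $\repspnrpar{A_1\join A_2}(\doc)$. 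But the joint $\vop{v}$ requires $A_2$ to first perform $\vop{w}$, which is itself a joint move requiring $A_1$ to first perform $\vop{v}$ --- a circular wait, so $B_\pi$ has no accepting run and the mapping is lost. Your closing remark that the fused and independently-interleaved views ``coincide at the level of the resulting mapping'' establishes only that every fused run yields a correct mapping; it does not establish that every compatible pair of runs admits a fused interleaving, and the example shows that some do not.

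The paper avoids this by never synchronizing individual variable transitions. After making both automata semi-functional for $X=\vars(A_1)\cap\vars(A_2)$, every state carries a well-defined status (unseen/open/closed) for each $x\in X$, and the join is the configuration-based product of Freydenberger et al.\ (Lemma~\ref{lem:joinseqsemi}): product states are pairs of states with consistent configurations, and the variable operations emitted between two consecutive letters are determined by the change of configuration rather than by matching individual transitions, so the order in which each operand happens to issue its operations within a position is immaterial. Repairing your construction requires the same ingredient, e.g., canonicalizing the order of $S_B$-operations within each position, which again amounts to knowing each state's configuration on the common variables. The rest of your argument (deleting the $S_0\cup S_2$-transitions, the trackers, the $O(|A_1|\cdot|A_2|\cdot 16^k)$ size bound, and sequentiality of the result) is fine.
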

Since we have a polynomial delay algorithm for the evaluation of
sequential VAs 
(Theorem~\ref{thm:enumerationseq})
and the size of the resulting VA
is FPT in $ |\vars({A_1})\cap \vars({A_2})|$, we have the following
immediate conclusion.
\begin{theorem}~\label{thm:joinseqfixcommon}
	Given two sequential VAs $A_1$ and $A_2$ and a document $\doc$, one
	can evaluate $\repspnrpar{A_1\join A_2} (\doc)$ with FPT delay
	parameterized by $ |\vars({A_1})\cap \vars({A_2})|$.
\end{theorem}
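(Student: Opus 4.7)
The plan is to combine the two ingredients cited right before the theorem in a straightforward way. First, I would invoke Lemma~\ref{lem:fptjoin} to construct, in time $f(k)\cdot\mathrm{poly}(|A_1|+|A_2|)$ where $k=|\vars(A_1)\cap\vars(A_2)|$, a sequential VA $A$ with $\repspnrpar{A}\equiv\repspnrpar{A_1\join A_2}$. Since this construction is FPT in $k$, the size of $A$ itself is bounded by $g(k)\cdot\mathrm{poly}(|A_1|+|A_2|)$ for some computable function $g$. Then, I would feed $A$ together with the input document $\doc$ into the polynomial-delay enumeration algorithm for sequential VAs guaranteed by Theorem~\ref{thm:enumerationseq}, which outputs the set $\repspnrpar{A}(\doc)=\repspnrpar{A_1\join A_2}(\doc)$ without duplicates and with delay polynomial in $|A|+|\doc|$.

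The final step is a routine bookkeeping calculation: substituting the size bound on $A$ into the polynomial delay of the enumeration algorithm yields a delay bounded by $h(k)\cdot\mathrm{poly}(|A_1|+|A_2|+|\doc|)$ for a suitable function $h$, which is exactly the definition of FPT delay with parameter $k$. The one-time compilation cost of $f(k)\cdot\mathrm{poly}(|A_1|+|A_2|)$ can either be charged to a pre-processing phase or absorbed into the delay before the first output mapping, both of which are consistent with the FPT-delay yardstick. There is no real obstacle here, since Lemma~\ref{lem:fptjoin} does all the heavy lifting; the only thing to check is that the composition of an FPT transformation with a polynomial-delay enumerator indeed gives FPT delay in the same parameter, which follows because polynomials compose with FPT functions to yield FPT functions.
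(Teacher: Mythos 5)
Your proposal is correct and matches the paper's own argument exactly: the paper likewise derives Theorem~\ref{thm:joinseqfixcommon} as an immediate consequence of Lemma~\ref{lem:fptjoin} (FPT-size compilation of the join into a single sequential VA) followed by the polynomial-delay enumeration of Theorem~\ref{thm:enumerationseq}. The bookkeeping you spell out is precisely what the paper leaves implicit.
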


In the rest of this section, we discuss the proof of
Lemma~\ref{lem:fptjoin}.  
As was shown by Freydenberger et
al.~\cite{DBLP:conf/pods/FreydenbergerKP18} if $A$ is a
functional VA then for every state $q$ of $A$ and every variable
$v \in \vars(A)$, all of the possible runs from the initial state $q_0$ to $q$ include the same variable operations.
Formally, for every state $q$ there is a function $\varconf_q$, namely \emph{the variable configuration function}, that assigns a label from $\{\vc{o}, \vc{c},
\vc{w} \}$, standing for ``open,'' ``close,'' and ``wait,'' to every
variable in $\vars(A)$, as follows. 
First, $\varconf_q(x) = \vc{o}$ if
every run from $q_0$ to $q$ opens $x$ but does not close it. Second,
$\varconf_q(x) = \vc{c}$ if every run from $q_0$ to $q$ opens and
closes $x$. Third, $\varconf_q(x) = \vc{w}$ if no run from $q_0$ to
$q$ opens or closes  variable $x$.

In sequential VAs, however, not all of the accepting runs open and close all of the variables and therefore it makes more sense to replace the label $\vc{w}$ with the label $\vc{u}$ that stands for ``unseen''. 
In addition, in sequential VAs as opposed to functional, there might be a state $q$ for which there are two (different) runs from $q_0$ to $q$ such that the first opens and closes the variable $x$ whereas the second does not even open $x$. For this case, we add to the set of labels the label $\vc{d}$ that stands for ``done'' meaning that variable $x$ cannot be seen after reaching state $q$. Hence, ``done'' can also be understood as ``unseen or closed, depending on what happened before''. We formalize these notions right after the next example.
\begin{example}
	Let us examine the following two accepting runs of the sequential VA $A$ from Example~\ref{ex:seqautomaton} on the input document $\doc \df \mathtt{a}$:
	$$\rho_1 \df (q_0,1) \overset{\vop{x} }{\rightarrow} (q_1,1)  \overset{\mathtt{a} }{\rightarrow} (q_1,2) \overset{\vcl{x} }{\rightarrow} (q_2,2)$$
	$$\rho_2 \df (q_0,1)  \overset{\mathtt{a} }{\rightarrow} (q_2,2)$$
	The run $\rho_1$ gets to state $q_2$ after opening and closing $x$ while $\rho_2$ gets to $q_2$ without opening $x$. Thus, in state $q_2$ the variable configuration of $x$ is $\vc{d}$.  
\end{example}

This ``nondeterministic'' behavior of sequential VAs is reflected in an \emph{extended variable configuration function} $\varconfseq_q$ for every state $q$ whose co-domain is the set $\{\vc{u}, \vc{o},\vc{c}, \vc{d} \}$.
Since all of the accepting runs of a sequential VA are valid, given a state $q$, exactly one of the following holds:
\begin{itemize}
	\item 
	all runs from $q_0$ to $q$ open $x$; in this case $\varconfseq_q(x) = \vc{o}$;
	\item
	all runs from $q_0$ to $q$ (open and) close $x$; in this case  $\varconfseq_q(x) = \vc{c}$; 
	\item
	all runs from $q_0$ to $q$ do not open $x$; in this case $\varconfseq_q(x) = \vc{u}$;
	\item
	at least one run from $q_0$ to $q$ (opens and) closes $x$ and at least one does not open $x$; in this case $\varconfseq_q(x) = \vc{d} $.
\end{itemize}

A sequential VA $A$ is \e{semi-functional for} $x$, if for every state $q$ it holds that $\varconfseq_q(x) \in \{\vc{o}, \vc{c}, \vc{u} \}$. 
We say that $A$ is \e{semi-functional for} $X$ if it is semi-functional for every $x\in X$.
\begin{example}\label{ex:equivseqsemifunc}
	The sequential VA $A$ from Example~\ref{ex:seqautomaton} is not
	semi-functional for $x$ because $\varconfseq_{q_2}(x) = \vc{d}$, as
	reflected from the runs $\rho_1$ and $\rho_2$ presented in the
	previous example.  However, the following equivalent sequential VA
	$A^\prime$ is semi-functional for $x$:
	\begin{center}
		\begin{tikzpicture}[on grid, node distance =1.5cm,every loop/.style={shorten >=0pt}]
		\node[state,initial text=,initial by arrow] (q0) {$q_0$};
		\node[state, right= of q0] (q1) {$q_1$};
		\node[state,accepting,right=of q1] (q2) {$q^{\vc{c}}_2$};
		\node[state,accepting,below=of q2] (q3) {$q^{\vc{u}}_2$};
		\path[->]
		(q0) edge [loop above] node {$\Sigma$} (q0)
		(q0) edge node[above] {$\vop{x}$} (q1)
		(q1) edge[loop above] node {$\Sigma$} (q1)
		(q1) edge node[above] {$\vcl{x}$} (q2)
		(q2) edge[loop above] node {$\Sigma$} (q2)
		(q3) edge[loop above] node [left] {$\Sigma$} (q3)
		(q0) edge[right] node[below] {$\Sigma$} (q3)
		;
		\end{tikzpicture}
	\end{center} 
	Observe that the ambiguity we had in state $q_2$ of $A$ is resolved since it is replaced with two states, each corresponding to a unique configuration.
\end{example}

We show that for every sequential VA $A$, every state $q$ of $A$ and
every variable $v$, we can compute $\varconfseq_q(v)$ efficiently, and
based on that we can translate $A$ into an equivalent sequential VA
that is semi-functional for $X$. We show that the total runtime is FPT
parameterized by $|X|$.

\def\lemsectosemi{Given a sequential VA $A$ and $X \subseteq \vars(A)$, one can construct  in $O(2^{|X|}(n+m))$ time a sequential VA $A^\prime$
	that is equivalent to $A$ and semi-functional for $X$
	where $n$ is the number of states of $A$ and $m$ is the number of its transitions.}
\begin{lemma}\label{lem:seqtosemifunc}
	\lemsectosemi
\end{lemma}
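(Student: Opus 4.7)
The plan is to build $A'$ via a product construction that augments each state of $A$ with an explicit, deterministic record of the current configuration of every variable in $X$. Concretely, the state set of $A'$ is $Q \times \{\vc{u}, \vc{o}, \vc{c}\}^X$: a pair $(q, C)$ records both the $A$-state $q$ and a configuration $C \colon X \to \{\vc{u}, \vc{o}, \vc{c}\}$ indicating, for every $x \in X$, whether $x$ has not yet been opened, is currently open, or has already been closed. The initial state is $(q_0, C_0)$ with $C_0(x) = \vc{u}$ for every $x \in X$. Letter and epsilon transitions, together with variable transitions for variables outside $X$, are lifted to every $(q, C)$ with $C$ unchanged. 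A transition $(q, \vop{x}, p)$ with $x \in X$ is lifted to $((q, C), \vop{x}, (p, C[x \mapsto \vc{o}]))$ precisely when $C(x) = \vc{u}$, and a transition $(q, \vcl{x}, p)$ with $x \in X$ is lifted to $((q, C), \vcl{x}, (p, C[x \mapsto \vc{c}]))$ precisely when $C(x) = \vc{o}$. A state $(q, C)$ is accepting exactly when $q$ is accepting in $A$ and $C(x) \neq \vc{o}$ for every $x \in X$, so no variable of $X$ is left open.

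For equivalence I would show that the forgetful projection $(q, C) \mapsto q$ is a bijection between the valid accepting runs of $A'$ on $\doc$ and the valid accepting runs of $A$ on $\doc$, preserving the induced mapping. A valid accepting $A$-run determines $C$ uniquely at every step by simulating the open/close operations on variables of $X$; validity guarantees that the guards are always satisfied and that the lifted final state is accepting in $A'$, because every opened variable is closed. Conversely, every valid accepting $A'$-run projects to a valid accepting $A$-run yielding the same mapping. For semi-functionality, observe that by construction every run ending at $(q, C)$ carries exactly the configuration $C(x) \in \{\vc{u}, \vc{o}, \vc{c}\}$ for each $x \in X$, so the extended configuration function $\varconfseq_{(q, C)}(x)$ takes that value and is never $\vc{d}$. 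Sequentiality of $A'$ is preserved because the guards on $\vop{x}$ and $\vcl{x}$ with $x \in X$ forbid opening or closing the same variable twice, while for variables outside $X$ sequentiality is inherited directly from $A$.

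Finally, for the complexity, the number of configurations is $O(3^{|X|}) = 2^{O(|X|)}$, so $A'$ has $O(2^{|X|} n)$ states and $O(2^{|X|} m)$ transitions, and the whole construction can be carried out in $O(2^{|X|}(n+m))$ time by a reachability-based unfolding from $(q_0, C_0)$ that materializes states only when encountered. The part that requires the most care is verifying that the product does not spuriously block any legitimate run of $A$: this hinges on the sequentiality of $A$, which guarantees that in every accepting $A$-run each variable is opened at most once and closed at most once with the close following the open, so the preconditions on the lifted $\vop{x}$ and $\vcl{x}$ transitions are always satisfied and the projection between runs is indeed a bijection.
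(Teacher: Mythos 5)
Your construction and its correctness argument are sound, and they take a mildly different route from the paper: you build the whole product $Q\times\{\vc{u},\vc{o},\vc{c}\}^X$ in one shot, whereas the paper processes one variable of $X$ at a time, splitting only those states $q$ with $\varconfseq_q(x)=\vc{d}$ into a $\vc{u}$-copy and a $\vc{c}$-copy and rewiring the transitions; each pass at most doubles the number of states and transitions, which yields the $O(2^{|X|}(n+m))$ bound directly. Your equivalence and semi-functionality arguments (the forgetful projection is a bijection on valid accepting runs because $C$ is determined by the run, and the recorded component forces $\varconfseq_{(q,C)}(x)\in\{\vc{u},\vc{o},\vc{c}\}$) are correct, and the guarded lifting never blocks a valid run precisely because validity guarantees the guards.

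The one genuine flaw is in the complexity step: $3^{|X|}$ is not $O(2^{|X|})$, so ``the number of configurations is $O(3^{|X|})$, hence $A'$ has $O(2^{|X|}n)$ states'' does not follow, and the bound claimed in the lemma is not established as written. It can be recovered, but only with an observation your proof omits: in a sequential VA in which every state lies on some accepting run, the status $\vc{o}$ for a variable $x$ can never coexist at a single state with $\vc{u}$ or with $\vc{c}$ (any accepting continuation from that state would otherwise have to close $x$ for one incoming run while not closing it, or closing it a second time, for the other, contradicting validity of all accepting runs). Hence the only per-variable ambiguity at a state is between $\vc{u}$ and $\vc{c}$, so after trimming $A$ to its useful states, at most $2^{|X|}$ configurations are reachable per state and your reachability-based unfolding materializes only $O(2^{|X|}(n+m))$ states and transitions. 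Without trimming and this argument, your construction gives only $O(3^{|X|}(n+m))$, which is still $2^{O(|X|)}(n+m)$ and suffices for the FPT consequences, but does not match the stated bound. This restricted ambiguity is exactly the structural fact the paper exploits to split each state into at most two copies per variable.
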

\begin{example}
	The sequential VA $A^\prime$ from Example~\ref{ex:equivseqsemifunc} can be obtained from the 
	automaton $A$ from Example~\ref{ex:seqautomaton} by
	replacing $q_2$ with two states $q^{\vc{u}}_2$ and
	$q^{\vc{c}}_2$ such that $q^{\vc{u}}_2$ corresponds with the
	paths in from $q_0$ to $q_2$ in which variable $x$ was unseen and  $q^{\vc{c}}_2$ corresponds with the
	paths in from $q_0$ to $q_2$ in which variable $x$ was closed, 
	and by changing the
	transitions accordingly.
	The algorithm from the previous Lemma generalizes this idea.
\end{example}
We refer the reader to Footnote~\ref{fn:multipleaccept} in the
definition of a VA and note that, as in the previous example, there are
cases where, to be semi-functional, a VA must have more than a single
accepting state.

If two sequential VAs are semi-functional for their common variables, their  join can be computed efficiently:
\def\lemjoinseqsemi{Given two sequential VAs $A_1$ and $A_2$ that are semi-functional for $\vars(A_1)\cap \vars(A_2)$
	one can construct in polynomial time a sequential VA $A$ that is semi-functional for $\vars(A_1)\cap \vars(A_2)$ and equivalent to ${A_1 \join A_2}$.}
\begin{lemma}\label{lem:joinseqsemi}
	\lemjoinseqsemi
\end{lemma}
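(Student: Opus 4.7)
The plan is to establish Lemma~\ref{lem:joinseqsemi} via a careful product construction that exploits the semi-functional property for shared variables in order to coordinate variable operations without blowing up the state space. The key observation is that semi-functionality assigns each state $q_i$ of $A_i$ a unique configuration $\varconfseq_{q_i}(v)\in\{\vc{u},\vc{o},\vc{c}\}$ for every shared variable $v$, so at each product state $(q_1,q_2)$ the status of $v$ on either side is determined by the current state. This allows us to decide, based purely on the current pair of states, whether an operation on $v$ must be synchronized between the two automata or performed asynchronously by one side while the other is guaranteed never to touch $v$.

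I would take the standard product $Q\df Q_1\times Q_2$ with initial state $(q_0^1,q_0^2)$ and accepting set $F_1\times F_2$. The transitions are: (i) letter transitions $((q_1,q_2),\sigma,(p_1,p_2))$ whenever $(q_1,\sigma,p_1)\in\delta_1$ and $(q_2,\sigma,p_2)\in\delta_2$ (letters are fully synchronized); (ii) epsilon and unshared-variable-operation transitions that advance only the owning side; (iii) for each shared variable $v$, \emph{sync} transitions $((q_1,q_2),o,(p_1,p_2))$ with $o\in\{\vop{v},\vcl{v}\}$ whenever $(q_1,o,p_1)\in\delta_1$ and $(q_2,o,p_2)\in\delta_2$; (iv) \emph{async} $\vop{v}$ and $\vcl{v}$ transitions on $A_1$'s side gated by the condition $\varconfseq_{q_2}(v)=\vc{u}$, and symmetrically on $A_2$'s side. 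Intuitively, the gate says that the other automaton has not yet touched $v$, so performing the operation on one side alone cannot introduce a duplicate open/close: once $A_1$ has async-opened $v$, its own configuration moves to $\vc{o}$, which automatically disables $A_2$'s async gate for $v$ from that point on, and vice versa.

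The proof of correctness rests on three verifications. First, for \emph{sequentiality}, the gates imply that every reachable product state satisfies the monotone invariant $(\varconfseq_{q_1}(v),\varconfseq_{q_2}(v))\notin\{(\vc{o},\vc{c}),(\vc{c},\vc{o})\}$, so each shared variable is opened at most once and closed at most once along any accepting run, while unshared variables inherit this property directly from $A_1$ or $A_2$. Second, for \emph{semi-functionality for shared variables}, the combined configuration of $v$ at $(q_1,q_2)$ equals the join $\max(\varconfseq_{q_1}(v),\varconfseq_{q_2}(v))$ under the order $\vc{u}<\vc{o}<\vc{c}$, which is state-determined and lies in $\{\vc{u},\vc{o},\vc{c}\}$. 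Third, for \emph{equivalence} to $A_1\join A_2$, the forward direction decomposes any valid accepting product run $\rho$ into valid accepting runs $\rho_1$ of $A_1$ and $\rho_2$ of $A_2$, with $\mu_\rho=\mu_{\rho_1}\cup\mu_{\rho_2}$ and the two mappings compatible on shared variables, since sync transitions align the shared spans and async transitions leave the opposite side's domain silent on $v$.

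The main obstacle I expect is the backward direction of the equivalence: given compatible $\mu_1\in\repspnrpar{A_1}(\doc)$ and $\mu_2\in\repspnrpar{A_2}(\doc)$ with witnessing runs $\rho_1,\rho_2$, one must interleave $\rho_1$ and $\rho_2$ into a single product run that respects the shared letter timeline, performs a sync $\vop{v}/\vcl{v}$ exactly when both $\mu_i$ assign $v$, and performs an async operation exactly when only one $\mu_i$ assigns $v$. The subtlety is that at the moment of an $A_1$-async $\vop{v}$ transition, the current state $q_2$ of $A_2$ must satisfy $\varconfseq_{q_2}(v)=\vc{u}$; this holds because $\rho_2$ does not open $v$ at all, so every prefix of $\rho_2$ reaches a state with $v$ in configuration $\vc{u}$, which by semi-functionality is a property of the state alone. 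Once this scheduling argument is in place, the bound of $|Q_1|\cdot|Q_2|$ states and polynomially many transitions, together with the polynomial-time computability of the $\varconfseq_{q_i}$ functions, yields the desired polynomial-time construction.
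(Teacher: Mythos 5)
There is a genuine gap in the way you synchronize operations on shared variables that \emph{both} operands bind. Your sync transitions require that, at the moment the product executes $\vop{v}$, both components have a $\vop{v}$-transition available from their current states. This forces the two automata to perform the operations on commonly bound shared variables in the same relative order within each block of variable operations at a given document position, which need not hold. Concretely, let $A_1$ and $A_2$ be the (functional, hence semi-functional) VAs obtained from the regex formulas $\bind{v}{\bind{u}{\mathsf{a}}}$ and $\bind{u}{\bind{v}{\mathsf{a}}}$, and let $\doc=\mathsf{a}$. Each spanner returns exactly the mapping $\{u\mapsto\mspan{1}{2},\,v\mapsto\mspan{1}{2}\}$, so $\repspnrpar{A_1\join A_2}(\doc)$ is nonempty. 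But in your product, at the initial pair only $\vop{v}$ is available on the $A_1$-side and only $\vop{u}$ on the $A_2$-side, so no sync transition fires; and whichever async transition you take first (its $\vc{u}$-gate is indeed satisfied) permanently disables the other component from ever opening that variable, even though that component must open it to reach an accepting state. Hence your product accepts nothing on $\doc$: the construction is sound but incomplete, and your backward (completeness) argument breaks exactly at the point where you interleave $\rho_1$ and $\rho_2$.

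The fix is the one the paper relies on --- implicitly, via its citation of the product construction of Freydenberger et al.\ for functional VAs, and explicitly in the appendix proof of Theorem~\ref{thm:thmdiffseqsync}: synchronize on \emph{variable configurations} rather than on individual transitions. Product states are pairs of states with consistent configurations; a product transition reads a letter and then moves each component through its variable-$\emptyword$-closure to a new consistent pair, emitting the required variable operations in one canonical order via a chain of helper states, so that the internal order in which each component performs its block of operations is immaterial. Your other ingredient --- allowing one side to operate alone on a shared variable $v$ when the other side's configuration for $v$ is $\vc{u}$, justified by the observation that a run of a semi-functional VA that never opens $v$ visits only $\vc{u}$-states --- is correct, and it is exactly the extra relaxation of ``consistency'' that the schemaless setting requires on top of the functional-case construction; it just needs to be grafted onto the configuration-level synchronization rather than onto transition-level synchronization.
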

The proof of this Lemma uses the same product construction as that for functional VAs presented by Freydenberger et al.~\cite[Lemma 3.10]{DBLP:conf/pods/FreydenbergerKP18}.
What allow us to use the same construction is (a) the fact it ignores the non-common variables and (b) the fact we can treat both $A_1$ and $A_2$ as functional VAs over $\vars(A_1)\cap \vars(A_2)$.

We can now move to compose the proof of Lemma~\ref{lem:fptjoin}:
Given two sequential VAs $A_1$ and $A_2$, we invoke the algorithm from
Lemma~\ref{lem:seqtosemifunc} and obtain two equivalent sequential VAs
$\tilde{A}_1$ and $\tilde{A}_2$, respectively, such that each
$\tilde{A}_i$ is semi-functional for $\vars(A_1)\cap \vars(A_2)$. 
Then, we use Lemma~\ref{lem:joinseqsemi} to join
$\tilde{A}_1 $ and $\tilde{A}_2$.
Note that the runtime is indeed FPT parametrized by $\vars(A_1)\cap \vars(A_2)$.

\subsection{Restricting to Disjunctive Functional} 
Another approach to obtain a tractable evaluation of the join is by restricting the syntax of the regex formulas while preserving expressiveness. 
A regex formula $\gamma$ is said to be \emph{disjunctive functional} if it is a finite disjunction of functional regex formula $\gamma_1,\ldots, \gamma_n$.  
We denote the class of disjunctive functional regex formulas as $\disfuncrgxc$.

Note that every disjunctive functional regex formula is also sequential. However, the regex formula $
z\{\Sigma^* \} \cdot ( x\{\Sigma^* \} \vee   y \{\Sigma^* \} )  
$ is sequential, yet it is not disjunctive functional. 
It also holds that every functional regex formula is disjunctive functional regex formula with a single disjunct. 
We can therefore conclude that we have the following:
$$
\funcrgxc \subsetneq \disfuncrgxc \subsetneq \seqrgxc
$$	
Note that here we treat the regex formulas as syntactic objects.

Equivalently, a \e{disjunctive functional} VA $A$ is the sequential VA whose states are the disjoint union of the states of a finite number $n$ of functional VAs $A_1,\ldots,A_n$ and whose transitions are those of $A_1,\ldots,A_n$, with the addition of a new initial state $q_0$ that is connected with epsilon transitions to each of the initial states of the $A_i$'s. 
Notice that being disjunctive functional is only a syntactic restriction and not semantic as can be concluded from the following proposition. 
\def\propseqtodis{The following hold:
	\begin{enumerate}
		\item 
		For every sequential regex formula there exists an 
		equivalent disjunctive functional regex formula.
		\item 
		For every sequential  VA there exists an 
		equivalent disjunctive functional VA.
	\end{enumerate} }
	\begin{proposition}\label{prop:seqtodis}
		\propseqtodis
	\end{proposition}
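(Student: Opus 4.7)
The plan is to give two explicit constructions, one at the syntactic level for regex formulas and an analogous one at the automaton level for VAs, each decomposing a sequential object into a finite disjunction indexed by the possible domains of its output mappings.

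For part~(1), I would proceed by structural induction on a sequential regex formula $\alpha$ and, for each $V \subseteq \vars(\alpha)$, define a regex formula $\alpha[V]$ recursively. For variable-free $\alpha$ (including $\alpha_0^*$, which by sequentiality forces $\vars(\alpha_0)=\emptyset$), set $\alpha[\emptyset] := \alpha$ and $\alpha[V] := \emptyset$ otherwise. For disjunction, take $(\alpha_1 \vee \alpha_2)[V] := \alpha_1[V] \vee \alpha_2[V]$, with $\alpha_i[V]$ understood to be $\emptyset$ when $V \not\subseteq \vars(\alpha_i)$. For concatenation, sequentiality gives $\vars(\alpha_1) \cap \vars(\alpha_2) = \emptyset$, so the split $V = (V \cap \vars(\alpha_1)) \cup (V \cap \vars(\alpha_2))$ is forced, and I would define $(\alpha_1 \cdot \alpha_2)[V] := \alpha_1[V \cap \vars(\alpha_1)] \cdot \alpha_2[V \cap \vars(\alpha_2)]$. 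For variable binding, set $(x\{\alpha_0\})[V] := x\{\alpha_0[V \setminus \{x\}]\}$ when $x \in V$, and $\emptyset$ otherwise; the sequentiality condition $x \notin \vars(\alpha_0)$ ensures the recursion is well-formed. A routine simultaneous induction then yields two invariants: $\alpha[V]$ is functional for $V$, and $\repspnrpar{\alpha[V]}(\doc) = \{\mu \in \repspnrpar{\alpha}(\doc) : \dom(\mu) = V\}$ for every $\doc$. Consequently $\alpha \equiv \bigvee_{V \subseteq \vars(\alpha)} \alpha[V]$ is the desired disjunctive functional regex formula.

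For part~(2), I would leverage Lemma~\ref{lem:seqtosemifunc}. Given a sequential VA $A$, apply the lemma with $X = \vars(A)$ to obtain an equivalent sequential VA $A'$ that is semi-functional for all of $\vars(A)$. Every state $q$ of $A'$ then carries a unique configuration $\varconfseq_q(x) \in \{\vc{o},\vc{c},\vc{u}\}$, and run validity forces $\varconfseq_q(x) \ne \vc{o}$ at every accepting $q$, so the set $V_q := \{x : \varconfseq_q(x) = \vc{c}\}$ is well-defined there. For each $V \subseteq \vars(A)$, let $A_V$ be $A'$ with its accepting set restricted to $\{q : V_q = V\}$; by construction every accepting run of $A_V$ opens and closes exactly the variables in $V$, so $A_V$ is functional for $V$. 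Gluing the nonempty $A_V$ at a fresh initial state via epsilon transitions yields a disjunctive functional VA equivalent to $A$.

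The main obstacle is establishing the simultaneous inductive invariant in part~(1), especially in the concatenation and variable-binding cases, where sequentiality is used essentially: it forces the split of $V$ across a concatenation to be unique, and it prevents collisions when recursing into $x\{\alpha_0\}$. For part~(2), the bulk of the work has already been absorbed into Lemma~\ref{lem:seqtosemifunc}, and what remains is the routine observation that semi-functionality yields a well-defined $V_q$ at every accepting state and that partitioning the accepting set by $V_q$ preserves the extraction semantics.
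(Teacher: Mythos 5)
Your proof is correct, but it takes a genuinely different route from the paper's for part~(1) and a more concrete one for part~(2). For part~(1), the paper does not slice by domains: it recursively computes a finite \emph{set} $A(\alpha)$ of functional disjuncts by resolving every disjunction whose branches contain variables into separate disjuncts (keeping variable-free disjunctions intact, taking Cartesian products across concatenations, and pushing the construction under bindings), and then shows $\repspnrpar{\alpha}=\bigvee_{\gamma\in A(\alpha)}\repspnrpar{\gamma}$. Your decomposition $\alpha\equiv\bigvee_{V\subseteq\vars(\alpha)}\alpha[V]$ indexed by output domains is an equally valid alternative; it handles the Kleene-star case more cleanly (the paper's clause for $\alpha_1^*$ as written produces infinitely many disjuncts unless one invokes sequentiality to note $\vars(\alpha_1)=\emptyset$ and keep $\alpha_1^*$ intact, which your variable-free base case does automatically), and it merges disjuncts with equal domains that the paper's construction keeps separate. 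Two small points to tighten: the semantic invariant must be carried through the induction at the level of $\prepspnr{\alpha[V]}(\doc)$ (pairs of spans and mappings), not only $\repspnrpar{\alpha[V]}(\doc)$, since the concatenation semantics quantifies over all intermediate split points and the full-span statement alone gives the inductive step nothing to work with; and you should prune the $\emptyset$ sub-formulas your construction introduces (using $\beta\vee\emptyset\equiv\beta$ and $\beta\cdot\emptyset\equiv\emptyset$) so that each $\alpha[V]$ is literally functional for $V$ under the paper's syntactic definition. For part~(2), the paper partitions accepting runs by the exact set of variables they use just as you do, but only gestures at computing each $A_V$ ``by a BFS''; routing through Lemma~\ref{lem:seqtosemifunc} and reading $V_q$ off the configurations of accepting states is a cleaner way to make that step precise---just take disjoint copies of the $A_V$ and trim each so that no transition on a variable outside $V$ survives, which is what makes $A_V$ functional in the syntactic sense required by the definition of a disjunctive functional VA.
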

	Since $\funcrgxc$ corresponds with schema-based spanners whereas $\seqrgxc$ with schemaless and due to the previous proposition we can conclude the following: 
	$$
	\repspnrpar{\funcrgxc} 
	\subsetneq \repspnrpar{\disfuncrgxc} =
	\repspnrpar{\seqrgxc} 
	$$
	Note that here we refer to the schemaless spanners represented by the regex formulas.
	
	\begin{example}\label{ex:expblowup}
		Consider the following sequential regex formula:
		$$
		(x_1\{\Sigma^*\} \vee y_1 \{\Sigma^* \})\cdots (x_n\{\Sigma^*\} \vee y_n \{\Sigma^* \})
		$$
		Note that if we want to translate it into an equivalent disjunctive functional regex formula then we need at least one disjunct for each possible combination $z_1\{\Sigma^*\} \cdots z_n \{\Sigma^* \}$ where $z_i \in \{x_i,y_i \}$. This implies a lower bound on the length of the shortest equivalent disjunctive functional regex formula.
		Similarly, let us consider the following sequential VA:
		\begin{center}
			\begin{tikzpicture}[on grid, node distance =1.45cm,every loop/.style={shorten >=0pt}]
			\node[state,initial text=,initial by arrow] (q0) {$q_0$};
			\node[state,above right= of q0] (q11) {};
			\node[state,below right=of q11] (q1) {$q_1$};
			\node[state,below right= of q0] (p11) {};
			
			\node[above right= of q1] (q22) {};
			\node[below right= of q1] (p22) {};	
			\node[right=of q1] (qdots) {$\cdots$};
			\node[state,right=of qdots] (qnmin1) {$\small{q_{n-1}}$};
			
			\node[state,above right= of qnmin1] (qnn) {};
			\node[state,below right= of qnmin1] (pnn) {};
			\node[state, accepting,below right=of qnn] (qf) {$q_f$};
			
			\path[->]
			(q0) edge node[above left] {$\vop{x_1}$} (q11)
			(q0) edge node[below left] {$\vop{y_1}$} (p11)
			(q11) edge[loop above] node {$\Sigma$} (q11)
			(q11) edge node[above right] {$\vcl{x_1}$} (q1)
			(p11) edge node[below right] {$\vcl{y_1}$} (q1)
			(p11) edge[loop above] node {$\Sigma$} (p11)

			(qnmin1) edge node[above left] {$\vop{x_n}$} (qnn)
			(qnmin1) edge node[below left] {$\vop{y_n}$} (pnn)
			(qnn) edge[loop above] node {$\Sigma$} (qnn)
			(qnn) edge node[above right] {$\vcl{x_n}$} (qf)
			(pnn) edge node[below right] {$\vcl{y_n}$} (qf)
			(pnn) edge[loop above] node {$\Sigma$} (pnn)
			
			;
			\end{tikzpicture}
		\end{center} 
		An equivalent disjunctive functional VA has at least $2^n$
		accepting states since the states encode the variable
		configurations.
	\end{example}
	
	We record this in the following proposition.
	\def \propexpblowupseqtodisj {For every natural number $n$ the following hold:
		\begin{cenumerate}
			\item 
			There exists a sequential regex formula $\gamma$ of that is the concatenation of $n$ regex formulas of constant length such that each of its equivalent disjunctive functional regex formulas includes at least $2^n$ disjuncts.
			\item
			There exists a sequential VA $A$ with $3n+1$ states such that each of its equivalent disjunctive functional VA has at least $2^n$ states.
		\end{cenumerate}}
		\begin{proposition}~\label{prop:expblowupseqtodisj}
			\propexpblowupseqtodisj
		\end{proposition}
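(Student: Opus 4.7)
The plan is to use the two concrete spanners already exhibited in Example~\ref{ex:expblowup} and obtain the lower bounds purely by counting how many distinct \emph{domains} appear among the output mappings. For the sequential regex formula
\[
\gamma \df (x_1\{\Sigma^*\} \vee y_1\{\Sigma^*\}) \cdots (x_n\{\Sigma^*\} \vee y_n\{\Sigma^*\}),
\]
observe that for any sufficiently long document $\doc$, each mapping $\mu \in \repspnrpar{\gamma}(\doc)$ has domain of the form $\{z_1,\dots,z_n\}$ with $z_i\in\{x_i,y_i\}$, because the concatenation forces each disjunction to commit to exactly one of the two variables. Moreover, every one of the $2^n$ such sets is realized as a domain by some $\mu\in\repspnrpar{\gamma}(\doc)$, by choosing appropriate spans inside each factor. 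Thus the image of $\repspnrpar{\gamma}(\doc)$ contains mappings of exactly $2^n$ distinct domains. The analogous observation holds verbatim for the sequential VA $A$ depicted in the example, whose accepting runs are in bijection with the choices $z_i\in\{x_i,y_i\}$.

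The key auxiliary fact I would then establish is that every \emph{functional} regex formula $\gamma_0$ produces, on any document, only mappings whose domain is exactly $\vars(\gamma_0)$, and similarly every functional VA $A_0$ produces only mappings whose domain equals $\vars(A_0)$. For regex formulas this follows by a straightforward induction on the structure, using the inductive definition of ``functional for $V$'' from the preliminaries, which forces every parse to open and close each variable of $V$ exactly once. For VAs it is immediate from the definition, which requires every accepting run to include all variables in $\vars(A_0)$ and be valid. Consequently, a disjunctive functional regex formula $\gamma_1\vee\cdots\vee\gamma_k$ produces mappings of at most $k$ distinct domains (namely $\vars(\gamma_1),\dots,\vars(\gamma_k)$), and a disjunctive functional VA built as the disjoint union of $k$ functional VAs $A_1,\dots,A_k$ (glued at a fresh initial state by $\epsilon$-transitions, per the definition in the paper) produces mappings of at most $k$ distinct domains $\vars(A_1),\dots,\vars(A_k)$.

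Combining the two observations yields both parts of the proposition. For (1), since $\repspnrpar{\gamma}$ realizes $2^n$ distinct domains, any equivalent disjunctive functional regex formula must have at least $2^n$ disjuncts. For (2), any equivalent disjunctive functional VA must consist of at least $2^n$ functional sub-VAs $A_i$; since each $A_i$ must have at least one accepting state in order to contribute any output mapping, the total number of states is at least $2^n$. The main obstacle is keeping the ``one fixed domain per functional object'' statement honest in the schemaless setting: one has to check carefully that the paper's definition of a functional VA (``all accepting runs are valid and include all of its variables'') genuinely rules out runs with partial domains, so that a disjunctive functional VA really can realize only as many output domains as it has functional components. Once that observation is pinned down, the counting argument is immediate and the state lower bound follows by charging one accepting state to each component.
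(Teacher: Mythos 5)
Your proof is correct and takes essentially the same route as the paper, whose argument (given in Example~\ref{ex:expblowup}) uses the same witness spanners and the same counting principle: each functional disjunct or functional component can realize only the single output domain given by its variable set, while the sequential formula and VA realize $2^n$ distinct domains, forcing $2^n$ disjuncts and, via disjointness of the components, $2^n$ states. You simply make explicit the auxiliary fact (functional regex formulas and VAs have a fixed output domain) that the paper leaves implicit.
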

		That is, the translation from sequential to disjunctive functional might necessitate an exponential blow-up.
		Although the translation cannot be done efficiently in the general case, 
		the advantage of using disjunctive functional VAs lies in the fact that we can compile the join of two disjunctive functional VAs efficiently into a disjunctive functional VA.
		\def\joindisj{Given two disjunctive functional VAs $A_1$ and $A_2$, one can construct in polynomial time a disjunctive functional VA $A$ that is equivalent to ${A_1 \join A_2}$.} 
		\begin{proposition}\label{thm:joindisj}
			\joindisj
		\end{proposition}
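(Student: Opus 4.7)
The plan is to exploit the fact that disjunctive functional VAs are, by definition, a structured union of functional VAs, and that the natural join distributes over union. Concretely, write $A_1$ as the disjunction of functional VAs $B^{(1)}_1,\ldots,B^{(1)}_{n_1}$ and $A_2$ as the disjunction of functional VAs $B^{(2)}_1,\ldots,B^{(2)}_{n_2}$; each $B^{(k)}_i$ is identifiable from the epsilon transitions leaving the initial state of $A_k$, so the decomposition is done in linear time.

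Next I would verify that $\repspnrpar{A_1 \join A_2}$ equals $\bigcup_{i,j} \repspnrpar{B^{(1)}_i \join B^{(2)}_j}$. This is a direct unfolding of the definitions: a mapping $\mu_1 \cup \mu_2$ witnessing membership in the left-hand side arises from some valid accepting run of $A_1$ and some valid accepting run of $A_2$; each such run must, after the leading epsilon transition, live entirely inside one of the functional components, so $\mu_1 \in \repspnrpar{B^{(1)}_i}(\doc)$ and $\mu_2 \in \repspnrpar{B^{(2)}_j}(\doc)$ for some $i,j$. The reverse inclusion is even more immediate.

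With the distributivity in hand, for each of the $n_1 \cdot n_2$ pairs $(i,j)$ I invoke the known product construction of Freydenberger et al.\ for joining two \emph{functional} VAs, which produces in polynomial time a functional VA $C_{i,j}$ equivalent to $B^{(1)}_i \join B^{(2)}_j$. Here I would note that although our semantics is schemaless, when both operands are functional, $\repspnrpar{B^{(1)}_i} \join \repspnrpar{B^{(2)}_j}$ coincides with the schema-based join that the existing construction is proved to compute, so it applies verbatim.

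Finally I assemble a disjunctive functional VA $A$ by introducing a fresh initial state $q_0$ and adding an epsilon transition from $q_0$ to the initial state of each $C_{i,j}$, keeping the $C_{i,j}$ otherwise disjoint. By construction $A$ is disjunctive functional, and by the distributivity identity it is equivalent to $A_1 \join A_2$. The total size of $A$ is $O(n_1 n_2)$ times the sizes produced by the functional-join construction, so the whole procedure runs in polynomial time. The only subtlety worth flagging is making the components state-disjoint and ensuring no run of the combined automaton can ``cross'' components; this is automatic because the only entry points of each $C_{i,j}$ are its initial states, reached exclusively through the new epsilon transitions from $q_0$.
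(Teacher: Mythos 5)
Your proposal is correct and matches the paper's own argument: the paper likewise decomposes each disjunctive functional VA into its functional components, performs the $n_1\cdot n_2$ pairwise joins using the product construction of Freydenberger et al.\ (Lemma~3.10), and reassembles the results into a single disjunctive functional VA. The additional details you supply (distributivity of join over union, the coincidence of schemaless and schema-based join on functional operands, and the disjointness of components) are all correct and merely spell out what the paper leaves implicit.
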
 
		To prove this we can perform a pairwise join between the set of
		functional components of $A_1$ and those of $A_2$ and obtain a set of
		functional VAs for the
		join~\cite[Lemma~3.10]{DBLP:conf/pods/FreydenbergerKP18}.
		
		Since disjunctive functional is a restricted type of sequential VA, we conclude the following.
		\begin{corollary}\label{cor:disjunctive-func-pdelay}
			Given two disjunctive functional VAs $A_1$ and $A_2$ and a input document
			$\doc$, one can enumerate the mappings of
			$\repspnrpar{A_1\join A_2} (\doc)$ in polynomial delay.
		\end{corollary}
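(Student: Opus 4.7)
My plan is to chain together the two results that have just been established, observing that disjunctive functional VAs sit strictly inside the class of sequential VAs for which polynomial-delay enumeration is already known.

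First I would apply Proposition~\ref{thm:joindisj} to the inputs $A_1$ and $A_2$, obtaining in polynomial time a single disjunctive functional VA $A$ that is equivalent to $A_1 \join A_2$, i.e., such that $\repspnrpar{A} \equiv \repspnrpar{A_1 \join A_2}$. In particular, the size of $A$ is polynomial in the combined size of $A_1$ and $A_2$. Next, I would invoke the definition of disjunctive functional VA: by construction such a VA is the disjoint union of functional VAs hooked up to a fresh initial state via epsilon transitions, and every functional VA is sequential, so the disjoint union together with epsilon branching is itself a sequential VA (every accepting run goes through exactly one functional component and is therefore valid). Hence $A$ is a sequential VA of polynomial size in $A_1$ and $A_2$.

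Finally I would feed the pair $(A,\doc)$ into the enumeration algorithm provided by Theorem~\ref{thm:enumerationseq}, which enumerates $\repspnrpar{A}(\doc)$ with polynomial delay in the size of $A$ and $\doc$. Since $|A|$ is polynomial in $|A_1|+|A_2|$, this delay is polynomial in the size of the original input $(A_1,A_2,\doc)$, and the enumerated set is exactly $\repspnrpar{A_1 \join A_2}(\doc)$, as required.

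There is essentially no obstacle in this argument: the real work has already been carried out in Proposition~\ref{thm:joindisj} (polynomial-time compilation preserving the disjunctive functional shape) and in Theorem~\ref{thm:enumerationseq} (polynomial-delay enumeration for sequential VAs). The only point that merits a brief check is that the enumeration is duplicate-free, which follows because Theorem~\ref{thm:enumerationseq} is stated as an enumeration of the set $\repspnrpar{A}(\doc)$; any repeated mappings produced by different accepting runs of $A$ are suppressed inside that algorithm, so no separate deduplication step is needed here.
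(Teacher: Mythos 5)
Your proposal is correct and follows exactly the paper's route: compile the join into a single disjunctive functional VA via Proposition~\ref{thm:joindisj}, observe that disjunctive functional VAs are a restricted type of sequential VA, and then invoke the polynomial-delay enumeration of Theorem~\ref{thm:enumerationseq}. The paper states this in one line after Proposition~\ref{thm:joindisj}, so no further comparison is needed.
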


\section{The Difference Operator}\label{sec:diff}

When we consider the class of functional VAs,
we know that we can compile all of the positive operators
efficiently (i.e., in polynomial time) into a functional
VA~\cite{DBLP:conf/pods/FreydenbergerKP18}. In the case of
NFAs or regular expressions, compiling the complement into an NFA
necessitates an exponential blowup in
size~\cite{DBLP:journals/jalc/EllulKSW05,DBLP:journals/tcs/Jiraskova05}.
Since NFAs and regular expressions are the Boolean functional
VA and Boolean regex formulas, respectively, we conclude
that constructing a VA that is equivalent to the
difference of two functional VAs, or two functional regex
formulas, entails an exponential blowup.  Therefore, the static compilation
fails to yield tractability results for the difference.  

In the case of NFAs and regular expressions, the membership of a
string in the difference can be tested in polynomial time. In
contrast, the following theorem states that, for functional regex
formulas (and VAs), this is no longer true under the
conventional complexity assumption $\Ptime \ne \NP$.

\def\thmdiffrgxNPcomp{	The following problem is $\NP$-complete. Given two functional regex formulas $\gamma_1$ and $\gamma_2$ with $\vars(\gamma_1) = \vars(\gamma_2)$ and an input document $\doc$, 
	is	$\repspnrpar{\gamma_1 \setminus \gamma_2}(\doc)$ nonempty?  }
\begin{theorem}~\label{thm:diffrgxNPcomp}
	\thmdiffrgxNPcomp
\end{theorem}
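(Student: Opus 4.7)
The plan is to show NP-completeness in two parts. Membership in $\NP$ is routine: a candidate mapping $\mu$ has size polynomial in $|\doc|$, and we can verify in polynomial time both $\mu\in\repspnrpar{\gamma_1}(\doc)$ and $\mu\notin\repspnrpar{\gamma_2}(\doc)$ using the standard dynamic-programming membership test for regex formulas (equivalently, by converting to vset-automata and simulating them while constraining transitions to agree with $\mu$). Crucially, since $\gamma_1$ and $\gamma_2$ are both functional for the same variable set $V$, every mapping produced by either has domain $V$; hence compatibility in the definition of difference collapses to equality, so $\mu\in\repspnrpar{\gamma_1\setminus\gamma_2}(\doc)$ iff $\mu\in\repspnrpar{\gamma_1}(\doc)$ and $\mu\notin\repspnrpar{\gamma_2}(\doc)$.

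For NP-hardness, I would reduce from 3SAT. Given a formula $\varphi = C_1\wedge\cdots\wedge C_m$ over Boolean variables $x_1,\ldots,x_n$, pick $2n$ distinct letters and set
$$\doc \df \mathtt{a}_1\mathtt{b}_1\mathtt{a}_2\mathtt{b}_2\cdots\mathtt{a}_n\mathtt{b}_n,$$
and introduce one capture variable $y_i$ per $x_i$, with the convention that $y_i$ covering $\mathtt{a}_i$ encodes $\tau(x_i)=\true$ and $y_i$ covering $\mathtt{b}_i$ encodes $\tau(x_i)=\false$. Define
$$\gamma_1 \df \bigl(y_1\{\mathtt{a}_1\}\mathtt{b}_1 \vee \mathtt{a}_1 y_1\{\mathtt{b}_1\}\bigr)\cdots\bigl(y_n\{\mathtt{a}_n\}\mathtt{b}_n \vee \mathtt{a}_n y_n\{\mathtt{b}_n\}\bigr),$$
which is functional for $\{y_1,\ldots,y_n\}$ and whose $2^n$ mappings biject with truth assignments. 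For each clause $C_j$, build $\gamma_{C_j}$ from the same $n$-slot concatenation but, in the three slots indexed by the variables of $C_j$, hard-wire the $y_i$-capture to the \emph{unique} choice that falsifies the corresponding literal of $C_j$; in the remaining $n-3$ slots keep the free disjunction from $\gamma_1$. Set $\gamma_2 \df \gamma_{C_1}\vee\cdots\vee\gamma_{C_m}$. Each $\gamma_{C_j}$ captures every $y_i$ exactly once, so every disjunct is functional for $\{y_1,\ldots,y_n\}$; hence so is $\gamma_2$, and $\vars(\gamma_1)=\vars(\gamma_2)$.

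Correctness is then a direct bookkeeping: $\repspnrpar{\gamma_2}(\doc)$ is exactly the set of encodings of assignments that falsify at least one clause (with one mapping per choice of falsified clause and per completion of the free variables), while $\repspnrpar{\gamma_1}(\doc)$ enumerates all assignments. Combined with the first-paragraph observation, this yields $\repspnrpar{\gamma_1\setminus\gamma_2}(\doc)\neq\emptyset$ iff $\varphi$ is satisfiable, finishing the reduction.

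The main subtlety I anticipate is the design of the clause gadgets $\gamma_{C_j}$: they must simultaneously remain functional (each $y_i$ captured in exactly one slot), enforce the three falsifying captures dictated by $C_j$, and range freely over the other $n-3$ variables. The positional layout of $\doc$---one dedicated $\mathtt{a}_i\mathtt{b}_i$ slot per Boolean variable---is exactly what makes this possible, since it removes any ambiguity in how $y_i$ can align to the document and ensures the syntactic functionality conditions are met slot-by-slot.
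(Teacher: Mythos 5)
Your proposal is correct and follows essentially the same route as the paper's proof: a reduction from 3SAT in which the document provides one dedicated slot per Boolean variable, $\gamma_1$ is a concatenation of two-way disjunctions enumerating all assignments, $\gamma_2$ is a disjunction of per-clause gadgets that hard-wire the falsifying captures, and the fact that both formulas are functional over the same variable set collapses compatibility to equality (the paper uses $\doc=\mathtt{a}^n$ with $\epsilon$-versus-$\mathtt{a}$ captures where you use a two-letter slot per variable, but this is immaterial). The only point to fix is your use of $2n$ distinct letters, which conflicts with the fixed finite alphabet; since the slot alignment is already forced by each factor consuming exactly two symbols, taking $\doc=(\mathtt{ab})^n$ works verbatim.
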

\begin{proof}
	Membership in $\NP$ is straightforward: for functional regex formulas, membership can be decided in polynomial time~\cite{DBLP:conf/icdt/Freydenberger17}. Hence, 
	we focus on $\NP$-hardness.  We use a
	reduction from 3SAT as in the proof of
	Theorem~\ref{thm:joinseqhard}. Here, however, we are restricted to
	functional regex formulas and therefore we cannot use the domains of
	the resulting mappings to encode the assignments.  Recall that the
	input is a formula $\varphi$ with the free variables
	$x_1,\ldots,x_n$ such that $\varphi$ has the form $C_1 \wedge \cdots
	\wedge C_m$, where each $C_i$ is a clause.  In turn, each clause is
	a disjunction of three literals, where a literal has the form $x_i$
	or $\neg x_i$.  Given a 3CNF formula, we construct two functional
	regex formulas $\gamma_1$ and $\gamma_2$, and an input document
	$\doc$, such that there is a satisfying assignment for $\varphi$ if
	and only if $\repspnrpar{\gamma_1 \join
		\gamma_2}(\doc)\ne\emptyset$.
	
	We begin with the document $\doc$, which is defined by
	$\doc\df\mathtt{a}^n$. The regex formulas $\gamma_1$ and $\gamma_2$
	are constructed as follows.  We associate every free variable $x_i$
	with a capture variable $x_i$.  We start by defining the auxiliary
	regex formulas $$\beta_i \df ((\bind{x_i}{\emptyword}\cdot
	\mathtt{a})\vee \bind{x_i}{ \mathtt{a}})$$ for $1\le i \le n$ and
	then define
	\begin{equation*}
		\gamma_1\df \beta_1 \cdots \beta_n
	\end{equation*}
	Intuitively, $\gamma_1$ encodes all of the legal assignments for $\varphi$ in such a way that if $x_i$ captures the substring `$\mathtt{a}$' then it corresponds with assigning $\true$ to the free variable $x_i$, and otherwise (in case it captures $\emptyword$), it corresponds with assigning to it $\false$.
	Before defining $\gamma_2$, for each $1 \le i \le m$  we denote the indices of the literals that appear in $C_i$ by $i_1<i_2<i_3$ and define $\gamma_2^i$ as follows:
	\begin{align*}
		\gamma_2^i = 
		&\beta_1\cdots\beta_{i_1-1}\cdot 
		\delta_{i_1}\cdot
		\beta_{i_1+1}\cdots \beta_{i_2-1}\cdot
		\delta_{i_2}\cdot\\
		&\beta_{i_2+1}\cdots\beta_{i_3-1}\cdot
		\delta_{i_3}\cdot
		\beta_{i_3+1}\cdots\beta_{n}
	\end{align*}
	where $\delta_{\ell}$ is defined as $(\bind{x_{\ell}}{\emptyword}\cdot \mathtt{a})$ if $x_{\ell}$ appears as a literal in $C_i$ or as $( \bind{x_{\ell}}{ \mathtt{a}})$ if $\neg x_{\ell}$ appears as a literal in $C_i$
	Intuitively, $\gamma_2^i$ encodes the assignments for which clause $C_i$ is not satisfied.
	We then set $$\gamma_2 \df \bigvee_{1\le i \le m} \gamma_2^i\,.$$

	To emphasize the differences between this reduction and that in the proof of Theorem~\ref{thm:diffrgxNPcomp}, we  consider the same formula:
	$$
	\varphi = (x \vee y \vee z) \wedge ( \neg x \vee y \vee \neg z)
	$$
	We have $\doc \df \mathtt{a}^3$ since we have three variables $\{x,y,z\}$ and 
	$$
	\gamma_1 = \Big((\bind{x}{\emptyword}\cdot \mathtt{a})\vee \bind{x}{ \mathtt{a}}\Big)
	\cdot \Big((\bind{y}{\emptyword}\cdot \mathtt{a})\vee \bind{y}{ \mathtt{a}}\Big) \cdot
	\Big((\bind{z}{\emptyword}\cdot \mathtt{a})\vee \bind{z}{ \mathtt{a}}\Big)
	$$
	For the first clause we have 
	$$
	\gamma_2^1 \df  (\bind{x}{\emptyword}\cdot \mathtt{a})\cdot (\bind{y}{\emptyword}\cdot \mathtt{a}) \cdot (\bind{z}{\emptyword}\cdot \mathtt{a})
	$$
	and for the second 
	$$
	\gamma_2^2 \df  (\bind{x}{ \mathtt{a}})\cdot (\bind{y}{\emptyword}\cdot \mathtt{a}) \cdot (\bind{z}{ \mathtt{a}})
	$$
	It is left to show that $\repspnrpar{\gamma_1 \setminus
		\gamma_2}(\doc) \ne \emptyset$ if and only if $\varphi$ has a satisfying
	assignment.  Note that for every assignment $\mu \in
	\repspnrpar{\gamma_1}(\doc)$ and for every $1 \le j \le n$, it
	holds that $\mu(x_j)$ is either $\mspan{j}{j}$ or
	$\mspan{j}{j+1}$.  Note also that the same is true also for
	$\mu \in \repspnrpar{\gamma_2}(\doc)$.  Let us assume that
	there exists a satisfying assignment $\tau$ for $\varphi$.  We
	define $\mu$ to be the mapping that is defined as
	follows: $\mu(x_i) \df \mspan{i}{i}$ if $\tau(x_i) = \false$
	and $\mu(x_i) \df \mspan{i}{i+1}$, otherwise (if $\tau(x_i) =
	\true$).  It then follows immediately from the definition of
	$\gamma_2$ that $\mu \in \repspnrpar{\gamma_1 \setminus
		\gamma_2}(\doc)$.  On the other hand, assume that $\mu \in
	\repspnrpar{\gamma_1 \setminus \gamma_2}(\doc)$.  We can
	define an assignment $\tau$ is such a way that $\tau(x_i) =
	\true$ if $\mu(x_i) = \mspan{i}{i+1}$ and $\tau(x_i) = \false$
	otherwise (if $\mu(x_i) = \mspan{i}{i}$).  It follows directly
	from the way we defined $\gamma_1$ and $\gamma_2$ that $\tau$
	is a satisfying assignment for $\varphi$.

	In our example, the assignment $\tau$ defined by $\tau(x)=\tau(y) = \true$ and $\tau(z) = \false$ is a satisfying assignment. Indeed, the mapping $\mu$ corresponds to this assignment that is defined by $\mu(x) = \mspan{1}{2}$, $\mu(y) = \mspan{2}{3}$ and $\mu(z) = \mspan{3}{3}$ is in $\repspnrpar{\gamma_1}(\mathtt{a}^n)$ but is not in $\repspnrpar{\gamma_2} (\mathtt{a}^n)$ since either (a) $\mu(x) = \mspan{1}{1}$ and $\mu(y) = \mspan{2}{2}$) or  (b) $\mu(x) = \mspan{1}{2}$ and $\mu(y) = \mspan{2}{2}$.
	Note also that the assignment $\mu$ defined by $\mu(x) = \mspan{1}{2}$, $\mu(y) = \mspan{2}{3}$ and $\mu(z)=\mspan{3}{4}$ is in $\repspnrpar{\gamma_1 \diff \gamma_2}(\mathtt{a}^n)$ since it is in $\repspnrpar{\gamma_1 }(\mathtt{a}^n)$ and not in $\repspnrpar{\gamma_2}(\mathtt{a}^n)$. Indeed, the assignment $\tau$ for which $\tau(x) = \tau(y) = \tau(z)$ is a satisfying assignment for $\varphi$. 
	
	We conclude the $\NP$-hardness of determining the nonemptiness of $\repspnrpar{\gamma_1 \setminus \gamma_2}(\doc)$.
\end{proof}

We can conclude from Theorem~\ref{thm:diffrgxNPcomp} that, in contrast
to the tractability of the natural join of disjunctive functional VAs
(Corollary~\ref{cor:disjunctive-func-pdelay}), here we are facing
$\NP$-hardness already for functional VAs.  In the remainder of this
section, we discuss syntactic conditions that allow to avoid this
hardness.

\subsection{Bounded Number of Common Variables} 
Theorem~\ref{thm:diffrgxNPcomp} implies that no matter what approach
we choose to tackle the evaluation of the difference, without imposing
any restrictions we hit $\NP$-hardness.  In this section, we
investigate the restriction of an upper bound on the number of common
variables shared between the operands. Recall that this restriction
leads to an FPT static compilation for the natural join
(Lemma~\ref{lem:fptjoin}).  Yet, we observed at the beginning of
Section~\ref{sec:diff}, in the case of difference, such static
compilation necessitates an exponential blow-up, even if there are no
variables at all,

Therefore, instead of static compilation that is independent of the
document, we apply an ad-hoc compilation that depends on the specific
document at hand. In this case, we refer to the resulting automaton as
an \emph{ad-hoc VA} since it is valid only for that specific document.
{Ad-hoc VAs} were introduced (without a name) by Freydenberger et
al.~\cite{DBLP:conf/pods/FreydenbergerKP18} as a tool for evaluating
functional VAs with polynomial delay.  The next lemma is based on this
idea.

\def\thmdiffseqkvars{Let $k$ be a fixed natural number.  Given two
	sequential VAs $A_1$ and $A_2$ where $|\vars(A_1) \cap
	\vars(A_2)|\le k$ and a document $\doc$, one can construct in
	polynomial time a sequential VA $A_{\doc }$ with
	$\repspnrpar{A_{\doc}}(\doc) = \repspnrpar{A_1 \setminus A_2}
	(\doc)$.}
\begin{lemma}\label{thm:diffseqkvars}
	\thmdiffseqkvars
\end{lemma}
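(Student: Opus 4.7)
\medskip
\noindent\textbf{Proof plan.}
The plan is to perform an ad-hoc compilation that runs $A_1$ while locally certifying, via a polynomial-size bookkeeping component, that the mapping being produced has no compatible counterpart in $\repspnrpar{A_2}(\doc)$. Let $X \df \vars(A_1)\cap \vars(A_2)$, so $|X|\le k$, and let $n \df |\doc|$. Two mappings are compatible iff they agree on the common part of their domains, and this common part is contained in $X$; hence whether a mapping of $A_1$ is defeated by some mapping of $A_2$ depends only on the two mappings' restrictions to $X$. Call such a restriction a \emph{signature}; the set of all signatures is the set of partial functions $X\to\allspans(\doc)\cup\{\bot\}$, which has size $O((n^2+1)^k)$, polynomial in $n$ for fixed $k$.

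First I would compute the set $D_2$ of signatures realized by $\repspnrpar{A_2}(\doc)$, i.e.\ $D_2 \df \{\mu\restrict X \mid \mu\in\repspnrpar{A_2}(\doc)\}$. For each candidate signature $\sigma$ (polynomially many), I would test whether $\sigma\in D_2$ by taking a product of $A_2$ with a small tracker that records, for each $v\in X$, whether it is unseen, opened at some tracked position, or closed with a tracked span, and then checking nonemptiness of runs that agree with $\sigma$; nonemptiness for a sequential VA is polynomial. Then I would precompute the set $S^* \df \{\sigma \mid \sigma \text{ is incompatible with every } \tau \in D_2\}$, again in polynomial time.

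Next I would construct the ad-hoc VA $A_\doc$ whose states are triples $(q,i,\tau)$, with $q$ a state of $A_1$, $i\in\{1,\dots,n+1\}$ the current position in $\doc$, and $\tau$ a partial record over $X$ tagging each $v\in X$ as unseen, opened-at-$p$, or closed-with-$\mspan{p}{p'}$. Transitions mirror those of $A_1$: letter transitions require $\doc_i$ to match and advance $i$; epsilon transitions leave $i$ fixed; variable operations for $v\notin X$ are copied verbatim; variable operations for $v\in X$ are copied while updating $\tau$ (opening records the current position, closing forms the span using the stored opening position). The initial state is $(q_0^{(1)},1,\tau_0)$ with $\tau_0$ all-unseen, and the accepting states are $(q,n+1,\tau)$ with $q\in F_1$, $\tau$ containing no still-open variable, and the induced signature $\sigma_\tau$ belonging to $S^*$. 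The state space has size $|Q_1|\cdot (n+1)\cdot \bigl(1+(n+1)+\tfrac{(n+1)(n+2)}{2}\bigr)^k$, which is polynomial for fixed $k$, and the transition relation is polynomial too.

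Correctness follows from the bijection between accepting runs of $A_\doc$ on $\doc$ and accepting runs $\rho$ of $A_1$ on $\doc$ such that $\mu_\rho\restrict X\in S^*$; the latter is by construction exactly $\repspnrpar{A_1\setminus A_2}(\doc)$. Sequentiality of $A_\doc$ is inherited from $A_1$: the tracker $\tau$ only adds deterministic bookkeeping and does not generate additional variable operations, so every accepting run of $A_\doc$ projects to a valid run of $A_1$ and therefore is valid itself. The main obstacle will be keeping the construction polynomial: this is where the bound $|X|\le k$ is essential, as we only track the common variables in $\tau$, and where the fact that $A_\doc$ is tailored to one specific $\doc$ lets us bake the position index $i$ into the state without an exponential blow-up. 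The other delicate point is certifying $D_2$ correctly: one must check, per candidate signature, not merely whether some $\mu_2\in\repspnrpar{A_2}(\doc)$ \emph{extends} $\sigma$, but that its $X$-restriction is \emph{exactly} $\sigma$ (no further $X$-variables defined); this is handled by the tracker's three-valued state for each $v\in X$.
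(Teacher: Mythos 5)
Your proposal is correct, and it rests on the same two pillars as the paper's proof: (i) compatibility of a mapping of $A_1$ with a mapping of $A_2$ depends only on their restrictions to the common variables $X$, of which there are only $O((n^2+1)^k)$ possible ones (your ``signatures''); and (ii) the set of signatures that are incompatible with everything $A_2$ produces on $\doc$ can be computed in polynomial time by testing each candidate. Where you diverge is in how the filtering of $A_1$ is realized. The paper reduces the difference to a natural join: it first makes $A_1$ semi-functional for $X$, introduces a dummy variable for each $x\in X$ whose value ($\mspan{1}{1}$ versus the last empty span) encodes whether $x$ belongs to the domain, builds an automaton $B$ that accepts exactly the ``safe'' extended signatures, and then invokes the FPT join compilation of Lemma~\ref{lem:fptjoin}; this forces a separate treatment of $\doc=\epsilon$ (where the two encoding spans coincide) and a preliminary projection step to ensure $\vars(A_2)\subseteq\vars(A_1)$. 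You instead build the ad-hoc automaton directly, as a product of $A_1$ with a position-indexed signature tracker, accepting only when the tracked signature lies in the precomputed safe set $S^*$. This is more self-contained---it needs neither the dummy-variable encoding, nor the semi-functionality preprocessing of Lemma~\ref{lem:seqtosemifunc}, nor the join lemma, and it handles $\doc=\epsilon$ uniformly---at the price of not reusing the join machinery that the paper leans on elsewhere (e.g., in Theorem~\ref{thm:extractioncomp}). Your sequentiality argument (accepting runs of $A_{\doc}$ perform exactly the variable operations of the corresponding accepting, hence valid, runs of $A_1$) and your care in distinguishing ``some $\mu_2$ extends $\sigma$'' from ``some $\mu_2$ restricts exactly to $\sigma$'' when computing $D_2$ are both sound.
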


Since we can enumerate the result of sequential VA with polynomial
delay (Theorem~\ref{thm:enumerationseq}), we can conclude the
following.
\begin{theorem}
	Let $k$ be a fixed natural number. Given two sequential VAs $A_1$
	and $A_2$ where $|\vars(A_1) \cap \vars(A_2)|\le k$ and a document
	$\doc$, one can enumerate $ \repspnrpar{A_1 \setminus A_2}(\doc)$
	with polynomial delay.
\end{theorem}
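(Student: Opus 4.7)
The plan is to derive the theorem almost directly by chaining the ad-hoc compilation of Lemma~\ref{thm:diffseqkvars} with the polynomial-delay enumeration guarantee for sequential VAs from Theorem~\ref{thm:enumerationseq}. Since all the real work is hidden inside Lemma~\ref{thm:diffseqkvars}, the theorem itself reduces to bookkeeping a two-stage pipeline whose total runtime fits the polynomial-delay bound.

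Concretely, given the input $(A_1, A_2, \doc)$ with $|\vars(A_1) \cap \vars(A_2)| \le k$, the first step is to invoke Lemma~\ref{thm:diffseqkvars} (with the constant $k$) to build, in time polynomial in the combined size of $A_1$, $A_2$, and $\doc$, a sequential VA $A_{\doc}$ such that $\repspnrpar{A_{\doc}}(\doc) = \repspnrpar{A_1 \setminus A_2}(\doc)$. We treat this as a preprocessing phase; because its cost depends only polynomially on the input (with $k$ fixed), it can be absorbed into the delay preceding the emission of the first mapping.

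The second step is to feed $A_{\doc}$ and $\doc$ into the enumeration algorithm guaranteed by Theorem~\ref{thm:enumerationseq}, which enumerates $\repspnrpar{A_{\doc}}(\doc)$ with delay polynomial in $|A_{\doc}| + |\doc|$. By the correctness assertion of Lemma~\ref{thm:diffseqkvars}, the mappings produced are exactly those of $\repspnrpar{A_1 \setminus A_2}(\doc)$, without duplicates. Composing the two phases yields an enumeration of $\repspnrpar{A_1 \setminus A_2}(\doc)$ whose delay is polynomial in $|A_1| + |A_2| + |\doc|$, as required.

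There is essentially no obstacle at this level; the only subtle point worth highlighting is that the construction of $A_{\doc}$ is \emph{ad-hoc}, i.e., depends on the specific document, so unlike in the natural-join case of Section~\ref{sec:join}, we cannot interpret the result as first compiling a document-independent spanner and then running enumeration. Consequently, the polynomial bound on the enumeration delay must be read relative to the size of the ad-hoc automaton $A_{\doc}$, whose polynomial size in $|A_1| + |A_2| + |\doc|$ (with the hidden dependence on $k$ acceptable since $k$ is fixed) is precisely what Lemma~\ref{thm:diffseqkvars} guarantees. Given that, the combination is immediate, and no further invariant on $A_{\doc}$ beyond sequentiality is needed to apply Theorem~\ref{thm:enumerationseq}.
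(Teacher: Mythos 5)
Your proposal is correct and matches the paper's own argument exactly: the paper likewise derives this theorem by chaining the polynomial-time ad-hoc construction of $A_{\doc}$ from Lemma~\ref{thm:diffseqkvars} with the polynomial-delay enumeration for sequential VAs in Theorem~\ref{thm:enumerationseq}. Your remark that the delay bound must be read relative to the polynomially-sized ad-hoc automaton is a sensible clarification but does not change the argument.
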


We now present the proof sketch of Lemma~\ref{thm:diffseqkvars}.  
\begin{proof}[Proof Sketch]
	We construct two sequential VAs $A$ and $B$ (that share a
	bounded number of variables) such that evaluating the difference of
	$A_1$ and $A_2$ on $\doc$ is the same as evaluating the natural join
	of $A$ and $B$ on $\doc$. This natural join can be compiled into a
	sequential VA in polynomial time when the number of common variables
	is bounded by a constant (Theorem~\ref{thm:joinseqfixcommon}), and
	therefore, we establish the desired result.
	
	Yet, unlike the schema-based model, difference in the schemaless
	case cannot be translated straightforwardly into a natural join
	(e.g., via complementation).  For illustration, let us consider the
	case where there are $\mu_1 \in \repspnrpar{A_1} (\doc)$ and $\mu_2
	\in \repspnrpar{A_2} (\doc)$ such that $\dom(\mu_1) \cap \dom(
	\mu_2) = \emptyset$.  In this case, the assignment $\mu_1$ is not in
	$\repspnrpar{A_1 \setminus A_2} (\doc)$ since it is compatible with
	$\mu_2$. Nevertheless, $\mu_1$ will occur in the natural join of
	$A_1$ with every VA $A_2'$, unless $A_1$ and $A_2'$ share one or
	more common variables.
	
	As a solution, we construct a VA that encodes information about the
	domains of the mappings $\mu$, within the variables shared by $A_1$
	and $A_2$, using new shared \e{dummy} variables. Specifically, we
	have a dummy variable $\hat x$ for every shared variable $x$. If
	$x\in\dom(\mu)$, then $\hat x$ is assigned the first empty span
	$\mspan{1}{1}$, and if $x\notin\dom(\mu)$, then $\hat x$ is assigned
	the last empty span $\mspan{|\doc|+1}{|\doc|+1}$.  (Here, we assume
	that $\doc$ is nonempty; we deal separately with the case $\doc =
	\epsilon$.)
	
	We construct a VA $A$ for the above extended mappings of $A_1$.  In
	addition, we construct a VA $B$ by iterating through all possible
	extended mappings over the shared variables, and for each such a
	mapping, if it is incompatible with all of the extended mappings of
	$\repspnr{A_2}(\doc)$, then we include it in $B$. This construction
	can be done in polynomial time, since we assume that the number of
	common variables is bounded by a constant.
	
	We conclude by showing that the extended mappings of
	$\repspnr{A}(\doc)$ that have compatible mappings in
	$\repspnr{B}(\doc)$ correspond to the mappings of
	$\repspnr{A_1}(\doc)$ that have no compatible mappings in
	$\repspnr{A_2}(\doc)$, and also that the extended mappings of
	$\repspnr{A}(\doc)$ that have compatible mappings in
	$\repspnr{B}(\doc)$ correspond to the mappings of
	$\repspnr{A_1}(\doc)$ that do not have compatible mappings in
	$\repspnr{A_2}(\doc)$.  \let\oldqedsymbol\qedsymbol
	\renewcommand{\qedsymbol}{\oldqedsymbol \emph{(Proof sketch)}}{}
\end{proof}

Theorem~\ref{thm:diffseqkvars} shows that we can enumerate the
difference with polynomial delay when we restrict the number of common
variables. A natural question is whether the degree of this polynomial
depends on this number; the next theorem answers this question
negatively, under the conventional assumptions of parameterized
complexity.

\def\thmdiffwonehard{ The following problem is $\Wone$-hard
	parametrized by $|\vars(\gamma_1)\cap \vars(\gamma_2)|$.  Given two
	functional regex formulas $\gamma_1$ and $\gamma_2$ and an input
	document $\doc$, is $\repspnrpar{\gamma_1 \setminus \gamma_2}(\doc)$
	nonempty?}
\begin{theorem}\label{thm:diffwonehard}
	\thmdiffwonehard
\end{theorem}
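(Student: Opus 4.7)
The plan is to exhibit a parameter-preserving polynomial-time reduction from $k$-\textsc{Clique}, which is $\Wone$-complete when parametrized by $k$. Given an instance $(G,k)$ with $G=(V,E)$ and $V=\{1,\dots,n\}$, the reduction will produce a document $\doc$ and two functional regex formulas $\gamma_1,\gamma_2$ whose shared-variable set has size exactly $k$, such that $\repspnrpar{\gamma_1\setminus\gamma_2}(\doc)\neq\emptyset$ iff $G$ has a $k$-clique.

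First I would set up the document so that each of the $k$ clique-variables has its own independent ``slot''. Taking $\Sigma=\{\mathtt{a},\#\}$ and $\doc\df(\#\mathtt{a}^n)^k\#$, I interpret the $i$-th $\mathtt{a}^n$ block as the slot for the $i$-th clique vertex and identify vertex $u\in V$ with the $u$-th position inside a block. The ``selector'' regex formula
\[
\gamma_1\df \#\mathtt{a}^* \bind{x_1}{\mathtt{a}} \mathtt{a}^* \# \cdots \# \mathtt{a}^* \bind{x_k}{\mathtt{a}} \mathtt{a}^* \#
\]
is functional over $\{x_1,\dots,x_k\}$, and its mappings on $\doc$ correspond bijectively to tuples $(u_1,\dots,u_k)\in V^k$.

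Next I would define $\gamma_2$ so that a mapping of $\gamma_1$ survives the difference precisely when the encoded tuple is a $k$-clique, by taking a disjunction over all \emph{bad quadruples} $(i,j,u,v)$ with $1\le i<j\le k$ and either $u=v$ or $\{u,v\}\notin E$. For each such quadruple, I copy $\gamma_1$ and replace its $i$-th block by $\mathtt{a}^{u-1}\bind{x_i}{\mathtt{a}}\mathtt{a}^{n-u}$ and its $j$-th block by $\mathtt{a}^{v-1}\bind{x_j}{\mathtt{a}}\mathtt{a}^{n-v}$, pinning $x_i$ and $x_j$ to the forbidden vertices while leaving the other $x_\ell$ free inside their blocks. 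Each disjunct is functional and uses exactly $\{x_1,\dots,x_k\}$, so $\gamma_2$ is functional, and its total size is polynomial in $n$ and $k$. A mapping in $\repspnrpar{\gamma_1}(\doc)$ lies in $\repspnrpar{\gamma_2}(\doc)$ iff the tuple it encodes contains either a repeated vertex or a non-edge, so $\repspnrpar{\gamma_1\setminus\gamma_2}(\doc)\neq\emptyset$ iff $G$ has a $k$-clique; since $|\vars(\gamma_1)\cap\vars(\gamma_2)|=k$, the parameter is preserved.

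The step I expect to be the main obstacle is the precise design of each disjunct $\gamma_2^{i,j,u,v}$: functionality demands that every variable be captured exactly once within the disjunct, while the reduction requires pinning two specific variables to specific positions inside specific blocks. The block separators $\#$ in $\doc$ are what make this safe, since they confine each free $x_\ell$ to its own block and prevent the fixed-length prefixes $\mathtt{a}^{u-1}$ and suffixes $\mathtt{a}^{n-u}$ from being ``absorbed'' by neighbouring Kleene stars. Once this encoding is pinned down, correctness is a direct unpacking of definitions and the $\Wone$-hardness conclusion follows.
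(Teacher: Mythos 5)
Your reduction is correct, but it takes a genuinely different route from the paper's. You reduce from $k$-\textsc{Clique}, using a block-structured document $(\#\mathtt{a}^n)^k\#$ in which $\gamma_1$ selects one position per block and $\gamma_2$ enumerates the $O(k^2n^2)$ ``bad quadruples'' that pin two of the $k$ variables to a repeated vertex or a non-edge; since $\gamma_1$ and $\gamma_2$ are functional over the same variable set $\{x_1,\dots,x_k\}$, compatibility collapses to equality of mappings, and the difference is nonempty exactly when some tuple avoids every bad pair, i.e., is a $k$-clique. The paper instead reduces from the weight-$k$ satisfiability problem for 3CNF: the document is a concatenation of unique substrings $s_1\cdots s_n$ encoding the propositional variables, $\gamma_1$ selects the $k$ variables set to true, and $\gamma_2$ is a disjunction over clauses of formulas describing the weight-$k$ assignments that falsify each clause, which forces a case analysis on the number of positive and negative literals per clause and on which of the $y_u$ land on the negated occurrences. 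Your construction buys a shorter and more transparent correctness argument (each disjunct of $\gamma_2$ is trivially functional and the bad-pair enumeration needs no case split), at comparable instance size $O(k^3n^3)$ and with the parameter preserved exactly; the paper's construction has the advantage of reusing the same SAT-based machinery as its other hardness proofs (Theorems~\ref{thm:joinseqhard} and~\ref{thm:diffrgxNPcomp}), so the three reductions form a uniform family. The one point you flag as a potential obstacle --- keeping each disjunct functional while pinning two variables --- is indeed handled correctly by your $\#$-separators, since the document and both formulas contain exactly $k+1$ occurrences of $\#$ and $\mathtt{a}^*$ cannot absorb them, so the block alignment is forced.
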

This result contrasts our $\FPT$ result for the natural join
(Theorem~\ref{thm:joinseqfixcommon}).  The proof uses a reduction
from the problem determining whether a 3-SAT formula has a satisfying
assignment with at most $p$ ones, where $p$ is the
parameter~\cite{downey2012parameterized}.

\subsection{Restricting the Disjunctions}
We now propose another restriction that guarantees a tractable
evaluation, this time allowing the number of common variables to be
unbounded. We begin with some definitions.  

Let $\gamma$ be a sequential regex formula and let $x \in \vars$ be a
variable. Then $\gamma$ is \e{synchronized} for $x$ if, for every
subexpression of $\gamma$ of the form $\gamma_1 \vee \gamma_2$, we
have that $x$ appears neither in $\gamma_1$ nor in $\gamma_2$.  A
regex formula $\gamma$ is called \emph{synchronized} for $X\subseteq
\vars$ if it is synchronized for every $x\in X$.

This notion generalizes to sequential VAs: A state $q$ of a
sequential VA~$A$ is called a \emph{unique target state}
for the variable operation $\omega \in \Gamma_{\vars(A)}$, if for
every state $p$ of $A$ we have that $(p,\omega,q)\in \delta$ implies
$q = q_{\omega}$ where $\delta$ is the transition relation of
$A$. In other words, $ q_{\omega}$ is the only state that can be reached by processing $\omega$.   
We say
that $A$ is \e{synchronized} for a variable $x \in \vars$ if each of
$\vop{x}$ and $\vcl{x}$ has a unique target state and either all
accepting runs of $A$ open and close $x$, or no accepting run of $A$
operates on $x$. 
Finally, $A$
is \e{synchronized} for $X\subseteq \vars$ if it is synchronized for
every $x\in X$.
\begin{example}
	Consider the regex formula $(\bind{x}{\Sigma^*} \vee \epsilon)\cdot \bind{y}{\Sigma^*}$ and this equivalent VA: 
	\begin{center}
		\begin{tikzpicture}[on grid, node distance =1.45cm,every loop/.style={shorten >=0pt}]
		\node[state,initial text=,initial by arrow] (q0) {};
		\node[state,above right= of q0] (q11) {};
		\node[state,below right=of q11] (q1) {};
		\node[state,right= of q1] (q2) {};
		\node[state,right= of q2,accepting] (q3) {};
		
		\path[->]
		(q0) edge node[above left] {$\vop{x}$} (q11)
		(q11) edge[loop above] node[left, near start] {$\Sigma$} (q11)
		(q11) edge node[above right] {$\vcl{x}$} (q1)
		(q0) edge node[below] {$\epsilon$} (q1)
		(q1) edge node[below] {$\vop{y}$} (q2)
		(q2) edge[loop above] node {$\Sigma$} (q2)
		(q2) edge node[below] {$\vcl{y}$} (q3)
		;
		\end{tikzpicture}
	\end{center} 
	Both are synchronized for $y$ and not for $x$:
	The regex formula has a subexpression of the form $(x\{ \Sigma^*\} \vee \epsilon) $, whereas the variable $y$ does not appear under any disjunction.
	In the VA, although each variable operation has a unique target state, not all of the accepting runs 
	include the variable operations $\vop{x}$ and $\vcl{x}$ (as opposed to $\vop{y}$ and $\vcl{y}$, which are included in every accepting run).
\end{example}
The following result states that conversions from regex formulas to VAs can preserve the property of being
synchronized for $X$. 
\def\lemsyncregextosyncautomaton{Let
	$\gamma$ be a sequential regex formula that is synchronized for
	$X\subseteq \vars$.  One can convert $\gamma$ in linear time into an
	equivalent sequential VA $A$ that is synchronized for
	$X$.}
\begin{lemma}\label{lem:syncregextosyncautomaton}
	\lemsyncregextosyncautomaton
\end{lemma}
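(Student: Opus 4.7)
The plan is to refine the standard linear-time translation from sequential regex formulas into sequential VAs (a Thompson-style construction, as used by Maturana et al.~\cite{DBLP:conf/pods/MaturanaRV18}) and show, by structural induction on $\gamma$, that it additionally preserves synchronization. The invariant I would maintain is that if a sub-formula $\gamma'$ is synchronized for a variable $x$, then the corresponding sub-VA satisfies (a) each of $\vop{x}$ and $\vcl{x}$ has at most one target state, and (b) either every accepting computation of the sub-VA properly opens and closes $x$, or no accepting computation operates on $x$. The conclusion of the lemma then follows by applying the invariant at the root.

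The base cases (letter, $\epsilon$) and the Kleene star $\gamma_0^*$ are trivial since, by sequentiality, no variables occur in these constructs. For a variable binding $y\{\gamma_0\}$, I would wrap the sub-VA of $\gamma_0$ with fresh states $q_{\vop{y}}, q_{\vcl{y}}$ and transitions into them via $\vop{y}$ and $\vcl{y}$; since $y \notin \vars(\gamma_0)$ by sequentiality, these are the only transitions on $\vop{y}$ and $\vcl{y}$ in the construction, giving uniqueness, and every accepting run of the wrapped VA must traverse them. For a concatenation $\gamma_1 \cdot \gamma_2$, sequentiality yields $\vars(\gamma_1) \cap \vars(\gamma_2) = \emptyset$, so any $x$ of interest lies in at most one operand, and both parts of the invariant transfer directly from that operand's sub-VA to the concatenated automaton (the fresh $\epsilon$-transitions used for concatenation carry no variable operations).

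The delicate case is the union $\gamma_1 \vee \gamma_2$, which is precisely the case that the notion of synchronization was designed to control: synchronization of $\gamma$ for $x$ forbids $x$ from appearing in either $\gamma_i$, so neither sub-VA contains any transition on $\vop{x}$ or $\vcl{x}$, and the standard union construction (a fresh initial state with $\epsilon$-transitions into the sub-VAs' initial states) introduces none either; the invariant for $x$ is thus preserved vacuously inside the disjunction. The main obstacle will be establishing condition (b) globally at the root: I would argue that if $x$ actually occurs in $\gamma$, then by synchronization its unique occurrence $x\{\gamma_0\}$ lies on the ``spine'' of $\gamma$, that is, outside every disjunctive branch and, by sequentiality, outside every starred sub-formula, so the corresponding fresh $\vop{x}$ and $\vcl{x}$ transitions sit on every accepting run of the assembled VA; variables in $X$ that do not appear in $\gamma$ at all trivially satisfy both conditions. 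Linearity is inherited from the underlying Thompson-style bookkeeping, which adds $O(1)$ states and transitions per syntactic construct of $\gamma$.
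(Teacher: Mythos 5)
Your proposal is correct and follows essentially the same route as the paper: both rely on the Thompson-style construction that treats $\bind{x}{\gamma_0}$ as $\vop{x}\cdot\gamma_0\cdot\vcl{x}$, obtain unique target states from the fresh states Thompson assigns to the (necessarily single, by sequentiality plus synchronization) occurrence of each variable operation, and derive the all-or-nothing condition on accepting runs from the fact that $x$ lies outside every disjunction and every star. The paper argues this last condition globally rather than via your explicit structural induction, but the substance is identical.
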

As one might expect, VAs that are synchronized (for some
non\-empty set $X$ of variables) are less expressive than sequential or
semi-functional VAs (that are defined in Section~\ref{subsec:joincommonvars}). In fact, even
functional regex formulas can express spanners that are not
expressible with VAs that are synchronized for all their
variables:

\def\propsyncnotexpressive{Let $\gamma\df (\mathsf{a}\cdot x \{ \epsilon \} \cdot \mathsf{a} ) \vee (\mathsf{b}\cdot x \{ \epsilon \} \cdot \mathsf{b})$. There is no sequential VA $A$ that is synchronized for $x$ and equivalent to $\gamma$.}
\begin{proposition}\label{prop:syncnotexpressive}
	\propsyncnotexpressive
\end{proposition}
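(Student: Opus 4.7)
The plan is a proof by contradiction. Suppose $A$ is a sequential VA equivalent to $\gamma$ that is synchronized for $x$. First I would verify the three documents that matter: $\repspnrpar{\gamma}(\mathsf{aa}) = \repspnrpar{\gamma}(\mathsf{bb}) = \{\, x \mapsto \mspan{2}{2}\,\}$ while $\repspnrpar{\gamma}(\mathsf{ab}) = \emptyset$, and observe that every mapping produced by $\gamma$ has $x$ in its domain. By the synchronization assumption, every accepting run of $A$ therefore opens and closes $x$; moreover, immediately after any $\vop{x}$ transition the state must be the unique target $q_{\vop{x}}$, and immediately after any $\vcl{x}$ transition it must be the unique target $q_{\vcl{x}}$.

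Next I would fix an accepting run $\rho_{aa}$ of $A$ on $\mathsf{aa}$ and an accepting run $\rho_{bb}$ of $A$ on $\mathsf{bb}$, each realizing $x\mapsto\mspan{2}{2}$. Because of the assigned span, in both runs the $\vop{x}$ and $\vcl{x}$ transitions fire at input position $2$, so the configuration $(q_{\vcl{x}},2)$ appears in both $\rho_{aa}$ and $\rho_{bb}$. The prefix of $\rho_{aa}$ from $(q_0,1)$ up to the first occurrence of $(q_{\vcl{x}},2)$ consumes exactly the letter $\mathsf{a}$ (together with epsilons and the two operations on $x$), while the suffix of $\rho_{bb}$ from $(q_{\vcl{x}},2)$ to its final accepting configuration consumes exactly the letter $\mathsf{b}$. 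Splicing the two pieces at the shared configuration $(q_{\vcl{x}},2)$ yields a sequence of transitions of $A$ on the document $\mathsf{ab}$ that starts at $(q_0,1)$, opens and closes $x$ at position $2$, and terminates in an accepting state with mapping $x\mapsto\mspan{2}{2}$, contradicting $\repspnrpar{A}(\mathsf{ab}) = \emptyset$.

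The main obstacle is verifying that the spliced sequence is a genuine \emph{valid} and accepting run. Because $A$ is equivalent to $\gamma$ and $\vars(\gamma) = \{x\}$, the definition of $\mu_{\rho}$ together with the validity conditions force every accepting run of $A$ to leave variables other than $x$ untouched (otherwise such a variable would appear in the output mapping). Hence along both the prefix of $\rho_{aa}$ and the suffix of $\rho_{bb}$ the only variable operations are $\vop{x}$ and $\vcl{x}$, so in the spliced run $x$ is opened exactly once and closed exactly once, both at position $2$. Validity is therefore preserved, and the spliced run is accepting. The crux of the argument is that the uniqueness of $q_{\vop{x}}$ and $q_{\vcl{x}}$ is exactly what lets the two runs be glued together at a common configuration; it is this rigidity of synchronization that the spanner $\gamma$ cannot tolerate.
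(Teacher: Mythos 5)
Your proof is correct and follows essentially the same route as the paper's: assume a synchronized equivalent VA exists, take accepting runs on $\mathsf{aa}$ and $\mathsf{bb}$, and splice them at the unique target state forced by synchronization to manufacture a valid accepting run on $\mathsf{ab}$, contradicting $\repspnrpar{\gamma}(\mathsf{ab})=\emptyset$. The only differences are cosmetic: you glue at the target of $\vcl{x}$ where the paper glues at the target of $\vop{x}$, you spell out why the spliced run remains valid (no accepting run can touch a variable other than $x$), and your span $\mspan{2}{2}$ is in fact the correct one (the paper's $\mspan{2}{3}$ is a typo).
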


Hence, by using synchronized VAs, we sacrifice expressive power. But this restriction also allows us to state the following positive result on the difference of VAs:
\def \thmdiffseqsync {	
	Given an input document $\doc$  and two sequential VAs $A_1$ and $A_2$ such that, for $X\df \vars(A_1) \cap \vars(A_2)$, $A_1$ is semi-functional for  $X$ and
	$A_2$ is synchronized for $X$, one can construct a sequential VA $A_{\doc}$ with $\repspnrpar{A_{\doc}}(\doc) = \repspnrpar{A_1 \setminus A_2} (\doc)$ in polynomial time.
}
\begin{theorem}\label{thm:thmdiffseqsync}
	\thmdiffseqsync
\end{theorem}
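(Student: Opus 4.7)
The plan is to reduce the problem, via the synchronization of $A_2$ and the semi-functionality of $A_1$, to the construction of an ad-hoc VA that complements a structured projection of $A_2$'s output; the unique-target-state property turns that complementation from exponential into polynomial.

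Using the synchronization of $A_2$ for $X := \vars(A_1) \cap \vars(A_2)$, I first partition $X = X_0 \sqcup X_1$ in polynomial time, where $X_1$ collects the common variables opened and closed by every accepting run of $A_2$ and $X_0$ those operated on by no accepting run. Since $\dom(\mu_2) \cap X = X_1$ for every $\mu_2 \in \repspnrpar{A_2}(\doc)$, the compatibility of $\mu_1$ with $\mu_2$ reduces to $\mu_1(x) = \mu_2(x)$ for every $x \in \dom(\mu_1) \cap X_1$. I project $A_2$ onto $X_1$ to obtain a VA $\hat{A}_2$ that is functional for $X_1$ and retains the unique-target-state property. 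Using the semi-functionality of $A_1$, each accepting state $q_f$ of $A_1$ has a well-defined $D_{q_f} = \{x \in X : \varconfseq_{q_f}(x) = \vc{c}\}$, and every $\mu_1$ from a run ending at $q_f$ satisfies $\dom(\mu_1) \cap X = D_{q_f}$. Hence I build $A_\doc$ as an epsilon-prefixed union over the accepting states $q_f$ of per-state VAs $A_\doc^{q_f}$, and it suffices to construct each $A_\doc^{q_f}$ in polynomial time.

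For each $q_f$, setting $D := D_{q_f} \cap X_1$, I realize $A_\doc^{q_f}$ as the natural join of $A_1^{q_f}$ (the restriction of $A_1$ to accepting state $q_f$, semi-functional for $X \supseteq D$) with an ad-hoc VA $E_{q_f}$ over $D$ that accepts exactly the $D$-tuples not realizable by $\hat{A}_2$ on $\doc$. Since $E_{q_f}$ can be built functional for $D$, the join is polynomial-time via Lemma~\ref{lem:joinseqsemi}. The main obstacle is the construction of $E_{q_f}$, which amounts to complementing $\pi_D(\repspnrpar{\hat{A}_2}(\doc))$---in general an exponential operation. The synchronization saves this step: after further projecting $\hat{A}_2$ onto $D$ (yielding $\tilde{A}_2$, still synchronized for $D$), the ad-hoc language of $\tilde{A}_2$ on $\doc$---viewed over the alphabet of (variable operation, position) pairs---is recognized by a \emph{deterministic} automaton of polynomial size, because the unique-target-state property forces the state right after each symbol $(\omega, i)$ to be $(q_\omega, i)$, while transitions between such states are governed by polynomial-time-precomputable reachability in the ad-hoc product $\tilde{A}_2 \times \doc$. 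Polynomial-time complementation of this DFA and re-encoding as an ad-hoc VA over $D$ (with routine combinatorial accounting for operations placed at tied positions) produces $E_{q_f}$; combining the $A_\doc^{q_f}$ by epsilon-union yields the sequential VA $A_\doc$ claimed by the theorem.
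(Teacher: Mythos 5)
Your setup is sound and largely parallels the paper's: the partition of $X$ induced by synchronization, the decomposition of $A_1$ by the sets $D_{q_f}$ of defined common variables (the paper groups accepting states by skip sets, which is the same idea), and the observation that unique target states make the realizable behaviour of $A_2$ on $\doc$ recognizable by a polynomial-size \emph{deterministic} structure (the paper's determinized match structure $D_2$, built over the alphabet of variable configurations rather than your (operation, position) pairs). The fatal step is the materialization of $E_{q_f}$ as a standalone VA that ``accepts exactly the $D$-tuples not realizable by $\hat{A}_2$ on $\doc$.'' This theorem places no bound on $|X|$, hence none on $|D|$, and the complement of $\pi_D(\repspnrpar{A_2}(\doc))$ inside the set of \emph{all} $D$-mappings is in general not representable by a polynomial-size sequential VA. Concretely, take $A_2$ with canonical operation order $\vop{x_1},\vcl{x_1},\vop{x_2},\ldots$; every $D$-mapping that assigns $x_2$ the span $\mspan{1}{1}$ and $x_1$ the span $\mspan{2}{2}$ violates the canonical order and is therefore unrealizable, with the remaining $|D|-2$ variables completely unconstrained. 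So the projection of $\repspnrpar{E_{q_f}}(\doc)$ onto $D\setminus\{x_1,x_2\}$ is the universal spanner on $|D|-2$ variables, and a crossing argument (for each $T\subseteq D\setminus\{x_1,x_2\}$, the state reached just before consuming $\sigma_1$ in a run realizing ``$y\mapsto\mspan{1}{2}$ for $y\in T$, $y\mapsto\mspan{2}{2}$ otherwise'' must be distinct for distinct $T$, or else gluing two runs yields an accepting run that opens some variable twice, contradicting sequentiality) forces at least $2^{|D|-2}$ states. Your DFA over (operation, position) pairs only speaks about well-ordered event sequences; the ``routine combinatorial accounting'' you defer is precisely where all the badly-ordered mappings live, and that is where the exponential blowup that the theorem is designed to avoid reappears.

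The paper escapes this by never building a VA for the complement at all: it runs the determinized match structure $D_2$ \emph{in lockstep with} $A_1$, pairing each state of $A_1$ with the unique state of $D_2$ that can still mimic the variable behaviour seen so far, and diverting into a trap-tagged copy of $A_1$ the moment no such state exists; the accepting states are exactly those in the trap copy. Complementation thus happens implicitly and only on the polynomially many $(q_1,q_2)$ pairs that $A_1$'s own runs can reach, rather than on the full space of $D$-mappings. If you want to salvage your decomposition, you would have to replace the explicit join $A_1^{q_f}\join E_{q_f}$ by such a product-with-trap construction, at which point you have essentially rederived the paper's proof.
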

The full proof can be found in the Appendix; we discuss 
some of its key ideas. The first key observation is that $A_2$ can be
treated as a functional VA that uses only the common
variables (similarly to the proof of
Lemma~\ref{lem:joinseqsemi}). This allows us to work with the variable
configurations of $A_2$, and construct the \emph{match structure}
$M(A_2,\doc)$ of $A_2$ on $\doc$. This model was introduced (without a
name) by Freydenberger et al.~\cite{DBLP:conf/pods/FreydenbergerKP18}
to evaluate functional VAs with polynomial delay.  As
explained there, every element of $\repspnrpar{A_2}(\doc)$ can be
uniquely expressed as a sequence of $|\doc|+1$ variable configurations
of $A_2$.

Every accepting run of $A_2$ on $\doc$ can be mapped into such a sequence by taking the variable configurations of the states just before a symbol of $\doc$ is read (and the configuration of the final state). The match structure  $M(A_2,\doc)$ is an NFA that has the set of variable configurations of $A_2$ as its alphabet; and its language is exactly the set of sequences of variables configurations that correspond to elements of $\repspnrpar{A_2}(\doc)$. 

While determinizing match structures is still hard, the fact that $A_2$ is synchronizing on the common variables allows us to construct a deterministic match structure $D_2$ from $M(A,\doc)$. Using a variant of the proof of Lemma~\ref{lem:joinseqsemi}, we can then combine $A_1$ and $A_2$ into an ad-hoc VA $A_\doc$ with $\repspnrpar{A_{\doc}}(\doc) = \repspnrpar{A_1 \setminus A_2} (\doc)$.

After creating $A_\doc$ according to  Theorem~\ref{thm:thmdiffseqsync}, we can use Theorem~\ref{thm:enumerationseq} to obtain the following tractability result:
\begin{corollary}
	Given an input document $\doc$ and two sequential VAs $A_1$ and $A_2$ such that, for $X\df\vars(A_1) \cap \vars(A_2)$, $A_1$ is semi-functional for $X$  and 
	$A_2$ is synchronized for  $X$, one can enumerate the mappings in $\repspnrpar{A_1 \setminus A_2} (\doc)$ in polynomial delay.
\end{corollary}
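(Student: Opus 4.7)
The plan is to reduce the schemaless difference to a natural join with (the complement of) a deterministic automaton that tracks how $A_2$ treats the shared variables $X$. The first observation is that whether a mapping $\mu_1\in\repspnr{A_1}(\doc)$ survives $\repspnr{A_1\setminus A_2}(\doc)$ depends only on $\mu_1|_X$ and on the set $\{\mu_2|_X : \mu_2\in\repspnr{A_2}(\doc)\}$: if some $\mu_2$ agrees with $\mu_1$ on $\dom(\mu_1)\cap\dom(\mu_2)$, that agreement can be witnessed already on $X$, because $A_1$ and $A_2$ share no variables outside $X$. Semi-functionality of $A_1$ for $X$ means each state of $A_1$ pins down a unique $X$-configuration on runs reaching it, so $A_1$'s behavior on $X$ can be read off its states; synchronization of $A_2$ for $X$ means each $\vop{x}$ and $\vcl{x}$ has a unique target state and every accepting run of $A_2$ either handles $x\in X$ uniformly or not at all. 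In particular, $A_2$ may be treated as though it were functional over $X$.

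Next, I would construct the match structure $M(A_2,\doc)$ in the style of Freydenberger et al., but projected to $X$: an automaton indexed by positions $i\in\{1,\ldots,|\doc|+1\}$ over the alphabet of $X$-configurations, whose language is exactly the set of configuration sequences realized by elements of $\repspnr{A_2}(\doc)$ restricted to $X$. The crucial step is then to determinize $M(A_2,\doc)$ in polynomial time into a DFA $D_2$ over the same alphabet. The synchronization assumption is what makes this possible: because $\vop{x}$ and $\vcl{x}$ each have a unique target state, any two runs that have issued the same sequence of $X$-operations arrive at states sharing the same $X$-configuration, so a subset construction collapsed modulo $X$-configuration has only polynomially many reachable classes.

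Once $D_2$ is in hand, the remaining steps are routine. Complementing a DFA is free — flip accepting and non-accepting states to obtain $\overline{D_2}$. Then I would construct $A_\doc$ as a product of $A_1$ with $\overline{D_2}$, analogous to the construction in Lemma~\ref{lem:joinseqsemi}: every transition of $A_1$ that opens or closes a variable in $X$ is synchronized with the corresponding configuration transition in $\overline{D_2}$, while transitions on non-$X$ variables and on letters of $\doc$ are threaded through unchanged. Semi-functionality of $A_1$ for $X$ is what makes this pairing well-defined. Correctness then follows by unwinding the construction: accepting runs of $A_\doc$ correspond bijectively to mappings $\mu\in\repspnr{A_1}(\doc)$ whose $X$-restriction is \emph{not} realized in $\repspnr{A_2}(\doc)$, which is exactly $\repspnr{A_1\setminus A_2}(\doc)$.

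The main obstacle I foresee is the determinization step. Showing that the subset construction over $M(A_2,\doc)$, when collapsed by agreement on the $X$-configuration at each position, yields only polynomially many reachable states will require pinpointing exactly where synchronization eliminates nondeterminism: nondeterminism in $A_2$ is allowed on non-$X$ actions and on letters, but never produces two states with incompatible $X$-progress. The delicate part is handling $\epsilon$-transitions between positions and verifying that every nondeterministic fork after processing an $X$-operation reunites into a single class. With that in place, the rest of the proof reduces to a careful but bookkeeping-level verification that the language of $A_\doc$ matches the desired difference.
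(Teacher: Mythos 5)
Your overall strategy --- build the match structure $M(A_2,\doc)$, use the synchronization of $A_2$ on $X$ to determinize it in polynomial time, and then combine the result with $A_1$ via a product in the style of Lemma~\ref{lem:joinseqsemi} before invoking Theorem~\ref{thm:enumerationseq} --- is exactly the route the paper takes, and your reading of why synchronization makes determinization feasible (unique target states force all runs with the same history of $X$-operations into the same $X$-configuration, so the subset construction collapses to polynomially many classes indexed by position and configuration) is the right one.

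However, there is a genuine gap in the final step. You reduce correctness to the claim that $\mu_1\in\repspnr{A_1}(\doc)$ survives the difference if and only if its $X$-restriction is \emph{not realized} by $\repspnr{A_2}(\doc)$, and you implement this by complementing the DFA $D_2$ and synchronizing it with $A_1$'s $X$-operations. This is wrong when $A_1$'s run leaves some variable $x\in X$ undefined, which sequential (non-functional) VAs are allowed to do. The schemaless difference is defined through \emph{compatibility}: $\mu_1$ is removed if some $\mu_2\in\repspnr{A_2}(\doc)$ agrees with it on $\dom(\mu_1)\cap\dom(\mu_2)$, and a $\mu_2$ that assigns $x$ is vacuously compatible with a $\mu_1$ that skips $x$. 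After your normalization $A_2$ is functional over $X$, so every $\mu_2$ is total on $X$; a partial $\mu_1|_X$ is then never literally ``realized,'' yet it must still be removed whenever some total $\mu_2|_X$ extends it. Concretely, if $A_1$ outputs the empty mapping and $\repspnr{A_2}(\doc)\neq\emptyset$, the empty mapping is compatible with everything and must not appear in the output, but your product accepts it (the configuration sequence fed to $D_2$ omits the operations on $X$, so $D_2$ rejects and its complement accepts). The paper closes this hole by first decomposing $A_1$ into sub-automata $A_{1,j}$, one per set $S_j$ of skipped variables (readable from the extended configurations of the final states, thanks to semi-functionality), and by defining consistency of state pairs so that variables in $S_j$ are \emph{always} consistent; a run is accepted only if the deterministic simulation of $A_2$ is eventually forced into a trap state. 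You need this skip-set decomposition, or an equivalent mechanism quantifying over all completions of the skipped variables, for the construction to compute the difference rather than the set-difference of $X$-restrictions.
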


We saw that disallowing disjunctions over the variables 
leads to tractability.  Can we relax this restriction by allowing a
fixed number of such disjunctions? Our next result is a step towards
answering this question.  A \emph{disjunction-free} regex formula is a
regex formula that does not contain any subexpression of the form
$\gamma_1 \vee \gamma_2$.
\begin{proposition}\label{prop:disjunctionfreeExpensive}
	The following decision problem is $\NP$-complete. 
	Given two sequential regex formulas $\gamma_1$ and $\gamma_2$ with 
	$\vars(\gamma_1) = \vars(\gamma_2)$
	and an input document $\doc$ such that 
	\begin{itemize}
		\item $\gamma_1$ is functional,
		\item $\gamma_2$ is a disjunction of regex formulas $\gamma_2^i$ such that each is disjunction-free,
		\item for every variable $x\in \vars(\gamma_2)$, it holds that $x$ appears in at most 3 disjuncts $\gamma_2^i$ of $\gamma_2$,
	\end{itemize}    
	is $\repspnrpar{\gamma_1 \setminus \gamma_2}(\doc)$ nonempty? 
\end{proposition}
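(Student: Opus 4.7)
The plan is to establish NP-completeness in two parts. Membership in NP is direct: since $\gamma_1$ is functional, one can guess a mapping $\mu \in \repspnrpar{\gamma_1}(\doc)$ and certify in polynomial time that, for every disjunct $\gamma_2^i$, the restriction $\mu|_{\vars(\gamma_2^i)}$ does not lie in $\repspnrpar{\gamma_2^i}(\doc)$; each such check is polynomial because each $\gamma_2^i$ is disjunction-free and sequential, hence functional for its own variables (it admits a unique mapping on $\doc$ whenever it matches).

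For NP-hardness, I would reduce from unrestricted 3SAT. The obstacle is that the $\gamma_2$ built in the proof of Theorem~\ref{thm:diffrgxNPcomp} has every variable appearing in every disjunct, so the 3-occurrence bound is violated; and restricting to 3SAT with bounded occurrences is not an option either, since 3SAT with at most three occurrences per variable is always satisfiable (Tovey), trivializing the reduction. My approach is therefore to split each SAT variable into one fresh copy per occurrence and glue the copies together with implication disjuncts arranged in a directed cycle.

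Given $\varphi = C_1 \wedge \cdots \wedge C_m$ over $x_1,\ldots,x_n$, where $x_i$ has $k_i$ occurrences, introduce copies $v_i^1,\ldots,v_i^{k_i}$ (one per occurrence), fix a global ordering $v_1,\ldots,v_N$ with $N = \sum_i k_i$, and let $\doc \df \mathtt{a}^N$. Take
$$ \gamma_1 \df \beta_1 \cdots \beta_N, \qquad \beta_p \df (v_p\{\emptyword\} \cdot \mathtt{a}) \vee v_p\{\mathtt{a}\}, $$
which is functional, and whose mappings biject with Boolean assignments to the copies under the encoding $v_p = \mspan{p}{p{+}1}$ for $\true$ and $v_p = \mspan{p}{p}$ for $\false$. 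Then set $\gamma_2 \df \bigvee_i \gamma_2^i$ as a disjunction of disjunction-free disjuncts of two kinds: \emph{clause disjuncts} $\gamma_2^{C_j}$, one per $C_j$, that pin the three copies occurring in $C_j$ to the unique assignment falsifying $C_j$ (a direct adaptation of the $\delta_\ell$-gadget in the proof of Theorem~\ref{thm:diffrgxNPcomp}, with intermediate $\mathtt{a}$-blocks filling the unused positions); and \emph{implication disjuncts}, one per directed edge of the cycle $v_i^1 \Rightarrow v_i^2 \Rightarrow \cdots \Rightarrow v_i^{k_i} \Rightarrow v_i^1$ for each $i$ with $k_i \ge 2$, where the disjunct for an edge $v \Rightarrow w$ pins $v$ to $\true$ and $w$ to $\false$ (the unique violating assignment). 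Each copy $v_i^j$ thus appears in exactly one clause disjunct and in at most two implication disjuncts (as source of one cycle edge and target of the previous), meeting the 3-disjunct bound; note also that $\gamma_2$ remains sequential because sequentiality only constrains concatenations, not disjunctions.

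For correctness, any satisfying assignment $\tau$ of $\varphi$ yields a mapping $\mu \in \repspnrpar{\gamma_1}(\doc)$ in which every $v_i^j$ takes the value $\tau(x_i)$; no implication disjunct matches (all copies of each $x_i$ agree) and no clause disjunct matches (some literal of each $C_j$ is satisfied), so $\mu \in \repspnrpar{\gamma_1 \setminus \gamma_2}(\doc)$. Conversely, any $\mu$ in the difference avoids every implication disjunct, which forces all copies of each $x_i$ to share a common Boolean value $\tau(x_i)$ (otherwise the cycle contains a $\true\to\false$ edge); avoiding every clause disjunct then means $\tau$ satisfies every clause. The main subtlety is engineering the cycle gadget so that every copy contributes at most two implication disjuncts while still enforcing equality across an arbitrarily long chain of copies; a single directed cycle achieves exactly this, which is why the constant $3$ suffices.
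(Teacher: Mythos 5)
Your reduction is correct, but it takes a genuinely different route from the paper's. The paper does not reduce from unrestricted 3SAT; it reduces from the variant of SAT in which every clause has two \emph{or} three literals and every variable occurs in at most three clauses, which is $\NP$-complete by Tovey's paper (the same reference you invoke). Your claim that bounded-occurrence SAT is ``not an option'' because it is always satisfiable is only half right: that trivialization applies to the exactly-three-literals version, and Tovey's mixed $2$/$3$-literal version is precisely what the paper exploits, yielding one disjunct $\gamma_2^i$ per clause and the occurrence bound for free. Your workaround---splitting each SAT variable into one copy per occurrence and enforcing equality of the copies via implication disjuncts arranged in a directed cycle---is in essence an inlined proof of Tovey's hardness result, embedded directly into the spanner reduction; it is self-contained and the bookkeeping (each copy in one clause disjunct plus at most two cycle-edge disjuncts) is right, at the cost of a larger construction. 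Two further differences are worth noting. First, the paper deliberately makes $\gamma_1$ disjunction-free by switching the document to $(\mathtt{bab})^n$ and using $\mathtt{b}\bind{x_i}{\mathtt{a}\wild}\cdot\mathtt{a}\wild\mathtt{b}$ blocks; your $\gamma_1$ uses the disjunctive gadget $(\bind{v_p}{\emptyword}\cdot\mathtt{a})\vee\bind{v_p}{\mathtt{a}}$ from Theorem~\ref{thm:diffrgxNPcomp}, which still satisfies the stated hypothesis (only functionality is required) but proves a slightly weaker form of the hardness. Second, in your $\NP$-membership argument the parenthetical ``it admits a unique mapping on $\doc$ whenever it matches'' is false for general disjunction-free sequential formulas (e.g., $\mathtt{a}\wild\cdot\bind{x}{\mathtt{a}\wild}\cdot\mathtt{a}\wild$ has many mappings on $\mathtt{a}\mathtt{a}$); what you actually need, and what does hold, is that each disjunct is functional for its own variables, so checking for a compatible mapping reduces to the membership test $\mu\restrict\vars(\gamma_2^i)\in\repspnrpar{\gamma_2^i}(\doc)$, which is polynomial for functional regex formulas. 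Neither issue affects the validity of the overall argument.
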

\begin{proof}
	This proof is an adaption of the proof of Theorem~\ref{thm:diffrgxNPcomp},
	using mostly the same notation.
	Instead of a general 3CNF formula, let $\varphi=C_1\wedge\ldots\wedge C_m$
	be a CNF formula, such that every clause
	$C_i$ contains either 2 or 3 literals, and each of the  variables appears in at most
	$3$ clauses. 
	Deciding satisfiability for such a formula is still
	NP-complete~\cite{DBLP:journals/dam/Tovey84}.
	
	For $\gamma_1$ to not have any disjunctions, we first set 
	set $\doc=(\mathtt{bab})^n$ for some $\mathtt{a},\mathtt{b}\in\alphabet$.
	We then define
	\begin{equation*}
		\gamma_1=(\mathtt{b}\bind{x_1}{\mathtt{a}\wild}\cdot\mathtt{a}\wild\mathtt b)
		\cdots(\mathtt{b}\bind{x_n}{\mathtt{a}\wild}\cdot\mathtt{a}\wild\mathtt b).
	\end{equation*}
	Intuitively $\gamma_1$ encodes all of the possible assignments.
	The regex formula $\gamma_2$ is defined analogously to $\gamma_2$ in 
	the proof of Theorem~\ref{thm:diffrgxNPcomp} with an adaptation to the new input document and a slight simplification of the $\gamma_2^i$s (since we do not
	need $\gamma_2$ to be functional any more). 
	Formally, we set $$\gamma_2^i= (\mathtt{bab})^{i_1-1} \delta_{i_1}(\mathtt{bab})^{i_2-i_1-1}\delta_{i_2}(\mathtt{bab})^{n-i_2}$$ if only variables $x_{i_1},x_{i_2}$ with $i_1<i_2$ appear in clause $C_i$, and $$\gamma_2^i= (\mathtt{bab})^{i_1-1} \delta_{i_1}(\mathtt{bab})^{i_2-i_1-1}\delta_{i_2}(\mathtt{bab})^{i_3-i_2-1}\delta_{i_3}(\mathtt{bab})^{n-i_3}$$ if variables $x_{i_1},x_{i_2},x_{i_3}$  with $i_1<i_2<i_3$ appear in clause $C_i$.
	
	By the choice of the 3CNF formula $\varphi$, every variable $x_j$ appears in at most
	three regex formulas of the form $\gamma_2^i$. 
	Correctness of this reduction
	can be shown analogously to that of Theorem~\ref{thm:diffrgxNPcomp}.
\end{proof}

We conclude that evaluating $\gamma_1 \setminus \gamma_2$ remains hard even if $\gamma_1$ is functional (and hence also semi-functional for the common variables)  and $\gamma_2$ is a disjunction of disjunction-free regex formulas, and each of $\gamma_2$'s variables appears in at most three such disjuncts. 

It is open whether the problem becomes tractable if the variables are limited to at most one or two disjuncts.

\section{Extraction Complexity}\label{sec:exp}

\def\L{\mathcal{L}}

In this section, we discuss queries that are defined as RA expressions
over schemaless spanners given in a representation language $\L$
(e.g., regex formulas), which we refer to as the language of the
\e{atomic} spanners.  Formally, an \emph{RA tree} is a directed and
ordered tree whose inner nodes are labeled with RA operators, the
out-degree of every inner node is the arity its RA operator, and each
of the leaves is a placeholder for a schemaless spanner.  For
illustration, Figure~\ref{fig:ratree1} shows an RA tree $\tau$, where
the placeholders are the rectangular boxes with the question marks;
the dashed arrows should be ignored for now. The RA tree corresponds
to the relational concept of a \e{query tree} or a \e{logical query
	plan}~\cite{DBLP:books/ph/Garcia-MolinaUW99,DBLP:journals/cacm/SmithC75}.
As in the rest of the paper, we restrict the discussion to the RA
operators projection, union, natural join, and difference.

Let $\L$ be a representation language for atomic spanners, and let
$\tau$ be an RA tree.  An \e{instantiation} of $\tau$ assigns a
schemaless spanner representation from $\L$ to every placeholder, and
a set of variables to every projection. For example,
Figure~\ref{fig:ratree1} shows an instantiation $I$ for $\tau$ via the
dashed arrows; here, we can think of $\L$ as the class of sequential
regex formulas, and so, each $\alpha$ expression is a sequential regex
formula. 

An instantiation $I$ of $\tau$ transforms $\tau$ into an actual
schemaless spanner representation, where $\tau$ is the parse tree of
its algebraic expression. We denote this representation by
$I[\tau]$. As usual, by $\repspnrpar{I[\tau]}$ we denote the actual
schemaless spanner that $I[\tau]$ represents.

\newcommand{\Ltree}{\mathcal{L}}
\newcommand{\instfun}{\iota}
\newcommand{\treeop}{\xi} 
\newcommand{\extrtree}{\tau}
\newcommand{\insttree}[1]{\extrtree^{#1}}
\newcommand{\opstree}{\ops_{\extrtree}}

\begin{example}\label{ex:sec5studrec}
	Assume that the input document $\doc_{\mathsf{Students}}$ from the
	earlier examples is now extended and contains additional information
	about the students, including recommendations they got from their
	professors and previous hires.  Let us assume that every line begins
	with a student's name and contains information about that student.
	Let us also assume that we have the following functional regex
	formulas:
	\begin{itemize}
		\item regex formula $\regex{sm}$  with capture variables $\exvar{stdnt}, \exvar{ml}$ that extracts  names with their corresponding email addresses;
		\item
		regex formula $\regex{sp}$  with  variables $\exvar{stdnt}, \exvar{phn}$ that extracts names with their corresponding phone numbers;
		\item
		regex formula $\regex{nr}$  with  variables $\exvar{stdnt},
		\exvar{rcmnd}$ that extracts names with their corresponding recommendations.
	\end{itemize} 
	Note that all of the regex formulas are functional, that is, they do
	not output partial mappings. The following query extracts the
	students that not have recommendations.
	$$
	\pi_{\{\exvar{stdnt}\}}  \Big( (\regex{sm} \join \regex{sp}) \setminus (\regex{nr}) \Big)
	$$
	This query is $I[\tau]$ for the RA tree $\tau$ and the instantiation
	$I$
	of Figure~\ref{fig:ratree1}.  This query
	defines the spanner $\repspnrpar{I[\tau]}$, and the set of extracted
	spans is $\repspnrpar{I[\tau]}(\doc_{\mathsf{Students}})$.
\end{example}

\begin{figure}[t]
	\begin{center}
		\input{ratree1.pspdftex}
	\end{center}
	\vskip-1em
	\caption{\label{fig:ratree1}An RA tree $\tau$ with an instantiation
		$I$}
	\vskip-1em
\end{figure}

We present a complexity measure that is unique to spanners, namely the
\e{extraction complexity}, where the RA tree $\tau$ is regarded fixed
and the input consists of both the instantiation $I$ and the input
document $\doc$. Specifically, the \e{evaluation problem} for an RA
tree $\tau$ is that of evaluating $\repspnrpar{I[\tau]}(\doc)$, given
$I$ and $\doc$. Similarly, the \e{nonemptiness problem} for an RA tree
$\tau$ is that of deciding whether $\repspnrpar{I[\tau]}(\doc)$ is
nonempty, given $I$ and $\doc$.

Clearly, some RA trees have an intractable nonemptiness and,
consequently, an intractable evaluation. For example, if $\L$ is the
class of sequential regex formulas and $\tau$ is the RA tree that
consists of a single natural-join node, then the nonemptiness problem
for $\tau$ is NP-complete (Theorem~\ref{thm:joinseqhard}). Also, if
$\L$ is the class of functional regex formulas and $\tau$ is the RA
tree that consists of a single difference node, then the nonemptiness
problem for $\tau$ is NP-complete (Theorem~\ref{thm:diffrgxNPcomp}).
In contrast, by composing the positive results established in
Sections~\ref{sec:join} and~\ref{sec:diff}, we obtain the following
theorem, which is a consequence of Lemma~\ref{lem:fptjoin} and
Lemma~\ref{thm:diffseqkvars}.
\begin{theorem}\label{thm:extractioncomp}
	Let $\Ltree$ be the class of sequential VAs. Let $k$ be a fixed
	natural number and $\extrtree$ an RA tree.  The evaluation problem
	for $\extrtree$ is solvable with polynomial delay, assuming that for
	all join and difference nodes $v$ of $I[\tau]$, the left and right
	subtrees under $v$ share at most $k$ variables.
\end{theorem}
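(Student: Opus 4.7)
The plan is to process the RA tree $\tau$ bottom-up and, at every node, compile the corresponding sub-expression into a single sequential VA; the final VA at the root is then fed into Theorem~\ref{thm:enumerationseq}. Since $\tau$ is fixed, there is a constant number $|\tau|$ of compilation steps, so it suffices to show that each step runs in time polynomial in $|I|+|\doc|$ and produces a sequential VA of polynomial size.

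For the leaves, $I$ already supplies sequential VAs. For the inner nodes I would handle the four operators as follows. For \emph{union}, introduce a fresh initial state connected by epsilon transitions to the initial states of the two operand VAs; sequentiality is preserved because each accepting run is entirely a run of one operand. For \emph{projection} onto a set $Y$, replace every transition labeled $\vop{v}$ or $\vcl{v}$ with $v\notin Y$ by an epsilon transition; the construction is linear and preserves sequentiality. For a \emph{natural join} node, the assumption on $I[\tau]$ guarantees that the number of variables shared by the two sub-VAs is at most $k$, so Lemma~\ref{lem:fptjoin} produces an equivalent sequential VA; since $k$ is a fixed constant, the FPT bound collapses to a polynomial in $|I|+|\doc|$. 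For a \emph{difference} node, the same bounded-sharing assumption lets us invoke Lemma~\ref{thm:diffseqkvars}, which yields, in polynomial time, a sequential ad-hoc VA $A_\doc$ that realizes the difference \emph{on} the document $\doc$ currently at hand.

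By straightforward induction over the height of $\tau$, we obtain for every sub-tree $\tau'$ a sequential VA $A_{\tau'}$ with $\repspnrpar{A_{\tau'}}(\doc) = \repspnrpar{I[\tau']}(\doc)$. Applying Theorem~\ref{thm:enumerationseq} to the VA $A_\tau$ built at the root and to $\doc$ then enumerates $\repspnrpar{I[\tau]}(\doc)$ with polynomial delay, which is exactly the claim.

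The main obstacle I anticipate is controlling the polynomial blow-up across the tree: composing a constant number of polynomial-size constructions stacks exponents, and we also need to verify that ad-hoc VAs produced by a lower-level difference node are still legitimate inputs to higher-level operations. Both issues are mitigated by the fact that $\tau$ is fixed (so the composition of polynomials is still a polynomial with a fixed exponent) and by the observation that, since we only ever evaluate on the same $\doc$ used to build the ad-hoc VAs, it is immaterial whether an intermediate sequential VA agrees with $\repspnrpar{I[\tau']}$ on all documents or only on $\doc$; the correctness argument for the higher-level union, projection, join, and difference constructions only uses that the operand VAs are sequential and give the correct output on $\doc$, both of which are preserved.
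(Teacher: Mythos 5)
Your proposal is correct and follows essentially the same route as the paper, which derives this theorem directly as a consequence of Lemma~\ref{lem:fptjoin} and Lemma~\ref{thm:diffseqkvars} via exactly the bottom-up compilation you describe (with the standard union/projection constructions for the remaining operators). In fact you make explicit two points the paper leaves implicit---that the fixed tree keeps the stacked polynomial blow-up polynomial, and that ad-hoc VAs produced at lower difference nodes remain legitimate operands higher up because only agreement on the specific document $\doc$ is ever needed.
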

We restate that, while static compilation suffices for the positive
operators, we need ad-hoc compilation to support the difference.
Interestingly, the ad-hoc approach allows us to incorporate into the
RA tree other representations of schemaless spanners, which can be
treated as black-box schemaless spanners, as long as these spanners
can be evaluated in polynomial time and are of a bounded degree. In
turn, the \emph{degree} of a schemaless spanner~$S$ is the maximal
cardinality of a mapping produced over all possible documents, that
is, $\max\{ |\dom(\mu)| \mid \doc \in \Sigma^*, \mu \in S(\doc) \}$.

Formalizing the above, we can conclude from
Theorem~\ref{thm:extractioncomp} a generalization that allows for
black-box schemaless spanners. To this end, we call a representation
language $\L'$ for schemaless spanners \e{tractable} if
$\repspnrpar{\beta}(\doc)$ can be evaluated in polynomial time (for
some fixed polynomial), given $\beta\in\L'$ and $\doc \in \Sigma^*$,
and we call $\L'$ \e{degree bounded} if there is a fixed natural
number that bounds the degree of all the schemaless spanners
represented by expressions in $\L'$.

\begin{corollary}\label{cor:extractioncompext}
	Let $\L'$ be a tractable and degree-bounded representation system
	for schemaless spanners, and let $\L$ be the union of $\L'$ and the
	class of all sequential VAs.  Let $k$ be a fixed natural number and
	let $\extrtree$ be an RA tree.  The evaluation problem of
	$\extrtree$ is solvable with polynomial delay, assuming that for all
	join and difference nodes $v$ of $I[\tau]$, the left and right
	subtrees under $v$ share at most $k$ variables.
\end{corollary}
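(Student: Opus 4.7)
The plan is to reduce Corollary~\ref{cor:extractioncompext} to Theorem~\ref{thm:extractioncomp} via an ad-hoc compilation step at the leaves of $\extrtree$ that are labeled by elements of $\L'$. Specifically, given an instantiation $I$ and a document $\doc$, I build a modified instantiation $I_\doc$ that agrees with $I$ on every leaf labeled by a sequential VA and replaces every leaf labeled by a black-box spanner $\beta \in \L'$ by an ad-hoc sequential VA $A_\beta^\doc$ with $\repspnrpar{A_\beta^\doc}(\doc) = \repspnrpar{\beta}(\doc)$ and $\vars(A_\beta^\doc) \subseteq \vars(\beta)$. Since the operators $\cup$, $\pi$, $\join$, $\setminus$ are evaluated pointwise at $\doc$, one shows inductively along $\extrtree$ that $\repspnrpar{I_\doc[\extrtree]}(\doc) = \repspnrpar{I[\extrtree]}(\doc)$; and since the replacement never enlarges any leaf's variable set, the intersection of variables at every join and difference node of $\extrtree$ is at most what it was under $I$, so the $k$-bound hypothesis survives the substitution. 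Theorem~\ref{thm:extractioncomp} then applies to $(\extrtree, I_\doc, \doc)$.

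The construction of $A_\beta^\doc$ uses both assumptions on $\L'$. By tractability the set $M \df \repspnrpar{\beta}(\doc)$ can be computed in polynomial time, so in particular $|M|$ is polynomially bounded in $|\beta|$ and $|\doc|$. By degree-boundedness there is a fixed constant $d$ with $|\dom(\mu)| \le d$ for every $\mu \in M$. I construct $A_\beta^\doc$ with a fresh initial state $q_0$ and, for each $\mu \in M$, a dedicated linear branch ending at its own accepting state. This branch reads $\sigma_1, \dots, \sigma_{|\doc|}$ in order and, around position $i$, inserts at most $2d$ variable operations in a canonical validity-preserving order: first all closings $\vcl{v}$ for variables $v$ with $\mu(v)$ ending at $i$ and starting strictly before $i$; then, for each $v$ with $\mu(v) = \mspan{i}{i}$, a $\vop{v}$ immediately followed by a $\vcl{v}$; and finally all openings $\vop{v}$ for variables $v$ with $\mu(v)$ starting at $i$ and ending strictly after $i$. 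Each branch has length $O(|\doc| + d)$ and opens and closes every variable at most once, so $A_\beta^\doc$ is sequential and of polynomial size, and by construction $\repspnrpar{A_\beta^\doc}(\doc) = M$.

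The main technical point I expect to have to argue carefully is the soundness of the replacement: the equivalence between $A_\beta^\doc$ and $\beta$ holds only on the specific input $\doc$, not on arbitrary documents, so the reduction is inherently document-specific. This is harmless here because the entire expression $I_\doc[\extrtree]$ is evaluated on the same $\doc$ and the RA operators act on the per-document outputs only, so an induction on $\extrtree$ transfers the leafwise agreement on $\doc$ to agreement at the root on $\doc$. Once this is in place, the whole pre-processing runs in time polynomial in $|I| + |\doc|$, $I_\doc$ has polynomial size, its leaves are sequential VAs, and the $k$-bound at join and difference nodes still holds, so invoking Theorem~\ref{thm:extractioncomp} on $(\extrtree, I_\doc, \doc)$ yields the claimed polynomial-delay enumeration of $\repspnrpar{I[\extrtree]}(\doc)$.
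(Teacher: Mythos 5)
Your proposal is correct and follows exactly the route the paper intends (the paper gives no explicit proof of this corollary, only the surrounding discussion): replace each black-box leaf by a document-specific sequential VA obtained by materializing $\repspnrpar{\beta}(\doc)$ --- which is polynomial by tractability and yields bounded-domain mappings by degree-boundedness --- as a disjoint union of linear branches, exactly as in the construction of the VA $B$ in the proof of Lemma~\ref{thm:diffseqkvars}, and then invoke Theorem~\ref{thm:extractioncomp}. Your explicit treatment of the ordering of variable operations within a position and of the preservation of the $k$-bound under substitution is a correct and welcome filling-in of details the paper leaves implicit.
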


Combining such black-box schemaless spanners in the instantiated RA
tree increases the expressiveness, as it allows us to incorporate
spanners that are not (and possibly cannot be) described as RA
expressions over VAs, such as string
equalities~\cite{DBLP:journals/jacm/FaginKRV15}.  Other examples of
such spanners are part of speech (POS) taggers, dependency parsers,
sentiment analysis modules, and so on.

\newcommand{\bb}[1]{\texttt{#1}}
\begin{example}
	Following Example~\ref{ex:sec5studrec}, suppose that we now wish to
	extract the students that do not have any \e{positive}
	recommendations.  Assume we have a black-box spanner for sentiment
	analysis, namely $\bb{PosRec}$, with the variables $\exvar{stdnt}$
	and $\exvar{posrec}$, that extract names and their corresponding
	positive recommendation. Note that this spanner has the degree
	$2$. We can replace $\regex{nr}$ in the instantiation $I$ of
	Figure~\ref{fig:ratree1} with $\bb{PosRec}$, and thereby obtain the
	desired result.  If $\bb{PosRec}$ can be computed in polynomial
	time, then the resulting query can be evaluated in polynomial delay.
\end{example}

\section{Conclusions}\label{sec:conclusions}

We have studied the complexity of evaluating algebraic expressions
over schemaless spanners that are represented as sequential regex
formulas and sequential VAs. We have shown that we hit computational
hardness already in the evaluation of the natural join and difference
of two such spanners. In contrast, we have shown that we can compile
the natural join of two sequential VAs (and regex formulas) into a
single sequential VA, in polynomial time, if we assume a constant
bound on the number of common variables of the joined spanners;
hence, under this assumption, we can evaluate the natural join with
polynomial delay.  As an alternative to this assumption, we have
proposed and investigated a new normal form for sequential spanners,
namely disjunctive functional, that allows for such efficient
compilation and evaluation.  

Bounding the number of common variables  between the involved spanners
also allows to evaluate the difference with polynomial delay, even
though this cannot be obtained by compiling into a VA---an exponential
blowup in the number of states is necessary already for Boolean
spanners. Evaluation with polynomial delay is then obtained via an
ad-hoc compilation of both the spanners and the document into a VA. We
have shown how the ad-hoc approach can be used for establishing upper
bounds on general RA trees over regex formulas, VAs, and even
black-box spanners of a bounded dimension. This has been done within
the concept of \e{extraction complexity} that we have proposed as new
lens to analyzing the complexity of spanners.

We believe that our analysis has merely touched the tip of the iceberg
on the algorithms that can be devised under the guarantee of tractable
extraction complexity. In particular, we have proposed sufficient
conditions to avoid the inherent hardness of the natural join and
difference, but it is quite conceivable that less restrictive
conditions already suffice. Alternatively, are there conditions of
extractors (possibly incomparable to ours) that are both common in
practice and useful to bound the extraction complexity?

{\bibliographystyle{plain}

\bibliography{main}}

\onecolumn

\appendix

\section{Proofs for Section~\ref{sec:join}}

\subsection{Proof of Lemma~\ref{lem:seqtosemifunc}}
\begin{replemma}{\ref{lem:seqtosemifunc}}
	\lemsectosemi
\end{replemma}

To prove this lemma, we use the following auxiliary lemma.
\begin{lemma}
	Let $A$ be a sequential $\VA$ that is semi-funcitonal for $Y\subseteq \vars(A)$ and let $x\in \vars(A)\setminus Y$. There is an algorithm that outputs a sequential VA $A^\prime$ that is equivalent to $A$ and semi-functional for $Y\cup \{x\}$ with $O(2|Q|)$ states and $O(2|\delta|)$ transitions in $O(2(|Q|+|\delta|))$ steps where $Q$ is the set of states of $A$ and $\delta$ is $A$'s transition function.   
\end{lemma}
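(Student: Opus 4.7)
The plan is a state-cloning construction that tags each state of $A$ with the current configuration of $x \in \{\vc{u}, \vc{o}, \vc{c}\}$ along a run reaching it. I will take the new state set $Q' \subseteq Q \times \{\vc{u}, \vc{o}, \vc{c}\}$, restricted to reachable pairs, with initial state $(q_0, \vc{u})$ and accepting set $\{(q, \vc{u}), (q, \vc{c}) \mid q \in F\}$; the $(q, \vc{o})$ pairs are never accepting, since a valid accepting run cannot leave $x$ opened but unclosed. Transitions lift from $\delta$ as follows: every letter, $\epsilon$, or variable operation on some $y \neq x$ lifts uniformly to $((q,\ell), \sigma, (p,\ell))$ for each $\ell$; a transition $(q, \vop{x}, p)$ lifts only to $((q, \vc{u}), \vop{x}, (p, \vc{o}))$; and $(q, \vcl{x}, p)$ lifts only to $((q, \vc{o}), \vcl{x}, (p, \vc{c}))$. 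A single reachability sweep produces $A'$ in time linear in $|Q| + |\delta|$, with a small constant blow-up in the number of states and transitions, matching the bounds claimed in the statement.

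For correctness, I would exhibit a bijection between accepting runs of $A$ on an arbitrary document $\doc$ and accepting runs of $A'$ on $\doc$. Annotating each state visited by an accepting run of $A$ with the current status of $x$ yields a run of $A'$ whose final state is accepting; conversely, erasing the tag of an accepting run of $A'$ gives an accepting run of $A$ with the same sequence of variable operations. Since the annotation never creates or destroys variable operations, the output mapping is preserved, and therefore $\repspnrpar{A'} = \repspnrpar{A}$.

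Semi-functionality of $A'$ for $Y \cup \{x\}$ will follow from two observations. Semi-functionality for $x$ is immediate by construction, because every run reaching a pair $(q, \ell)$ has $x$ in configuration $\ell$, so $\tilde{c}_{(q,\ell)}(x) = \ell$. Semi-functionality for any $y \in Y$ transfers from $A$, because the $y$-configuration at $(q, \ell)$ is inherited from that of $q$ in $A$ (the lift preserves every $y$-operation verbatim). Sequentiality of $A'$ holds because $\vop{x}$ can fire only from $\vc{u}$-tagged states and $\vcl{x}$ only from $\vc{o}$-tagged states, so no accepting run violates validity on $x$; validity on the remaining variables is inherited from $A$. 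The main subtlety will be making the run bijection precise enough to conclude equivalence, which reduces to a routine induction on run length showing that the tag at each visited state is uniquely forced by the prefix of variable operations read so far.
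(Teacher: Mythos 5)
Your proposal is correct and follows essentially the same idea as the paper's proof: both tag states by the current status of $x$ along a run, the only cosmetic difference being that the paper first computes $\varconfseq^A_q(x)$ and duplicates only the ambiguous ($\vc{d}$) states into $\vc{u}$- and $\vc{c}$-copies, whereas you build the full product with $\{\vc{u},\vc{o},\vc{c}\}$ and prune unreachable pairs, which yields an isomorphic automaton on the useful part. Your run-annotation bijection and the transfer of semi-functionality for $Y$ match the paper's argument, and the stated $O(|Q|)$/$O(|\delta|)$ bounds are met.
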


Lemma~\ref{lem:seqtosemifunc} follows directly from the previous lemma as 
we can invoke the algorithm iteratively for all of the variables of $X$.
It is therefore left to present the proof of this lemma. The intuition is simple, we create a new VA that is equivalent to $A$ by replacing each state $q$ for which $\varconfseq_q(x) = \vc{d}$ with two states $q^{\vc{u}}$ and $q^{\vc{c}}$ and re-wiring the transition in a way that in all of the paths from $q_0$ to $q^{\vc{u}}$ variable $x$ does not appear and in all of the paths from $q_0$ to $q^{\vc{c}}$ variable $x$ was closed. 
We then prove that the resulting automaton remains semi-functional for $Y$ and is also semi-functional for $\{x\}$ (in contrast to $A$).

Since we consider two VAs $A$ and $A^\prime$, to avoid ambiguity we add to the variable configuration function $\varconfseq_q$ an upper script that indicates to  which automaton we refer to.
Formally, let us denote $A\df (Q,q_0, \delta, F)$. 
We denote by $\tilde{Q}$ the subset of $Q$ that consists of those states $\tilde{q}$ for which 
$\varconfseq^A_{\tilde{q}}(x)= \vc{d}$, these are the states that we wish to replace.
We then define that the VA 
$A^\prime \df (Q^\prime, q^{\prime}_0,\delta^{\prime}, F^{\prime})$  such that 
\begin{enumerate}
	\item
	$Q^\prime =( Q \setminus \tilde{Q}) \cup \{ \tilde{q}^{\vc{c}} \vert \tilde{q}\in \tilde{Q} \} \cup  \{ \tilde{q}^{\vc{u}} \vert \tilde{q}\in \tilde{Q} \}$ and
	\item
	if $q_0\in \tilde{Q}$ then 	$q^{\prime}_0 = q^\vc{u}_0$, otherwise 
	$q^{\prime}_0 = q_0$, and
	\item
	$F^\prime = \{q \vert q\in F \setminus \tilde{Q} \} \cup \{q^{\vc{c}}, q^{\vc{u}}\vert q\in F\cap \tilde{Q} \} $,
\end{enumerate}
and the transition function $\delta^\prime$ is as described now.
For every $(p, o, q) \in \delta$ with $\varconfseq^A_q(x) \ne \vc{d}$, we set $(p, o, q) \in \delta^\prime$. In addition, 
for every $(p, o, q) \in \delta$ with $\varconfseq^A_q(x)=\vc{d}$,
\begin{itemize}
	\item 
	if $\varconfseq^A_p(x) = \vc{d}$ then
	$(p^{\vc{u}}, o ,  q^{\vc{u}})\in  \delta^\prime $ and $(p^{\vc{c}}, o ,  q^{\vc{c}}) \in \delta^\prime$;
	\item
	if $\varconfseq^A_p(x) = \vc{o}$ then
	$(p^{\vc{o}}, o ,  q^{\vc{c}})  \in\delta^\prime $  ({ and in this case $o = \vcl{x}$});
	\item
	if $\varconfseq^A_p(x) = \vc{c}$ then
	$(p^{\vc{c}}, o ,  q^{\vc{c}}) \in \delta^\prime $.
\end{itemize}

It is left to show that 
\begin{enumerate}
	\item~\label{enumit:equiv} $A^\prime$ is equivalent to $A$ and 
	\item~\label{enumit:semif} $A^\prime$ is semi-functional for $Y\cup \{x\}$.
\end{enumerate}

To prove (\ref{enumit:equiv}) we show that every mapping in $\repspnr{A^\prime }(\doc)$ is also a mapping in $\repspnr{A}(\doc)$ and vice versa. 
Indeed, for every accepting run of $A^\prime$ there might be two case $(a)$ it includes only states from $Q$ or $(b)$ it includes also states in $Q^\prime \setminus Q$. If $(a)$ then the claim is straightforward as $\delta^\prime \cap \Big( Q \times (\Sigma \cup \{ \epsilon\} \cup \Gamma_{\vars(A)}) \times Q \Big)$ equals $\delta$. If $(b)$ then we can divide the corresponding run according to the variable configuration of $x$ and obtain that the claim is a direct consequence of the definition of $\delta^\prime$.
To show that every mapping in $\repspnr{A}(\doc)$ is also a mapping in $\repspnr{A^\prime}(\doc)$ we take the corresponding map in $A$ and divide into segments according to the variable configuration of $A$, we can then use $\delta^\prime$ definition to construct an accepting run in $A^\prime$ for the same mapping.  

To prove (\ref{enumit:semif}), we first observe that $A^\prime$ is semi-functional for $Y$ (since otherwise it would imply that $A$ is not semi-functional for $Y$). It is therefore left to show that it is semi-functional for $\{x\}$. Note that from the way we defined $\delta^\prime$ we can conclude that for every state 
$p \in \{ \tilde{q}^{\vc{u}} \vert \tilde{q} \in \tilde{Q}  \}$ it holds that 
$\varconfseq^{A^\prime}_p(x) = \vc{w}$  and  
for every state 
$p \in \{ \tilde{ q }^{\vc{c}} \vert \tilde{q} \in \tilde{Q}  \}$ it holds that $\varconfseq^{A^\prime}_p(x) = \vc{c}$. For all other states $r$ the $\varconfseq^{A^\prime}_r$ is identical to $\varconfseq^{A}_r$.

\subsection{Proof of Proposition~\ref{prop:seqtodis}}
\begin{repproposition}{\ref{prop:seqtodis}}
	\propseqtodis
\end{repproposition}

\paragraph{Sequential regex formulas to disjunctive functional regex formulas}
Let $\alpha$ be a sequential regex formula. 
We translate it into an equivalent disjunctive functional by defining the set $A(\alpha)$ of its disjuncts  recursively as follows:
\begin{itemize}
	\item
	if $\alpha = \emptyset$ then $A(\alpha) = \emptyset$;
	\item
	if $\alpha = \sigma$ then $A(\alpha) = \{\sigma\}$;
	\item
	if $\alpha = \epsilon$ then $A(\alpha) = \{\epsilon\}$;
	\item 
	if $\alpha = \alpha_1 \vee \alpha_2$ then 
	\begin{itemize}
		\item 
		if $ \vars (\alpha_1) = \vars( \alpha_2) = \emptyset$ then $A(\alpha) = \{\alpha_1 \vee \alpha_2\}$
		\item
		otherwise $A(\alpha) = A(\alpha_1) \cup A(\alpha_2)$;
	\end{itemize}
	\item 
	if $\alpha = \alpha_1 \cdot \alpha_2$ then 
	$A(\alpha) = 
	\{ \beta_1 \cdot \beta_2 \vert \beta_1 \in 	A\{ \alpha_1\}, \beta_2 \in  A\{\alpha_2 \} \}$;
	\item
	if $\alpha = (\alpha_1)^*$ then 
	$A(\alpha) = \{ \beta_0 \cdots \beta_n \vert 
	n\in \mathbb{N}, \beta_i \in A(\alpha_1)
	\}$;
	\item
	if $\alpha = x\{ \alpha_1 \}$ then $A(\alpha) = \{ x\{\beta \} \vert \beta \in A(\alpha_1) \}$
\end{itemize}
We can then conclude the desired by proving  the following lemma using a simple induction on $\alpha$'s structure.
\begin{lemma}
	If $\alpha$ is a sequential regex formula then
	$\repspnrpar{\alpha} = \bigvee_{ \gamma\in{A(\alpha)}} \repspnrpar{\gamma}$
	where each $\gamma$ is functional.
	In addition, $A(\alpha)$ is finite if $\alpha$ is finite.
\end{lemma}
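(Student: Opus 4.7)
The plan is to prove the lemma by structural induction on $\alpha$, simultaneously establishing that (i) every $\gamma \in A(\alpha)$ is a functional regex formula and (ii) $\repspnrpar{\alpha}(\doc) = \bigcup_{\gamma \in A(\alpha)} \repspnrpar{\gamma}(\doc)$ for every document $\doc$, and tracking the finiteness of $A(\alpha)$ as part of the same induction. It is cleaner to do the induction on the auxiliary semantics $\prepspnr{\cdot}$ (which tracks both a span and a mapping) rather than on $\repspnrpar{\cdot}$, since the latter is just the $\mspan{1}{|\doc|+1}$-slice of the former; the equivalence for $\repspnrpar{\cdot}$ then follows immediately.

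The base cases $\alpha \in \{\emptyset, \epsilon, \sigma\}$ are immediate, since $A(\alpha)$ is either empty or a singleton whose element has no variables and is trivially functional for $\emptyset$. For $\alpha = \alpha_1 \vee \alpha_2$: if both subformulas are variable-free then $\alpha$ is itself functional for $\emptyset$ and we take $A(\alpha) = \{\alpha\}$; otherwise $A(\alpha) = A(\alpha_1) \cup A(\alpha_2)$, functionality is inherited from the induction hypothesis, and the spanner equivalence follows directly from $\prepspnr{\alpha_1 \vee \alpha_2}(\doc) = \prepspnr{\alpha_1}(\doc) \cup \prepspnr{\alpha_2}(\doc)$. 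For $\alpha = \alpha_1 \cdot \alpha_2$, sequentiality gives $\vars(\alpha_1) \cap \vars(\alpha_2) = \emptyset$, so for every $\beta_1 \in A(\alpha_1)$ and $\beta_2 \in A(\alpha_2)$ the disjunctness of their variable sets makes $\beta_1 \cdot \beta_2$ functional; the spanner equivalence follows by distributing the definition of concatenation over the unions provided by the IH. The variable-binding case $\alpha = x\{\alpha_1\}$ is analogous: sequentiality gives $x \notin \vars(\alpha_1)$, so every $x\{\beta\}$ with $\beta \in A(\alpha_1)$ is functional for $\vars(\beta) \cup \{x\}$, and the equivalence follows from the definition of the binding operator.

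The main obstacle is the star case $\alpha = \alpha_1^*$. Sequentiality forces $\vars(\alpha_1) = \emptyset$, so $\alpha$ itself is already functional for $\emptyset$. However, the recursive definition $A(\alpha_1^*) = \{\beta_0 \cdots \beta_n \mid n \in \mathbb{N}, \beta_i \in A(\alpha_1)\}$ is literally infinite whenever $\alpha_1 \ne \epsilon$ (one concatenation per $n$), which contradicts the finiteness assertion. The cleanest resolution, which I would adopt, is to observe that by the IH every $\beta_i \in A(\alpha_1)$ is variable-free, and $\bigcup_{n \ge 0} \repspnrpar{\alpha_1^n}(\doc) = \repspnrpar{\alpha_1^*}(\doc)$ by definition of the Kleene star on schemaless spanners. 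Hence the entire infinite family collectively denotes the same schemaless spanner as the single variable-free formula $\alpha_1^*$, so without loss we may redefine $A(\alpha_1^*) \df \{\alpha_1^*\}$, which is a singleton containing a formula functional for $\emptyset$. With this correction, both functionality of each disjunct and finiteness of $A(\alpha)$ go through by induction, and the spanner equivalence in the star case is exactly the identity above.
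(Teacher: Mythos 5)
Your proof is correct and follows the route the paper intends (structural induction along the recursive definition of $A(\alpha)$), but the paper itself supplies essentially no argument --- it only asserts that the lemma follows ``using a simple induction on $\alpha$'s structure'' --- so the comparison is really between your worked-out induction and the paper's definition of $A$. Two points in your write-up go beyond what the paper records, and both are genuine improvements. First, strengthening the induction hypothesis to the auxiliary semantics $\prepspnr{\cdot}$ (pairs of a span and a mapping) rather than inducting on $\repspnrpar{\cdot}$ directly is not merely ``cleaner'' but necessary: the concatenation case needs to decompose matches on arbitrary sub-spans, and the statement for $\repspnrpar{\cdot}$ alone is too weak to carry the induction. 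Second, you are right that the paper's clause $A(\alpha_1^*) = \{\beta_0\cdots\beta_n \mid n\in\mathbb{N},\ \beta_i\in A(\alpha_1)\}$ yields an infinite set of syntactically distinct formulas whenever $A(\alpha_1)\neq\emptyset$, which directly contradicts the lemma's finiteness claim; your repair --- using sequentiality to get $\vars(\alpha_1)=\emptyset$, hence $\alpha_1^*$ is itself functional for $\emptyset$, and setting $A(\alpha_1^*)\df\{\alpha_1^*\}$ --- preserves both the functionality of every disjunct and the identity $\repspnrpar{\alpha} = \bigvee_{\gamma\in A(\alpha)}\repspnrpar{\gamma}$, and is clearly what the authors should have written (it mirrors their treatment of variable-free disjunctions). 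The remaining cases (disjointness of $\vars(\beta_1)$ and $\vars(\beta_2)$ in the concatenation case, $x\notin\vars(\beta)$ in the binding case) are handled exactly as the definition of functionality requires. In short: same approach, but your version actually closes the argument and corrects a defect in the construction it relies on.
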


\paragraph{Sequential VA to disjunctive functional VA}
Let $A$ be a sequential automaton with set of variables $\vars(A)$.
We iterate through all of the possible subsets $V$ of $\vars(A)$ and for each such subset we create a new VA $A_V$ that consists of all of the accepting runs of $A$ that include exactly the variables of $V$ and only them (this can be done, for instance, by a BFS on $A$ starting from its initial state $q_0$).
We then construct a new automaton $A^\prime$ that is the disjoint union of all of those $A_V$'s by adding a new initial state with transitions to all of the initial states of the $A_V$'s.

\section{Proofs for Section~\ref{sec:diff}}

\subsection{Proof of Lemma~\ref{thm:diffseqkvars}}
\begin{replemma}{\ref{thm:diffseqkvars}}
	\thmdiffseqkvars
\end{replemma}

First, 
if $\doc = \epsilon$ then there two possible cases:
\begin{itemize}
	\item
	$\repspnr{ A_2}(\doc) = \emptyset$ and then we set $A_{\doc}$ to be 
	$A_1$. 
	\item
	$\repspnr{ A_2}(\doc) \ne \emptyset$ and then 
	$\repspnr{ A_1 \setminus A_2}(\doc) = \emptyset$ since any two mappings are compatible. In this case we set $A_{\doc}$ to automaton for for $\emptyset$. 
\end{itemize}
Note that determining whether $\repspnr{ A_2}(\doc) = \emptyset$
can be done in polynomial time (see  Theorem~\ref{thm:enumerationseq}).

Second, we can assume that $A_2$ has no other variables except those that are in $A_1$ due to the following: 
$ \repspnrpar{A_1 \setminus A_2}(\doc) = \repspnrpar{A_1 \setminus \pi_{\vars(A_1) }A_2}(\doc) $
since 
if a mapping $\mu_1 \in \repspnrpar{A_1}(\doc)$ has a compatible mapping $\mu_2 \in \repspnrpar{A_2}(\doc)$ then  $\mu_2 \restrict  \vars(A_1)$ is also compatible for $\mu_1$ and is in  $\repspnr{\pi_{\vars(A_1)}{A_2}}(\doc)$.
On the other hand, if a mapping $\mu_1 \in \repspnrpar{A_1}(\doc)$ has a compatible mapping $\mu_2 \in \repspnrpar{\pi_{\vars(A_1)} A_2}(\doc)$, then no matter how we extend $\mu_2$, as long as we extend it with variables that are not in $\vars(A_1)$, it remains compatible to that $\mu_1$. 
Thus, if a mapping $\mu_1 \in \repspnrpar{A_1}(\doc)$ does not have a compatible mapping $\mu_2 \in \repspnrpar{A_2}(\doc)$ then it also does not have such in $\pi_{\vars(A_1)}\repspnrpar{A_2}(\doc)$.
Therefore, we may assume that $\vars(A_1) \supseteq \vars(A_2)$ and we denote $\vars(A_2)$ by $V$.

Third, we can also assume that $A_1$ is semi-functional for $V  \df  \vars(A_2)$ since if it is not we can translate it into such in polynomial time (see
Lemma~\ref{lem:seqtosemifunc}).

We now move to the construction of sequential VAs $A$ and $B$ with fixed  
$|\vars(A) \cap \vars(B)|$ for which we will prove that $\repspnr{A \join B}(\doc) = \repspnr{A_1 \setminus A_2}(\doc)$. 

\paragraph{Constructing $A$:}
For every $X\subseteq V$ we define $X^\prime = \{x^\prime \vert x \in X \}$.
We construct a VA $A$ that extends the $A_1$ such that
for every accepting run $\rho$ of $A_1$ that closes exactly the set $X$ of variables, there is a corresponding accepting run in $A$ that assigns the same values as $\rho$ to $X$, and in addition, assigns to each of the variables of $X^\prime$ the span $\mspan{1}{1}$ (indicating that the variables of $X$ were closed throughout $\rho$) and to each of the variables in $V^\prime \setminus X^\prime$ the span $\mspan{|\doc|+1}{|\doc|+1}$ (indicating that the variables of $V \setminus X$ were unseen throughout $\rho$). 

More formally, the VA $A$ consists of the disjoint copies $A_1^X$ of $A_1$ for $X\subseteq V$, along with their transitions, of an initial state $q_0$, and of an accepting state $q_f$. 
We then extend $A$ as follows:
\begin{itemize}
	\item
	for every $X$, we add a path from $q_0$ to the initial state of $A_1^X$ that opens and then immediately closes all of the variables in $X^\prime $;
	\item 
	for every $X$ and every accepting state $q$ of $A_1^X$ with $c_q(x) = \vc{c}$ for every $x\in X$, we add a path from $q$ to the final state $q_f$ that opens and then immediately closes all of the variables in $V^\prime \setminus X^\prime$.
\end{itemize}
We then remove from $A$ all of the states that are not reachable throughout any accepting run and obtain a sequential VA.

Let $\mu$ be a mapping with $\dom(\mu)\subseteq V$. A mapping $\mu^\prime$ is called the \emph{marked extension} of $\mu$ if the following hold:
\begin{itemize}
	\item $\dom(\mu^\prime) = \dom(\mu) \cup V^\prime$ 
	\item for every $x \in \dom(\mu)$ it holds that 
	$\mu^\prime(x) = \mu(x)$;
	\item
	for every $x \in \dom{(\mu)}$ it holds that 
	$\mu^\prime(x^\prime) = \mspan{1}{1}$;
	\item
	for every $x \not \in \dom{(\mu)}$ it holds that   $\mu^\prime(x^\prime) = \mspan{|\doc|+ 1}{|\doc|+1}$.
\end{itemize}
We observe that the following holds:
\begin{lemma}\label{lem:diffadhocextens}
	Let $\mu_1,\mu_2$ be two mappings with $\dom(\mu_1) =\dom(\mu_2) \subseteq V $ and let 
	$\mu^\prime_1$ be the marked extension of $\mu_1$ and  $\mu^\prime_2$ the marked  extension of $\mu_2$. The following holds: 
	$\mu_1$ is compatible with $\mu_2$ if and only if 
	$\mu^\prime_1$ is compatible with $\mu^\prime_2$ 
\end{lemma}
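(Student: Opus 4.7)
The plan is to prove Lemma~\ref{lem:diffadhocextens} by a direct unpacking of the definitions of compatibility and marked extension, exploiting the assumption $\dom(\mu_1) = \dom(\mu_2)$ very heavily. First, I would observe that under this assumption, the marked extensions have identical domains: $\dom(\mu'_1) = \dom(\mu_1) \cup V' = \dom(\mu_2) \cup V' = \dom(\mu'_2)$. Call this common set $D$, so $D = \dom(\mu_1) \cup V'$. Compatibility of $\mu'_1$ and $\mu'_2$ is then equivalent to pointwise equality on $D$, and compatibility of $\mu_1$ and $\mu_2$ is equivalent to pointwise equality on $\dom(\mu_1)$.

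For the forward direction, assume $\mu_1$ and $\mu_2$ are compatible. Then $\mu_1(x) = \mu_2(x)$ for all $x \in \dom(\mu_1) = \dom(\mu_1) \cap \dom(\mu_2)$, and hence $\mu'_1(x) = \mu_1(x) = \mu_2(x) = \mu'_2(x)$ on $\dom(\mu_1)$. For each $x' \in V'$, the value $\mu'_i(x')$ is entirely determined by whether $x \in \dom(\mu_i)$: it is $\mspan{1}{1}$ if $x \in \dom(\mu_i)$ and $\mspan{|\doc|+1}{|\doc|+1}$ otherwise. Since $\dom(\mu_1) = \dom(\mu_2)$, the condition $x \in \dom(\mu_1)$ holds iff $x \in \dom(\mu_2)$, so $\mu'_1(x') = \mu'_2(x')$. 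Hence $\mu'_1$ and $\mu'_2$ agree on all of $D$, so they are compatible.

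For the backward direction, assume $\mu'_1$ and $\mu'_2$ are compatible. Then in particular they agree on $\dom(\mu_1) \subseteq D$, i.e., $\mu_1(x) = \mu'_1(x) = \mu'_2(x) = \mu_2(x)$ for all $x \in \dom(\mu_1) = \dom(\mu_1) \cap \dom(\mu_2)$, which is exactly compatibility of $\mu_1$ and $\mu_2$.

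There is no real obstacle here: the lemma is essentially a bookkeeping statement ensuring that the dummy variables in $V'$ do not spuriously create or destroy compatibility when the two mappings already have the same domain. The only subtlety to flag is that the lemma truly uses the assumption $\dom(\mu_1) = \dom(\mu_2)$; without it, the marked extensions could disagree on some $x' \in V'$ precisely when the underlying domains differ at $x$, which is the mechanism that the surrounding construction relies on later to detect incompatibility via the dummy variables rather than the original ones.
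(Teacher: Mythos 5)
Your proof is correct and follows essentially the same route as the paper's: both directions reduce to the observations that the marked extensions restrict to the original mappings on $V$ and that, because $\dom(\mu_1)=\dom(\mu_2)$, the dummy variables in $V'$ receive identical values in both extensions. Your version merely spells out in more detail the step the paper dispatches with ``this follows directly from $\dom(\mu_1)=\dom(\mu_2)$.''
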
 
\begin{proof}
	Assume that $\mu_1$ is compatible with $\mu_2$. 
	It is enough to show that 
	$\mu^\prime_1 \restrict {V^\prime} $ for and that 
	$\mu^\prime_2  \restrict {V^\prime}$ are compatible. 
	This follows directly from the fact that $\dom(\mu_1) = \dom(\mu_2)$.
	On the other hand,  assume that $\mu^\prime_1$ is compatible with $\mu^\prime_2$. 
	Since $\mu^\prime_1 \restrict V$ is identical to $\mu_1$ and $\mu^\prime_2 \restrict V$ is identical to $\mu_2$ we conclude that $\mu_1$ is compatible with $\mu_2$.
\end{proof}

The following lemmas describe the connection between $A_1$ and $A$:
\begin{lemma}\label{lem:diffseqadhocA}
	$\mu_1 \in \repspnr{A_1}(\doc)$ if and only if  $\mu \in \repspnr{A}(\doc)$ where $\mu$ is the marked extension of $\mu_1$.
\end{lemma}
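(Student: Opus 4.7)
I would prove the lemma by establishing a natural correspondence between accepting valid runs of $A_1$ on $\doc$ and accepting valid runs of $A$ on $\doc$, under which mappings get lifted to their marked extensions. For the forward direction, let $\rho_1$ be an accepting valid run of $A_1$ that produces $\mu_1$, and set $X := \dom(\mu_1) \cap V$, the set of common variables that $\rho_1$ actually closes. I would build a run $\rho$ of $A$ as follows: start at the fresh initial state $q_0$; traverse the newly added opening path into the copy $A_1^X$, which opens and immediately closes every variable of $X^\prime$ at document position $1$ (assigning each such variable the span $\mspan{1}{1}$); simulate $\rho_1$ step by step inside $A_1^X$; finally, from the reached accepting state of $A_1^X$ take the newly added closing path to $q_f$, which opens and immediately closes each variable of $V^\prime \setminus X^\prime$ at position $|\doc|+1$ (assigning span $\mspan{|\doc|+1}{|\doc|+1}$). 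The resulting run is valid and accepting, and the mapping it produces is exactly the marked extension $\mu$ of $\mu_1$.

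For the closing path to exist, we need $c_q(x) = \vc{c}$ for every $x \in X$ at the reached accepting state $q$ of $A_1^X$; this holds because $A_1$ is semi-functional for $V$, hence $c_q$ is well-defined at $q$, and $c_q(x) = \vc{c}$ holds precisely when $\rho_1$ closes $x$, i.e., when $x \in X$. For the backward direction, let $\rho$ be an accepting valid run of $A$ on $\doc$ producing $\mu$. By construction of $A$, $\rho$ is forced to decompose as an opening path associated with some $X \subseteq V$, followed by a run confined to the copy $A_1^X$, followed by a closing path to $q_f$. Extracting the middle segment yields an accepting valid run of $A_1$ on $\doc$ whose mapping $\mu_1$ assigns the same values to $\vars(A_1)$ as $\mu$ does, so $\mu_1 \in \repspnr{A_1}(\doc)$; inspecting the spans assigned to primed variables along the opening and closing paths then shows that $\mu$ is indeed the marked extension of $\mu_1$.

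The main obstacle is routine bookkeeping rather than any deep idea: one must verify that the ``opening path, $A_1^X$ simulation, closing path'' decomposition is really forced by the structure of $A$ (in particular, that no primed-variable transition occurs inside the simulation segment and that the $X$ picked on the prefix matches the one licensed by the suffix, which relies on semi-functionality of $A_1$ for $V$), and that the final step of removing unreachable states in the construction of $A$ does not delete any of the runs used in the forward direction.
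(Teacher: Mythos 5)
Your proposal is correct and follows essentially the same route as the paper's proof: decompose/compose accepting runs of $A$ as an opening path for $X'$, a simulated run of $A_1$ inside the copy $A_1^X$, and a closing path for $V'\setminus X'$. You are in fact slightly more careful than the paper on two points it glosses over --- taking $X=\dom(\mu_1)\cap V$ rather than $\dom(\mu_1)$, and justifying via semi-functionality for $V$ that the reached accepting state $q$ satisfies $c_q(x)=\vc{c}$ for all $x\in X$ so the closing path exists.
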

\begin{proof}
	Assume that $\mu_1 \in \repspnr{A_1}(\doc)$. Then there is an accepting run on $A_1$ on $\doc$ that corresponds with $\mu_1$.
	We build an accepting run of $A$ that corresponds with the marked extension $\mu$ of $\mu_1$ in the following way.
	We denote $X= \dom(\mu_1)$.
	The run starts with a path from $q_0$ to the initial state of the copy $A_1^{X}$ that opens and then closes all of the variables in $X^\prime$. We then continue with the accepting run of $A_1^X$ on $\mu_1$ (there is such since $A_1^X$ is a copy of $A_1$). Then from the accepting state $q$ we reached we continue with a path that opens and then closes all of the variables in $V^\prime \setminus X^\prime$. We can do that from the way we constructed $A$. Hence, we conclude that $\mu \in \repspnr{A}(\doc)$.
	
	Assume that $\mu \in \repspnr{A}(\doc)$ is the marked extension of $\mu_1$. There is a run of $A$ on $\doc$ that corresponds with $\mu$. We take the sub-run on the copy of $A_1$ in $A$ and obtain an accepting run of $A$ on $\doc$ that corresponds with $\mu_1$.Thus, we can conclude that 
	$\mu_1 \in \repspnr{A_1}(\doc)$.
\end{proof}

\paragraph{Constructing $B$:}

We construct $B$ as follows:
We compute all of the assignments in $\repspnr{A_2}(\doc)$ and replace each of those with their marked extension to obtain the set $M$ of extended mappings.
We then compute all of the possible extended mappings whose domain is contained in $V \cup V^\prime$.
For every such mapping if it is not in $M$ then 
we insert it to the set $\bar{M}$.

We now define the VA $B$ as the disjoint
union of all of the paths that correspond with the assignments in $\bar{M}$ along with an initial state $q_0$ and an
accepting state $q$.  
For every path $P$ that corresponds with the assignment $\mu$ with $X\df \dom(\mu)$ we extend $B$ as follows: 
\begin{itemize}
	\item 
	we add a path that connects $q_0$ with the first state of $P$ that opens and then immediately closes all of the variables in $X^\prime$;
	\item
	we add a path that connects the last state of $P$ to the accepting state $q$, that opens and then immediately closes all of the variables in $Y^\prime$ where $Y = V \setminus X$.
\end{itemize}
Note that $B$ is a  sequential VA. 

\paragraph{Correctness:}
The following lemma that describes the connection between $A_1, A_2$, $B$ and $\doc$.
\begin{lemma}~\label{lem:diffadhocA2Bd}
	A mapping
	$\mu_1 \in \repspnr{A_1}(\doc)$ has a compatible mapping $ \repspnr{A_2}(\doc)$ 
	if and only if the marked extension $\mu$ of $\mu_1$
	does not have a compatible mapping in $ \repspnr{B}(\doc)$.
\end{lemma}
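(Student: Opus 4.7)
My plan is to establish both directions of the biconditional by transferring compatibility from the world of marked extensions to the world of the original mappings, exploiting the domain bookkeeping enforced by the dummy variables in $V^\prime$. The starting observation, refining Lemma~\ref{lem:diffadhocextens}, is that two marked extensions are compatible on their $V^\prime$-parts if and only if the underlying original mappings share the same $V$-domain. Consequently, if $\nu \in \repspnr{B}(\doc)$ is compatible with the marked extension $\mu$ of $\mu_1$, then $\nu$ must arise from some $\nu_0$ with $V$-domain equal to $X_1 \df \dom(\mu_1) \cap V$ and with $\nu_0$ agreeing with $\mu_1$ on $X_1$; that is, $\nu_0 = \mu_1 \restrict X_1$.

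For the ($\Leftarrow$) direction I would argue by contrapositive. Assuming $\mu_1$ has no compatible mapping in $\repspnr{A_2}(\doc)$, I would exhibit the marked extension $\nu$ of $\mu_1 \restrict X_1$ as a witness in $\repspnr{B}(\doc)$ that is compatible with $\mu$. Compatibility with $\mu$ is immediate, since the $V^\prime$-parts of $\mu$ and $\nu$ both encode $X_1$, and on $X_1$ both mappings coincide with $\mu_1$. That $\nu$ lies in $\bar{M}$---and hence, by the construction of $B$, in $\repspnr{B}(\doc)$---follows because membership of $\nu$ in $M$ would imply $\mu_1 \restrict X_1 \in \repspnr{A_2}(\doc)$, contradicting the hypothesis.

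For the ($\Rightarrow$) direction I would proceed by contradiction. Suppose some $\nu \in \repspnr{B}(\doc)$ is compatible with $\mu$; the starting observation forces $\nu$ to be the marked extension of $\nu_0 = \mu_1 \restrict X_1$, while membership in $\bar{M}$ forces $\nu_0 \notin \repspnr{A_2}(\doc)$. Starting from the hypothesized compatible $\mu_2 \in \repspnr{A_2}(\doc)$---combined with the preprocessing step $\vars(A_2) \subseteq V$---I would produce an element of $\repspnr{A_2}(\doc)$ that exhibits $\nu_0 \in M$, delivering the contradiction.

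The main obstacle is that the compatible $\mu_2$ may have $\dom(\mu_2) \subsetneq X_1$, so $\mu_2$ is not literally $\nu_0 = \mu_1 \restrict X_1$. Bridging this gap requires reading the construction of $B$ carefully: $\bar{M}$ should consist of those extended mappings whose $V$-parts are incompatible with every mapping of $\repspnr{A_2}(\doc)$; equivalently, $M$ captures all extended mappings whose $V$-parts are compatible with some element of $\repspnr{A_2}(\doc)$. Under this reading, the presence of $\mu_2 \in \repspnr{A_2}(\doc)$ with $\dom(\mu_2) \subseteq X_1$ agreeing with $\mu_1$ on $\dom(\mu_2)$ places $\nu_0 = \mu_1 \restrict X_1$ into $M$, which yields the contradiction with $\nu \in \bar{M}$ and completes the proof.
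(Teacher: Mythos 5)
Your proof is correct and takes essentially the same route as the paper's: both directions hinge on the observation that the dummy variables force any element of $\repspnr{B}(\doc)$ compatible with the marked extension $\mu$ of $\mu_1$ to be the marked extension of $\mu_1\restrict X_1$, after which everything reduces to the definition of $\bar M$. The substantive point is the one you raise at the end, and you resolve it the right way: the lemma holds only if $\bar M$ collects the marked extensions of those $\nu_0$ with $\dom(\nu_0)\subseteq V$ that are incompatible, \emph{as mappings over $V$}, with every element of $\repspnr{A_2}(\doc)$. Under either of the paper's literal formulations --- ``not in $M$'' in the appendix, or ``incompatible with all of the \emph{extended} mappings of $\repspnr{A_2}(\doc)$'' in the proof sketch --- the lemma fails: if $\repspnr{A_2}(\doc)$ contains a $\mu_2$ with $\dom(\mu_2)\subsetneq X_1$ that agrees with $\mu_1$, then the marked extension of $\mu_1\restrict X_1$ is not in $M$ and disagrees with every element of $M$ on the dummy variables, so it enters $\bar M$ and falsely certifies $\mu_1\in\repspnr{A_1\setminus A_2}(\doc)$. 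The paper's one-line argument (``$\mu_2'\restrict V$ is compatible with $\mu_2$, which is impossible due to $B$'s definition'') tacitly presupposes exactly your corrected reading, so your more explicit treatment is the honest version of the same proof. Two minor polish items: in the forward direction you need not assume $\dom(\mu_2)\subseteq X_1$, since compatibility of $\mu_2$ with $\mu_1$ already gives compatibility with $\mu_1\restrict X_1$ (their common variables lie in $\dom(\mu_2)\cap X_1$); and once the corrected definition of $\bar M$ is adopted, the membership claim in your backward direction should be argued against it as well --- which is immediate, because the hypothesis that $\mu_1$ has no compatible mapping in $\repspnr{A_2}(\doc)$ transfers verbatim to $\mu_1\restrict X_1$.
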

\begin{proof}
	Assume	$\mu_1 \in \repspnr{A_1}(\doc)$ has a compatible mapping $\mu_2 \in \repspnr{A_2}(\doc)$ 
	and assume by contradiction that $\mu$ has a compatible mapping $\mu^\prime_2 \in  \repspnr{B}(\doc)$.
	Then by Lemma~\ref{lem:diffadhocextens} the restriction $\mu^\prime_2\restrict V$ is compatible with $\mu_2$ which is impossible due to $B$'s definition.  	
	
	Assume that	$\mu_1 \in \repspnr{A_1}(\doc)$ does not have a compatible mapping $\mu_2 \in \repspnr{A_2}(\doc)$. By Lemma~\ref{lem:diffseqadhocA}, the extension $\mu$ of $\mu_1$ is in  $\repspnr{A}(\doc)$. Lemma~\ref{lem:diffadhocextens} and $B$'s definition implies that there is a mapping in $\repspnr{B}(\doc)$ that is compatible to $\mu$.
\end{proof}

We can conclude that $\repspnr{\pi_V (A \join
	B)}(\doc) = \repspnr{A_1 \setminus A_2}(\doc)$

Note that $\vars(A) = \vars(B) = V\cup V^\prime$ and since $|V|$ is fixed we this is also the case for $\vars(A)\cap \vars(B) = \vars(A)$. Therefore, we can use Lemma~\ref{lem:fptjoin} to conclude the desired.

\paragraph{Complexity:}
Note that throughout our construction we performed only polynomial time steps:
In case the document is empty, checking the emptiness of $\repspnr{A_2}(\doc)$ can be done in polynomial time. 
Transforming $A_1$ into a semi-functional VA for $V$ requires polynomial time assuming $|V|$ is fixed.
Constructing the extension $A$ of $A_1$ requires polynomial time since we fix $|V|$.
Computing the set of extended mappings that are not compatible with any of the extended mappings that correspond with $\repspnr{A_2}(\doc)$  also requires polynomial time since $|V|$ is fixed.

\subsection{Proof of Theorem~\ref{thm:diffwonehard}}
\begin{reptheorem}{\ref{thm:diffwonehard}}
	\thmdiffwonehard
\end{reptheorem}

Let $\varphi=C_0\wedge\cdots\wedge C_m$ be a 3CNF fomula with variables $x_1,\ldots,x_n$ and
denote by $l_{i,1},l_{i,2},l_{i,3}$ the literals of the $i$-th clause for $0\leq i\leq m$. 
To show the reduction, we use the same idea of the reduction in the proof of Theorem~\ref{thm:joinseqhard}.

Set $\doc=s_1\ldots s_n$, where every $s_i$ is unique (this can be achieved by two distinct elements in
$\alphabet$ such that each $|s_i|\in O(\log(n))$) and let $S=\{s_1,\ldots,s_n\}$.
Further set $\alpha_S=\bigvee_{\sigma\in S}\sigma$. We define the regex formula $\alpha_1$, using only
the $k$ variables $V=\{y_1,\ldots,y_k\}$ as
\begin{equation*}
\alpha_1=\alpha_S^*\bind{y_1}{\alpha_S}\alpha_S^*\bind{y_2}{\alpha_S}\ldots\bind{y_k}{\alpha_S}\alpha_S^*.
\end{equation*}
We next define for every $0\leq i\leq m$
the regex $\alpha_{C_i}$, such that $\repspnr{\alpha_{C_i}}(\doc)$ corresponds
to all possible assignments of $\varphi$ of weight $k$, such that $C_i$ is not satisfied. Let $S_i\subseteq S$
such that such that $s_j\in S_i$ iff $x_j$ is contained in clause $C_i$. Further let $S_i^-$ denote the variables
in $S_i$ which appear negated in $C_i$, i.e. $s_j\in S_i^-$ iff $x_j$ is a negated variable in $C_i$, and
define $S_i^+=S_i\setminus S_i^-$. Let $\mathit{ind}(S_i^-):=\{1\leq j\leq n\mid s_i\in S_i^-\}$
and similarly $\mathit{ind}(S_i^+):=\{1\leq j\leq n\mid s_i\in S_i^+\}$.
Further define the regular expression
$\alpha_{S\setminus S_i^+}:=\bigvee_{\sigma\in S\setminus S_i^+}\sigma$.
In order to define $\alpha_{C_i}$, we need to consider different cases on the number of positive
and negative literals in $C_i$.
\begin{itemize}
	\item  First, assume that
	$|S_i^+|=3$, i.e. $C_i$ is a clause containing three positive variables. Then we set 
	\begin{equation*}
	\alpha_{C_i}=\alpha_{S}^*\bind{y_1}{\alpha_{S\setminus S_i^+}}\alpha_{S}^*
	\ldots\bind{y_k}{\alpha_{S\setminus S_i^+}}\alpha_{S}^*.
	\end{equation*}
	
	\item Next assume that $|S_i^+|=2$ and thus there is some $1\leq j\leq n$ such that $\{j\}=\mathsf{ind}(S_i^-)$.
	For every $1\leq u\leq k$, we define
	\begin{align*}
	\alpha_{C_i}^u=&\alpha_{S}^*\bind{y_1}{\alpha_{S\setminus S_i^+}}\alpha_{S}^*
	\ldots \alpha_{S}^*\bind{y_{u-1}}{\alpha_{S\setminus S_i^+}}\\
	&\cdot\alpha_{S}^*\bind{y_u}{s_j}\alpha_{S}^*
	\bind{y_{u+1}}{\alpha_{S\setminus S_i^+}}\\
	&\cdot\alpha_{S}^*\ldots
	\alpha_{S}^*\bind{y_k}{\alpha_{S\setminus S_i^+}}\alpha_{S}^*.
	\end{align*}
	Note that the regex $\alpha_{C_i}^u$ is built similar to $\alpha_{C_i}$ in the first case,
	except for the part $\bind{y_u}{s_j}$ instead of $\bind{y_u}{S\setminus S_i^+}$.
	We set $\alpha_{C_i}=\bigvee_{1\leq u\leq k} \alpha_{C_i}^u$.
	
	\item  The case where $|S_i^+|=1$
	is defined similar to the case where $|S_i^+|=2$. Assume that there are $1\leq j_1< j_2\leq n$
	such that $\{j_1,j_2\}=\mathsf{ind}(S_i^-)$. Then for every $1\leq u_1<u_2\leq k$ we define
	\begin{align*}
	\alpha_{C_i}^{u_1,u_2}=
	&\alpha_{S}^*\bind{y_1}{\alpha_{S\setminus S_i^+}}\alpha_{S}^*
	\ldots \alpha_{S}^*\bind{y_{u_1-1}}{\alpha_{S\setminus S_i^+}}\\
	&\cdot\alpha_{S}^*\bind{y_{u_1}}{s_{j_1}}\alpha_{S}^*
	\ldots\alpha_{S}^*\bind{y_{u_2-1}}{\alpha_{S\setminus S_i^+}}\\
	&\cdot\alpha_{S}^*\bind{y_{u_2}}{s_{j_2}}\alpha_{S}^*
	\bind{y_{u_2+1}}{\alpha_{S\setminus S_i^+}}\\
	&\cdot\alpha_{S}^*\ldots\alpha_{S}^*\bind{y_k}{\alpha_{S\setminus S_i^+}}\alpha_{S}^*.
	\end{align*}
	and set $\alpha_{C_i}=\bigvee_{1\leq u_1<u_2\leq k} \alpha_{C_i}^u$.
	
	\item 
	If $|S_i^+|=0$, then 
	$\alpha_{C_i}$ is defined analogously to the case $|S_i^+|=1$ but with three indices
	$u_1,u_2,u_3$ instead of $u_1,u_2$. 
\end{itemize}

To define $\alpha_2$, we set $\alpha_2=\bigvee_{1\leq i\leq m}\alpha_{C_i}$. Note that 
$|\alpha_2|\leq (m + 1)\cdot n^3\cdot (m + 1)^3$. It remains to show that $\varphi$ is satisfyable (with weight $k$) iff
$\repspnr{\alpha_1-\alpha_2}(\doc)\neq\emptyset$. 

So let $\tau$ be a satisfying truth assignment of weight $k$.
We claim that there exists some
$\mu\in\repspnr{\alpha_1-\alpha_2}(\doc)$, with $\mu(y_i)=\spn{j,j+1}$ for every
$1\leq i\leq k$ and $1\leq j\leq n$ such that $\tau(x_j)=\true$, and $x_j$ is the $i$-th  such variable (i.e.
among the variables $x_1,\ldots,x_{j-1}$, $i-1$ are set to true via $\tau$).
First consider the set
$\repspnr{\alpha_1}(\doc)$. It is easy to see that $\mu'\in\repspnr{\alpha_1}(\doc)$ iff
there are $1\leq j_1<\ldots<j_k\leq n$ with
$\mu'=\{y_1\mapsto \spn{j_1,j_1+1}, \ldots, y_k\mapsto \spn{j_k,j_k+1}\}$,
hence $\mu\in \repspnr{\alpha_1}(\doc)$. To see 
that $\mu\not\in \repspnr{\alpha_2}(\doc)$, 
we need to consider the set $\repspnr{\alpha_2}(\doc)$.
By definition of $\alpha_2$, 
we have that $\mu'\in\repspnr{\alpha_2}(\doc)$ iff 
there is some $1\leq i\leq m$ 
such that $\mu'\in\repspnr{\alpha_{C_i}}(\doc)$.
The containment
$\mu'\in\repspnr{\alpha_{C_i}}(\doc)$ can be characterized as follows:
There exist $j_1,\ldots,j_k\in\{1,\ldots,n\}\setminus\mathsf{ind}(S_i^+)$
such that $1\leq j_1<\ldots<j_k\leq n$,
$\mathsf{ind}(S_i^-)\subset\{j_1,\ldots,j_k\}$  and
$\mu'=\{y_1\mapsto \spn{j_1,j_1+1},\ldots, y_k\mapsto \spn{j_k,j_k+1}\}$.

For some $1\leq i\leq n$, first assume that $C_i$ is a clause only containing positive variables.
Since $\tau$ is a satisfying assignment, there is some $i'$ such that 
and $\tau(x_{i'})=1$. Hence $\mu(y_{i'})=\spn{j,j+1}$ for some $1\leq j\leq n$, and since
$i'\in\mathsf{S_i^+}$ we have that $\mu\not\in \repspnr{\alpha_2}(\doc)$.

Fix some $1\leq i\leq n$. Since $\tau$ is a satisfying assignment, $\tau(x_{i'})=1$
for some $i'\in\mathsf{ind}(S_i^+)$ or $\tau(x_{i'})=0$ for some $i'\in\mathsf{ind}(S_i^-)$.
In the first case, $j_1,\ldots,j_k\in\{1,\ldots,n\}\setminus\mathsf{ind}(S_i^+)$ does not hold
, and in the second case $\mathsf{ind}(S_i^-)\subset\{j_1,\ldots,j_k\}$ is violated.
Thus $\mu\not\in \repspnr{\alpha_{C_i}}(\doc)$ 
for every $1\leq i\leq n$, and hence
$\mu\not\in \repspnr{\alpha_2}(\doc)$.

For the other direction, let $\mu\in\repspnr{\alpha_1-\alpha_2}(\doc)$. Using the same arguments as
above, we can show that the truth assignment $\tau$ is a satisfying truth assignment of $\varphi$ with
weight $k$, where $\tau(x_i)=1$ iff $y_i=\spn{j,j+1}$ for some $1\leq j\leq n$.

\subsection{Proof of Lemma~\ref{lem:syncregextosyncautomaton}}
\begin{replemma}{\ref{lem:syncregextosyncautomaton}}
	\lemsyncregextosyncautomaton
\end{replemma}
\begin{proof}
	Like in Lemma~3.4 in~\cite{DBLP:conf/pods/FreydenbergerKP18}, we obtain~$A$ using the Thompson construction 
	(cf.~e.g.,~\cite{hop:int}) for converting a regular expression into an $\epsilon$-NFA, where we treat variable operations like symbols. More specifically, each occurrence of a variable binding $\bind{x}{\gamma}$ is interpreted like $\vop{x}\cdot \gamma\cdot\vcl{x}$. 
	
	As $\gamma$ is sequential, $A$ is also sequential. A feature of the Thompson construction is that for every occurrence of a symbol the regular expression is converted into an initial state and a final state, where the former has a transition to the latter that is labeled with that symbol. All new transitions that enter to this sub-automaton pass through the initial state; the final state can only be reached through the original transition. This creates a new target state for every variable operation.
	
	The other condition for VAs that are synchronizing for $X$ follows directly from the fact that $\gamma$ has no disjunctions over $x$. Either $\gamma$ never  uses $x$ (e.\,g., if $\gamma=\bind{x}{\emptyset}$, if $\gamma=\bind{x}{a}\cdot\emptyset$, or due to other uses of $\emptyset$); then the accepting runs of $A$ also never operate on $x$. Or $\gamma$ always uses $x$; then the same also holds for all accepting runs of $A$.
\end{proof}

\subsection{Proof of Proposition~\ref{prop:syncnotexpressive}}
\begin{repproposition}{\ref{prop:syncnotexpressive}}
	\propsyncnotexpressive
\end{repproposition}
\begin{proof}
	Note  that $\repspnrpar{\gamma}(\mathsf{a}\mathsf{a} ) \ne \emptyset$ and that also $\repspnrpar{\gamma}(\mathsf{b}\mathsf{b} ) \ne \emptyset$; however, $\repspnrpar{\gamma}(\mathsf{a}\mathsf{b} ) = \emptyset$.
	
	Note also that $\repspnrpar{\gamma}(\mathsf{a}\mathsf{a} ) =\{ \mu_1 \}$ where $\mu_1$ is the mapping that maps $x$ to $\mspan{2}{3}$ and that  $\repspnrpar{\gamma}(\mathsf{b}\mathsf{b} ) =\{ \mu_2 \}$ where $\mu_2$ is the mapping that maps $x$ to $\mspan{2}{3}$.
	
	We can conclude that there exists states $q_1,q_2,q_3 \in Q$ and $q_4 \in F$ such that that the following is a valid an accepting run of $A$ (on $\mathsf{aa}$):
	$$  \rho_1 \df (q_0,1) 
	\overset{\mathsf{a}}{\rightarrow}  
	(q_1,2)
	\overset{\vop{x}}{\rightarrow} 
	(q_2,2)
	\overset{\vcl{x}}{\rightarrow}
	(q_3,2)  
	\overset{\mathsf{a}}{\rightarrow}
	(q_4,3)
	$$
	Similarly, there exists states  $p_1,p_2,p_3 \in Q$ and $p_4 \in F$ such that that the following is a valid an accepting run of $A$ (on $\mathsf{bb}$):
	$$ \rho_2 \df (q_0,1) 
	\overset{\mathsf{b}}{\rightarrow}  
	(p_1,2)
	\overset{\vop{x}}{\rightarrow} 
	(p_2,2)
	\overset{\vcl{x}}{\rightarrow}
	(p_3,2)  
	\overset{\mathsf{b}}{\rightarrow}
	(p_4,3)
	$$
	Let us assume by contradiction that $A$ is synchronized for the variable $x$.
	Thus, we obtain that $p_2 = q_2$.
	So we can take the first run of $\rho_1$ and the second of $\rho_2$ and glue them together to obtain a new run:
	$$  \rho_{1,2} 
	\df 
	(q_0,1) 
	\overset{\mathsf{a}}{\rightarrow}  
	(q_1,2)
	\overset{\vop{x}}{\rightarrow} 
	(q_2,2)
	\overset{\vcl{x}}{\rightarrow}
	(p_3,2)  
	\overset{\mathsf{b}}{\rightarrow}
	(p_4,3)
	$$
	This run is valid and accepting and therefore we can conclude that $\repspnrpar{A}(\mathsf{ab})$ is not empty (as it contains a mapping that maps $x$ to the span $\mspan{2}{3}$).
	However, we already noted that $\repspnrpar{\gamma}(\mathsf{a}\mathsf{b} ) = \emptyset$ and we assumed that $A$ and $\gamma$ are equivalent, which lead us to the desired contradiction. 
\end{proof}

\subsection{Proof of Theorem~\ref{thm:thmdiffseqsync}}
\begin{reptheorem}{\ref{thm:thmdiffseqsync}}
	\thmdiffseqsync
\end{reptheorem}
\newcommand{\trap}{\mathsf{trap}}
\newcommand{\veclos}{\mathcal{VE}}
For two sequential VAs $A_1$ and $A_2$, let $X\df\vars(A_1) \cap \vars(A_2)$, and assume  that $A_1$ is semi-functional for $X$ and that  $A_2$ is synchronized for $X$. Before we consider the document, we fix some assumptions on $A_2$ that are possible without loss of generality.

\subsubsection*{Assumptions on $A_2$:}
First of all, we assume that $\vars(A_2)=X$. Variables in $\vars(A_2)\setminus\vars(A_1)$ play no rule for the difference $A_1 \setminus A_2$, so this assumption does not change the correctness of the construction.  We can guarantee this by replacing all transitions with operations for variables from $\vars(A_2)\setminus \vars(A_1)$ with $\epsilon$-transitions.  

As pointed out in Lemma~3.8 of~\cite{DBLP:conf/pods/FreydenbergerKP18}, this change turns $A_2$ into an automaton for $\pi_X(A_2)$. Although~\cite{DBLP:conf/pods/FreydenbergerKP18} only remarks this for functional automata, the same argument translates to sequential automata; and the translation can be performed in time that is linear in the size of the transition relation of $A_2$. 

As $A_2$ is synchronized for $X$, for every $x\in X$, either all accepting runs of $A_2$ operate on $x$, or no accepting run of $A$ operates on $x$. We assume without loss of generality that the former is the case (this can be achieved by trimming $A_2$ by removing all states that are not reachable from the initial state, and from which no final state can be reached).
As every variable operation from $\xalphabet_X$ has a unique target state, we know that $A_2$ is semi-functional for $X$. But as we assume that $\vars(A_2)=X$, and we have established that every accepting run of $A_2$ acts on all variables of $X$, this implies that $A_2$ is in fact functional.

\subsubsection*{Match structures in general:}
We are now ready for the main part of the proof, which uses constructions from the proof of Theorem~3.3 in~\cite{DBLP:conf/pods/FreydenbergerKP18}, which describes a polynomial delay algorithm for $\repspnr{A}(\doc)$ for functional VAs $A$ and documents $\doc$. We shall sketch all details that are relevant to the present proof; more information can be found in Section~4 of that paper, and also its preprint at~\url{http://arxiv.org/abs/1703.10350}.

Given a document $\doc= \sigma_1\cdots \sigma_\ell$ with $\ell\geq 1$, the key idea of the construction is to represent each element of $\repspnr{A}(\doc)$ as a sequence $c_0,\dots,c_{\ell+1}$ of variable configurations of $A$. Each $c_i$ is the last variable configuration before $\sigma_i$ is processed; and $c_{\ell+1}$ is the configuration where all variables have been closed.

To obtain these configurations, we construct the \emph{match graph} of $A$ on $\doc$, a directed acyclic graph $G(A,\doc)$ that has one designated source node, and nodes of the form $(i,q)$, where $0\leq i \leq \ell$ and $q$ is a state of $A$. Intuitively, the node $(i,q)$ represents that after processing $\sigma_1\dots\sigma_i$ (the first $i$ letters of $\doc$), $A$ can be in state $q$. 

Consequently, an edge from $(i,p)$ to $(i+1,q)$ represents that if $A$ is in state $p$ and reads the symbol $\sigma_{i+1}$, it can enter state $q$ (not necessarily directly; it may process arbitrarily many variable operations or $\epsilon$-transitions after reading $\sigma_{i+1})$. The match graph $G(A,\doc)$ can be constructed from $A$ and $\doc$ directly from the transition relation of $A$ and by using standard reachability algorithms. These reachability algorithms can also be used to trim the match graph: We remove all nodes that cannot be reached from the source, and all nodes that cannot reach a node $(\ell+1,q)$ where $q$ is a final state of $A$. As $A$ is functional, each state $q$ has a well-defined variable configuration $\varconf_q$, which means that we can also associate each node $(i,q)$ with that variable configuration.

We can now interpret the match graph $G$ as an NFA $M(A,\doc)$ over the alphabet of variable configurations in the following way: All nodes of $G(A,\doc)$ become states of $M(A,\doc)$. The source of $G(A,\doc)$ becomes the initial state, and the final state of $M(A,\doc)$ are those nodes $(\ell+1,q)$ where $q$ is a final state of $A$. Finally, each edge from a node $v$ to a node $(i,q)$ in $G(A,\doc)$ becomes a transition with the letter $\varconf_q$. We call the automaton $M(A,\doc)$ the \emph{match structure} of $A$ on $\doc$.

As shown in~\cite{DBLP:conf/pods/FreydenbergerKP18}, there is a one-to-one correspondence between the elements of $\repspnr{A}(\doc)$ and the words in the language of $A_G$.

\subsubsection*{Preliminaries to determinizing $M(A_2,\doc)$:}
In the present proof, we start with the same construction, and first construct the match structure $M(A_2,\doc)$ of $A_2$ on $\doc$, where $\doc= \sigma_1\cdots \sigma_\ell$ with $\ell\geq 1$ is our specific input document.

We have already established that we can assume that $A_2$ is functional; hence, we can directly use the construction that we discussed previously.
Our next goal is to determinize $M(A_2,\doc)$, and use this to implicitly construct a VA for the complement of $\repspnr{A_2}(\doc)$. This can then be combined with $A_1$ using a minor variation of the join-construction from Lemma~\ref{lem:joinseqsemi}.

In general, determinizing a match structure is not a viable approach (this was already observed in~\cite{DBLP:conf/pods/FreydenbergerKP18}). But in our specific case, we can use that $A_2$ is synchronizing for $X$ (which we assume to be identical with $\vars(A_2)$. As every variable operation from $\xalphabet_X$ has a unique target state (and as $A_2$ is functional), we know that every accepting run of $A_2$ executes the operations in the same order. Hence, we can define $\omega_1,\ldots,\omega_{2k}\in \xalphabet_X$ with $k\df |X|$ according to this order (i.\,e., the $i$-th operation in every accepting run is $\omega_i$). Accordingly, we define a sequence $c_0,c_1\ldots,c_{2k}$ of variable configurations, where $c_0$ is the configuration that assigns $\vc{w}$ to all variables of $x$, and $c_{i+1}$ is the configuration that is obtained from applying operation $\omega_i$ to configuration $c_i$. Note that, as $A_2$ is functional, we use variable configurations (that use $\vc{w}$) instead of extended variable configurations (that use $\vc{u}$). Later on, this will help us distinguish between the configurations of the match structure of $A_2$ and the extended configurations of $A_1$.

As the order of the possible variable operations of $A_2$ is fixed, the language of the match structure of $A_2$ on any document (not just our specific document $\doc$) is a subset of $c_0^* \cdot c_1^* \cdots c_{2k}^*$. Note that this does not mean that every configuration would appear; if two variable operations are always performed without consuming a symbol between them, the intermediate variable configuration would never appear. Consider the following functional VA that is synchronized for all it variables:
\begin{center}
	\begin{tikzpicture}[on grid, node distance =1.45cm,every loop/.style={shorten >=0pt}]
	\node[state,initial text=,initial by arrow] (q0) {$q_0$};
	\node[state,right= of q0] (q1) {$q_1$};
	\node[state,right=of q1] (q2) {$q_2$};
	\node[state,right= of q2] (q3) {$q_3$};
	\node[state,right= of q3,accepting] (q4) {$q_4$};
	\path[->]
	(q0) edge node[above] {$\vop{x}$} (q1)
	(q2) edge[loop above] node {$\Sigma$} (q2)
	(q3) edge node[above] {$\vcl{x}$} (q4)
	(q1) edge node[above] {$\vop{y}$} (q2)
	(q2) edge node[above] {$\vcl{y}$} (q3)
	;
	\end{tikzpicture}
\end{center} 
Then on every document $\doc$, its match structure accepts only a single sequence of variable configurations, namely $\varconf_{q_0}\cdot \varconf_{q_2}^{|\doc|}\cdot \varconf_{q_3}$. If we replace the state $q_2$ with an arbitrarily complicated NFA, the same observation would hold for all documents that belong to the language of the NFA (all others would be rejected).

The fact that $M(A_2,\doc)$ expresses a concatenation of unary languages is not enough to allow for determinization; for this, we need to use the unique target states. Assume that $M(A_2,\doc)$ reads a new variable configuration. More formally and more specifically, assume that $G(A_2,\doc)$ contains edges 
\begin{itemize}
	\item from some $(i,p)$ to some $(i+1,q)$ with $\varconf_p\neq \varconf_q$, and
	\item from some $(i,p')$ to some $(i'+1,q')$ with $\varconf_{p'}\neq \varconf_{q'}$, 
\end{itemize}
such that $\varconf_q=\varconf_{q'}$. Choose $j$ such that $c_j=\varconf_q$. Then the transition from $p$ to $q$ and the transition from $p'$ to $q'$ both execute $\omega_j$, which means that they pass through its unique target state. Both can then execute different sequences $\emptyword$-transitions, which means that $q\neq q'$ may hold. But we can switch these sequences of transitions, and observe that $G(A_2,\doc)$ must also contain an edge  from  $(i,p)$ to  $(i+1,q')$, and an edge from  $(i,p')$ to  $(i'+1,q)$.

Thus, for every variable configuration $j$, we can define a set $I_j$ that contains exactly those $q$ such that $M(A_2,\doc)$ contains an edge from some node $(i,p)$ to some node $(i+1,q)$ with $\varconf_p\neq \varconf_q$. In other words, $I_j$ contains those states that of $A_2$ that are encoded in states that $M(A_2,\doc)$ can enter when first reading variable configuration $c_j$. And as we established in the previous paragraph, $A_2$ being synchronized for all its variables ensures that each $I_j$ is well-defined. We also use $Q_q$ to refer to that $I_j$ for which $c_j=\varconf_q$ holds.

\subsubsection*{Determinizing $M(A_2,\doc)$:}
We use this to these insights to turn the NFA $M(A_2,\doc)$ into a DFA $D_2$ over the alphabet $c_0,\dots,c_{2k}$. Apart from the special initial state, each state of $D_2$ is a triple $(i,s,Q)$, where 
\begin{itemize}
	\item $i$ has the same role as in $G(A_2,\doc)$, meaning that it encodes how many positions of $\doc$ have been consumed,
	\item $s\leq i$ denotes when the current variable configuration was consumed the first time,
	\item $Q$ is a set of states of $A_2$, and all its elements have the same variable configuration.  
\end{itemize}
We construct $D_2$ by performing a variant of the power set construction that, in addition to taking the special structure of $M(A_2,\doc)$ into account, also includes a reachability analysis. 

The first set of states and transitions is computed as follows: For every variable configuration $c_j$ with $0\leq j\leq 2k$, we compute a set $Q_j$ that contains exactly those states $q$ such that $\varconf_q=c_j$ and  $M(A_2,\doc)$ contains a state $(0,q)$. If $Q_j\neq\emptyset$, we extend $D_2$ with the state $(0,0,Q_j)$  and a transition with label $c_j$ from the initial state to $(0,0,Q_j)$. In this case, $Q_j=I_j$ holds.

To compute the successor states of some state $(i,s,P)$, we proceed as follows: First, we consider successors with the same variable configuration. Let $c_j$ be the variable configuration of all states in $P$. We compute the set $Q_j$ of all states $q$ such that $M(A_2,\doc)$ contains a transition with label $c_j$ from some state $(i,p)$ to $(i+1,q)$. If $Q_j\neq\emptyset$, we extend $D_2$ with the state $(i+1,s,Q_j)$ and a transition with label $c_j$ from the $(i,s,P)$ to $(i+1,s,Q_j)$. 

To determine successors with different configurations, we consider all $j'$ with $j<j'\leq 2k$. For every such $c_j$, we check if $M(A_2,\doc)$ contains a transition from some $(i,p)$ with $p\in P$ to some $(i+1,q)$ with $\varconf_q=c_{j'}$. If this is the case, we extend $D_2$ with an edge from the state $(i,s,P)$ to the  state $(i+1,i+1,I_{j'})$ that is labeled with $c_{j'}$, and add the state $(i+1,i+1,I_{j'})$ to $D_2$ if it does not exist yet.

Intuitively, $D_2$ simulates $M(A_2,\doc)$ by disassembling it into sub-automata for the unary languages of each $c_j$. It needs to keep track of where in $\doc$ the current part started (using the middle component of each state triple) to avoid accepting words of the wrong length. Note that for each state $(i,s,Q)$ with $s=0$, we have $Q=I_j$ for some variable configuration $c_j$ with $0\leq j \leq 2k$. Hence, and by the definition of $D_2$, we observe that each component $Q$ is determined by the combination of $i$, $s$, and $c_j$. Therefore, we can bound the number of states in $D_2$ by $O(\ell^2 k)$, as $1\leq s\leq i \leq \ell+1$ and $0\leq j \leq 2k$ hold. Likewise, the number of transitions is bounded by $O(\ell^2 k^2)$. 

\subsubsection*{Combining $A_1$ and $D_2$:}
The actual construction of $A_{\doc}$ is a variant of the proof of Lemma~\ref{lem:joinseqsemi}: To fix identifiers,  we declare that  $A_1=(V_1,Q_1,q_{0,1},F_1,\delta_1)$ and $D_2=(Q_2,q_{0,2},F_2,\delta_2)$.  As $A_2$ is sequential, every state $q_i\in Q_1$ has an extended variable configuration $\varconfseq_q$; and as $A_1$ is semi-functional for $X$, we know that $\varconfseq_q(x)\in\{\vc{u},\vc{o},\vc{c}\}$ for all $x\in X$. Although $D_2$ is technically a DFA over the alphabet of variable configurations $c_0,\ldots,c_{2k}$, its definition allows us to associate each state in $Q_2$ with a variable configuration $\varconf_q$ by considering the incoming transitions of $q$ (which is equivalent to considering the variable configuration of the states that are encoded in $q$), and by setting $\varconf_{q_{0,2}}(x)=\vc{w}$ for all $x\in X$.

Recall that our goal is to construct a sequential VA $A$ with  $\repspnrpar{A_{\doc}}(\doc) = \repspnrpar{A_1 \setminus A_2} (\doc)$. 
In principle, $A_{\doc}$ simulates $A_1$ and $A_2$ in parallel on the input document $\doc$. But instead of using $A_2$, we use its deterministic representation $D_2$. 
If  $A_1$ picks a next state $q_1$, its opponent $D_2$ tries to counter that by picking an a state that has consistent behavior on the variables of $X$ (as $D_2$ has variable configurations as input, it hides the states of $A_2$ behind a layer of abstraction).
If $D_2$ can pick such a state of its states that is consistent, it can follow $A_1$ for the current input. If all available states are inconsistent, it cannot follow $A_2$ and has to enter the trap state. This means that the finishing the current run of $A_1$ will lead to an element of $ \repspnrpar{A_1 \setminus A_2} (\doc)$.

One technical problem is that $A_1$ may choose to leave variables from $X$ undefined; and by definition,  $\repspnrpar{A_1 \setminus A_2} (\doc)$ only takes the common variables into account. Thus, if $D_2$ is forced at some point to open a variable that was not yet processed, it cannot decide whether $A_1$ will open that variable later in the run; and using the finite control to keep track of the arising combinations would lead to an exponential blowup.

But we can work around this problem: As $A_1$ is semi-functional for $X$, each final state contains information which variables were skipped in every run that ends in that state.
For each $q\in F_1$ and each variable $x\in X$, we say that $q$ \emph{skips} $x$ if $\varconfseq_q(x)=\vc{u}$. Let $S(q)$ denote the set of skipped variables in $q$. The idea is to decompose $A_1$ into sub-automata that all skip the same variables. There are at most $|F_1|$ different sets $S(q)$ with $q\in F_1$. We call these $S_1,\ldots,S_f$ with $f\leq |F_1|$. 

For each $S_j$, we create a sub-automaton $A_{i,j}=(V_1,Q_{1,j},q_{0,1},F_{1,j},\delta_{i,j})$ that accepts exactly those runs of $A_1$ that end in a state $q$ with $S(q)=S_j$. We first mark $q_{0,1}$ and all $q\in F_1$ with $S(q)=S_f$. Then we use a standard reachability algorithm to mark all states and transitions that lead from $q_{0,1}$ to some marked finite state. Finally, we obtain $A_{i,j}$ by removing all unmarked states and transitions. The resulting automaton is sequential and semi-functional for $X$, and all accepting runs skip the same variables.
Furthermore, we observe that $\repspnrpar{A_1}=\bigcup_{j=1}^f\repspnrpar{A_{i,j}}$. 

The last major step is creating automata  $A_{\doc,j}$ such that $\repspnrpar{A_{\doc,j}}(\doc) = \repspnrpar{A_{1,j} \setminus A_2} (\doc)$. Recall that each $A_{1,j}$ skips exactly the variables from $S_j$. 

For this, we define the notions of consistent and inconsistent state pairs.
For each $q_1\in Q_{i,j}$, each $q_2\in Q_2$, and each $x\in X$, we say $q_1$ and $q_2$ are \emph{consistent} for $x$ if one of the following conditions holds:
\begin{itemize}
	\item $x\in S_j$, 
	\item $\varconfseq_{q_1}(x)=\vc{u}$ and  $\varconf_{q_2}(x)=\vc{w}$, 
	\item $\varconfseq_{q_1}(x)=\varconf_{q_2}(x)\in \{\vc{o}, \vc{c}\}$.
\end{itemize}
Otherwise, $q_1$ and $q_2$ are \emph{inconsistent}.
Building on these definitions, we say that 
\begin{itemize}
	\item $q_1$ and $q_2$ are \emph{consistent} if they are consistent for all $x\in X$,
	\item $q_1$ and $q_2$ are \emph{inconsistent} if they are inconsistent for at least one $x\in X$.
\end{itemize}
The set of states of $A_{\doc,j}$ shall consist of states of the following type:
\begin{itemize}
	\item consistent pairs $(q_1,q_2)$, with $q_1\in Q_{1,j}$ and $q_2\in Q_2$,
	\item pairs from $Q_{1,j}\times \{\trap\}$,
	\item a number of unnamed helper states.
\end{itemize}
Here $\trap$ is a special trap state that we shall use to denote that something happened that made the parallel simulations of $A_{1,j}$ and $D_2$ inconsistent. 

We define the initial state of $A_{\doc,j}$ as $(q_{0,1},q_{0,2})$ and its set of final states as $F_1\times\{\trap\}$. This matches the intuition that $\trap$ denotes that the two automata have inconsistent behavior on the variable operations. Following the same intuition, we define that for every transition from some state $p$ to some state $q$ in $A_{1,j}$, we have a transition with the same label from $(p,\trap)$ to $(q,\trap)$ in $A_{\doc,j}$. 

To define the ``main behavior'' of $A_{\doc,j}$, we use the notion of a the \emph{variable-$\emptyword$-closure} as in Lemma~\ref{lem:joinseqsemi}: For every $p\in Q_{1,j}$, we define 
$\veclos(p)$ as the that of all $q\in Q_{1,j}$ that can be reached from 
$p$ by using only transitions from   $\{\emptyword\}\cup\xalphabet_{V_i}$. 

Let $\hat{Q}_2\subset Q_2$ be the set of all states to which $q_{0,2}$ has a transition. 
For each $q_1\in \veclos(q_{0,1})$, we distinguish the following cases:
\begin{itemize}
	\item If there is some $q_2\in \hat{Q}$ such that $q_1$ and $q_2$ are consistent, we add a state $(q_1,q_2)$ to~$A_{\doc}$. 
	\item If  all $q_2\in\hat{Q}$ are inconsistent with $q_1$, we add a state $(q_1,\trap)$ to~$A_{\doc}$.
\end{itemize} 
In both cases, we connect $q_{0,1} $ with  the new state using a sequence of helper states that has exactly the same variable operations and $\emptyword$-transitions as one that takes $Q_1$ from $q_{0,1}$ to $q_1$. We consider all states $(q_1,q_2)$ and  $(q_1,P)$ that were introduced in this step states on level 0. Intuitively, state of the form $(q_1,q_2)$ describe cases where $D_2$ can follow the behavior of $A_{1,j}$, states of the form 
$(q_1,\trap)$ describe cases where it cannot follow the behavior and has to give up. 

Now, we successively process the symbols $\sigma_i$ of $\doc=\sigma_1\cdots\sigma_{\ell}$. For each $i$ with $1\leq i\leq \ell$, we process the states on level $i-1$ and compute their successors as follows: 

For each state $(p_1,p_2)$ on level $i-1$, we define $\hat{Q}\subset Q_2$ as the set of all states to which $p_{2}$ has a transition. We then consider each $q_1\in\bigcup_{(p_1,\sigma_i,q')\in \delta_{1,j}}\veclos(q')$ and distinguish the same cases as for level $0$. Now, the new non-trap states are on level $i$; and in each case, $(p_1,p_2)$ is connected with the new state using a sequence of helper states that processes $\sigma_{i}$ and then exactly the same variable operations and $\emptyword$-transitions that take $A_1$ from $q'$ to $q_1$.

Now observe that every run of $A_{1,j}$ maps into a run of $A_{\doc,j}$ (and, likewise, every run of $A_1$ that skips exactly the variables of $S_j$ maps into a run of $A_{i,j}$). If this run can be matched by any run of $A_2$ (as realized by $D_2$), it ends in a state $(q_1,q_2)$ and is not an accepting run of $A_{\doc,j}$. But if it is not matched, then $D_2$ will be forced to admit the inconsistency, and redirect the simulation into the copy of $A_{1,j}$ that assigns the $\trap$ state to $D_2$. In other words, $\repspnrpar{A_{\doc,j}}(\doc) = \repspnrpar{A_{1,j} \setminus A_2} (\doc)$.

Finally, we obtain $A_{\doc}$ by taking all $A_{\doc,j}$ and adding a new initial state that has an $\emptyword$-transition to each initial state of some $A_{\doc,j}$. Then 
$$\repspnrpar{A_{\doc}}(\doc) = \bigcup_{j=1}^f\repspnrpar{A_{\doc,j}}(\doc) =  \bigcup_{j=1}^f\repspnrpar{A_{1,j}\setminus A_2}(\doc) =  \repspnrpar{A_{1}\setminus A_2}(\doc).$$
As $A_{1}$ is sequential, every $A_{1,j}$ is sequential. Therefore, every $A_{\doc,j}$ is sequential, and so is $A_{\doc}$.
\subsubsection*{Complexity:} 
Let $m_i$ and $n_i$ denote the number of transitions and states of $A_i$, let $\ell = |\doc|$, and $k=|\vars(A_1)\cap\vars(A_2)$. Let $v\df|\vars(A_1)$.
The match structure $M(A_2,\doc)$ can be constructed in $O(\ell n_2^2)$, see~\cite{DBLP:conf/pods/FreydenbergerKP18}.
Recall that $D_2$ has $O(\ell^2 k)$ states and $O(\ell^2 k^2)$ transitions. Each transition can be computed in $O(n_2)$, which means that we can obtain $D_2$ from $M(A_2,\doc)$ in $O(\ell^2 k^2 n_2)$. Hence, the total time of computing $D_2$ from $A_2$ and $\doc$ is $O(\ell^2 k^2 n_2 + \ell n_2^2)$.

To construct an automata $A_{\doc,j}$, we first pre-compute the comparisons of variable configurations in time $O(n_1 k^2)$ (as $D_2$ has $O(k)$ different variable configurations, and variable configurations can be compared in $O(k)$). We also pre-compute the variable-$\emptyword$-closures, which takes $O(m_1 n_1)$.

As $A_1$ has at most $|F_1|$ different skip sets $S_j$, we need to compute $O(n_1)$ sub-automata $A_j$. Each can be constructed with a standard reachability analysis in time $O(m_1+n_1)$. Hence, computing all automata $A_{1,j}$ takes time $O(m_1n_1+n_1^2)$. This will be subsumed by the complexity of the next steps.

For each $A_{\doc,j}$, we have to compute $\ell+1$ levels; in each level, we combine $O(n_1 k)$ state pairs $(p_1,p_2)$ with $O(n_1 k)$ state pairs $(q_1,q_2)$ (as $D_2$ is deterministic and over an alphabet of size $2k$, each state has $O(k)$ outgoing transitions), and we need to include $O(v)$ helper states. Hence, each $A_{\doc,j}$ can be constructed in time $O(\ell k^2 n_1^2 v)$ without pre-computations, and  $O(\ell k^2 n_1^2 v + m_1 n_1)$ including pre-computations. 

Hence, computing $A_{\doc}$ by constructing $O(n_1)$ many $A_{\doc_j}$, each in time $O(\ell k^2 n_1^2 v + m_1 n_1)$, takes a total time of $O(\ell k^2 n_1^3 v + m_1 n_1^2)$.

This combines to a total time of $O(\ell^2 k^2 n_2 + \ell n_2^2 + \ell k^2 n_1^3 v + m_1 n_1^2)$. For a less precise estimation, let $n\in O(n_1+n_2)$ and observe that $k\leq v$ and $m_1\in O(n^2)$. Then the complexity becomes $O(\ell^2 v^2 n + \ell n^2 + \ell v^3 n^3 + n^4)=O(\ell^2 v^2 n  + \ell v^3 n^3 + n^4)$.

\end{document}